\PassOptionsToPackage{dvipsnames,svgnames,x11names}{xcolor}
\documentclass[11pt,a4paper]{article}

\usepackage[T1]{fontenc}
\usepackage[utf8]{inputenc}
\usepackage[english]{babel}
\usepackage{iftex}

\usepackage{geometry}
\usepackage{lmodern}
\IfFileExists{microtype.sty}{
  \usepackage{microtype}
  \UseMicrotypeSet[protrusion]{basicmath}
}{}
\IfFileExists{upquote.sty}{\usepackage{upquote}}{}
\ifLuaTeX
  \usepackage{selnolig}
\fi

\usepackage{setspace}
\newcommand{\spacingset}[1]{\renewcommand{\baselinestretch}{#1}\small\normalsize}

\usepackage{amsmath,amssymb,amsthm,mathtools,bm}
\usepackage{dsfont}
\usepackage{bbm}
\usepackage{mathrsfs}

\newcommand{\indep}{\perp\!\!\!\perp}

\newcommand{\E}{\mathbb{E}}
\renewcommand{\P}{\mathbb{P}}
\newcommand{\V}{\mathbb{V}}
\newcommand{\R}{\mathbb{R}}

\newcommand{\N}{\mathbb{N}}

\def\1b{\mathbf{1}}
\newcommand{\lambdag}{\boldsymbol{\lambda}}

\newcommand{\cA}{\mathcal{A}}

\newcommand{\cE}{\mathcal{E}}
\newcommand{\cF}{\mathcal{F}}

\newcommand{\cH}{\mathcal{H}}

\newcommand{\cL}{\mathcal{L}}

\newcommand{\cN}{\mathcal{N}}
\newcommand{\cO}{\mathcal{O}}
\newcommand{\cP}{\mathcal{P}}
\newcommand{\cQ}{\mathcal{Q}}

\newcommand{\cS}{\mathcal{S}}

\newcommand{\bx}{\mathbf{x}}

\theoremstyle{definition}
\newtheorem{definition}{Definition}

\newtheorem{remark}{Remark}

\theoremstyle{plain}
\newtheorem{theorem}{Theorem}
\newtheorem{lemma}{Lemma}
\newtheorem{proposition}{Proposition}
\newtheorem{corollary}{Corollary}

\usepackage{graphicx}
\usepackage{float}
\usepackage{placeins}
\usepackage{needspace}
\usepackage{booktabs,array,longtable,multirow,dcolumn}
\usepackage{threeparttable}
\usepackage{xcolor}
\usepackage{colortbl}
\usepackage{calc}
\usepackage{etoolbox}
\usepackage{xpatch}

\BeforeBeginEnvironment{theorem}{\Needspace{5\baselineskip}}
\BeforeBeginEnvironment{lemma}{\Needspace{5\baselineskip}}
\BeforeBeginEnvironment{proposition}{\Needspace{5\baselineskip}}
\BeforeBeginEnvironment{corollary}{\Needspace{5\baselineskip}}
\BeforeBeginEnvironment{definition}{\Needspace{5\baselineskip}}
\BeforeBeginEnvironment{assumption}{\Needspace{5\baselineskip}}
\BeforeBeginEnvironment{example}{\Needspace{5\baselineskip}}
\BeforeBeginEnvironment{remark}{\Needspace{5\baselineskip}}
\makeatletter
\AtEndEnvironment{theorem}{\@endparpenalty=10000}
\AtEndEnvironment{lemma}{\@endparpenalty=10000}
\AtEndEnvironment{proposition}{\@endparpenalty=10000}
\AtEndEnvironment{corollary}{\@endparpenalty=10000}
\AtBeginEnvironment{corollary}{\postdisplaypenalty=10000}
\xpatchcmd{\proof}{\topsep6\p@\@plus6\p@\relax}{\topsep0pt\partopsep0pt\@beginparpenalty=10000\relax}{}{\PackageError{arxiv}{Could not adjust proof spacing}{}}
\def\maxwidth{\ifdim\Gin@nat@width>\linewidth\linewidth\else\Gin@nat@width\fi}
\def\maxheight{\ifdim\Gin@nat@height>\textheight\textheight\else\Gin@nat@height\fi}
\def\fps@figure{htbp}
\patchcmd\longtable{\par}{\if@noskipsec\mbox{}\fi\par}{}{}
\makeatother
\setkeys{Gin}{width=\maxwidth,height=\maxheight,keepaspectratio}

\usepackage{caption}
\usepackage{subcaption}

\IfFileExists{footnotehyper.sty}{
  \usepackage{footnotehyper}
}{
  \usepackage{footnote}
}
\makesavenoteenv{longtable}

\usepackage{algorithm}
\AfterEndEnvironment{algorithm}{\vspace{-\parskip}}
\usepackage{algpseudocode}
\usepackage{listings}

\usepackage{tikz}
\usetikzlibrary{automata,positioning}

\makeatletter
\floatstyle{ruled}
\@ifundefined{c@chapter}
  {\newfloat{codelisting}{h}{lop}}
  {\newfloat{codelisting}{h}{lop}[chapter]}
\floatname{codelisting}{Listing}

\makeatother

\usepackage{enumitem}
\setlist[enumerate]{topsep=0pt,partopsep=0pt,parsep=0pt,itemsep=3pt plus 1pt minus 1pt,beginpenalty=10000}
\setlist[enumerate,2]{topsep=3pt}
\usepackage{titlesec}

\usepackage{fancyhdr}
\newcommand{\shorttitle}{Agnostic Model-Assisted Estimation with Machine Learning}
\usepackage{natbib}
\bibpunct{(}{)}{;}{a}{,}{,}

\makeatletter
\newcommand{\referencepart}[1]{\def\referencegroup{#1}\def\@extra@b@citeb{.#1}\if@filesw\immediate\write\@auxout{\string\gdef\string\@extra@binfo{.#1}}\fi}
\newcommand{\recordreference}[1]{\edef\referencekey{\zap@space#1 \@empty}\global\@namedef{reference@\referencegroup @\referencekey}{}}
\newcommand{\recordreferences}[1]{\@for\referencekey:=#1\do{\expandafter\recordreference\expandafter{\referencekey}}}
\apptocmd{\NAT@sort@cites}{\expandafter\recordreferences\expandafter{\NAT@cite@list}}{}{\PackageError{arxiv}{Could not track citations}{}}
\pretocmd{\nocite}{\recordreferences{#1}}{}{\PackageError{arxiv}{Could not track uncited references}{}}
\let\originalharvarditem\harvarditem
\newcommand{\selectedharvarditem}[4][]{\selectedreference{#1}{#2}{#3}{#4}}
\long\def\selectedreference#1#2#3#4#5\par{\ifcsname reference@\referencegroup @*\endcsname\originalharvarditem[#1]{#2}{#3}{#4}#5\par\else\ifcsname reference@\referencegroup @#4\endcsname\originalharvarditem[#1]{#2}{#3}{#4}#5\par\fi\fi}
\newcommand{\filterreferences}{\let\harvarditem\selectedharvarditem}
\newcommand{\repeatbibliography}{\@input@{\jobname.bbl}}
\makeatother

\IfFileExists{xurl.sty}{\usepackage{xurl}}{}
\usepackage{hyperref}
\usepackage{bookmark}
\hypersetup{
  pdftitle={Title},
  pdfauthor={Author 1; Author 2},
  pdfkeywords={3 to 6 keywords, that do not appear in the title},
  colorlinks=true,
  linkcolor=blue,
  filecolor=Maroon,
  citecolor=red!70!black,urlcolor=blue!80!black,
  urlcolor=Blue,
  pdfcreator={LaTeX}
}

\newcommand{\anon}{1}

\newcommand{\Black}{\color{black}}

\begin{document}

\begingroup
\referencepart{main}

\if1\anon
{
  \title{\bfseries Agnostic Model-Assisted Estimation with\\[8pt]
    Machine Learning for Survey Data}

  \author{
    Ziming {\scshape An}
    \hspace{.2cm}\\
    Department of Mathematics and Statistics, University of Ottawa\\
    and\\
    Mehdi {\scshape Dagdoug}\\
    Department of Mathematics and Statistics, McGill University\\
    and\\
    David {\scshape Haziza}\\
    Department of Mathematics and Statistics, University of Ottawa\\
    and\\
    Yves {\scshape Till\'e}\\
    Institute of Statistics, University of Neuch\^atel
  }

  \date{}
  \maketitle
}
\fi

\if0\anon
{
  \bigskip
  \bigskip
  \bigskip
  \begin{center}
    {\LARGE\bfseries Agnostic Model-Assisted Estimation with\\[8pt]
      Machine Learning for Survey Data}
  \end{center}
  \medskip
}
\fi

\bigskip
\begin{abstract}
Model-assisted estimation uses prediction rules to improve the efficiency of
estimators of finite population parameters while retaining design-based
inference. Although flexible prediction methods have been considered, existing
theoretical results are largely method-specific. We develop a learner-agnostic
framework that replaces separate analyses for individual learners with general
conditions on the sampling design and prediction error. We connect design-aware
and design-agnostic cross-fitting and characterize the sampling designs under
which they yield conditional independence across folds. Under suitable
conditions, conditional weighting gives exact design-unbiasedness. We establish
first-order equivalence to oracle estimators, leading to design consistency and
asymptotic normality, and clarify when conditional and original inclusion
probabilities yield the same first-order behavior. We propose consistent
variance estimators based on cross-fitted residuals and construct asymptotically
valid confidence intervals. Under additional model and regularity conditions,
we establish asymptotic optimality through attainment of the Godambe--Joshi
lower bound. Simulations show that cross-fitting substantially reduces
finite-sample bias and improves variance estimation and coverage with adaptive
learners.
\end{abstract}

\noindent\textit{Keywords:}
Auxiliary information; cross-fitting; design-based inference;
finite-population sampling; Neyman orthogonality; variance estimation.

\vfill
\newpage
\spacingset{1.4}

\section{Introduction}\label{sec:intro}

The main goal in survey sampling is to estimate finite population characteristics, such as totals or means, from a sample. In the classical design-based framework, inference relies on the sampling design, while the population values are treated as fixed. The Horvitz--Thompson estimator is design-unbiased under very general conditions but, apart from auxiliary information incorporated into the sampling design through the inclusion probabilities, it does not use auxiliary information at the estimation stage. It can therefore be inefficient when the survey variable is strongly related to available auxiliary variables \citep{sarndal1992,breidtopsomer2017}.

Model-assisted estimation uses a prediction model to exploit this relationship and improve efficiency, while inference remains based on the sampling design. Thus, the model is used to improve precision, not to justify the validity of the methodology. This approach originates in classical regression estimation; see, among others, \citet{cochran1942}, \citet{cassel1976}, \citet{robinsonsarndal1983}, and the monograph of \citet{sarndal1992}. A modern review is provided by \citet{breidtopsomer2017}.

Much of the early literature focused on parametric settings, particularly linear working models, leading to the generalized regression (GREG) estimator. Because linear models may be too restrictive, nonparametric and semiparametric model-assisted methods have been developed to capture more complex relationships. Examples include local polynomial regression \citep{breidt2000local}, B-splines \citep{goga2005reduction}, penalized splines \citep{breidtclaeskensopsomer2005}, additive models \citep{breidtetal2007}, neural networks \citep{montanariranalli2005}, tree-based methods \citep{totheltinge2011,mcconvilletoth2019}, and random forests \citep{dagdouggogahaziza2023}.

Classical linear model-assisted estimators require only the auxiliary values for sampled units and their population totals. In contrast, nonparametric and machine learning-based estimators generally require auxiliary information for every population unit. Although stronger, this requirement is increasingly realistic in applications involving administrative records, registers, or linked data. Such information makes it possible to use flexible prediction methods and potentially achieve substantial efficiency gains \citep{breidtopsomer2017}.

These methods also introduce new inferential challenges. Because the prediction function is estimated from the same sample used to construct the final estimator, the fitted predictions depend on the sampling indicators and cannot be treated as fixed in the theoretical analysis. Estimating the prediction function may also compromise variance estimation: flexible learners can track the sampled observations too closely, producing artificially small residuals and hence underestimated variances. These problems are particularly relevant for highly adaptive learners, which may converge slowly and generally do not admit simple linearizations \citep{opsomer2005selecting}.

Recent work on debiased and double machine learning shows that orthogonal estimating equations and cross-fitting can restore valid large-sample inference in the presence of complex nuisance estimation \citep{chernozhukov2018}. These ideas are increasingly relevant in survey sampling. In particular, \citet{seaman2025debiased} considers data integration involving probability and non-probability samples, while \citet{dagdoug2026machine} study item nonresponse through imputation.

This paper develops a general design-based theory of model-assisted estimation and cross-fitting that is not tied to a particular prediction method. The main contribution is to replace learner-specific analyses with common conditions on the sampling design and prediction error. This allows the same estimation and inference results to be used across learners, without repeating the design-based analysis for each method or prescribing a convergence rate for the prediction rule. A central difficulty is that, under a design-based framework and dependent sampling, splitting the population does not generally make the training and evaluation samples independent. We give a necessary and sufficient characterization of the sampling designs that yield conditional independence across folds under uniform population splitting, and establish the connection with design-aware splitting. These results identify when cross-fitting provides the independence needed for design-based inference. Under suitable conditions, we establish exact design-unbiasedness with conditional inclusion probabilities, first-order equivalence to oracle estimators, design consistency, and asymptotic normality. We also determine when retaining the original inclusion probabilities preserves the first-order behavior of conditional weighting. Finally, we establish consistent variance estimation using cross-fitted residuals and give conditions for asymptotic attainment of the Godambe--Joshi lower bound. Together, these results provide a common foundation for using flexible prediction methods in survey estimation while retaining design-based guarantees.

The remainder of the paper is organized as follows. Section~\ref{sec:setup} introduces the setup and motivates the problem. Section~\ref{sec:crossfit} presents conditional cross-fitting schemes and establishes their properties. Section~\ref{sec:cdtMA} develops a cross-fitted model-assisted estimator based on conditional inclusion probabilities, whereas Section~\ref{Sec:uncond} studies its counterpart based on unconditional inclusion probabilities. Section~\ref{sec:6} compares the two estimators, and Section~\ref{Sec:Simulation} reports the simulation results. Section~\ref{sec:final} concludes with open questions and extensions to finite-population parameters defined through estimating equations. The appendices contain the proofs and supplementary results.

\section{Basic setup and motivation} \label{sec:setup}
Consider a finite population $U=\{1,\dots,N\}$ of size $N$. We are interested in estimating the population mean $\mu=N^{-1}\sum_{k \in U} y_k,$ where $y_k$ is the value of the study variable $Y$ for unit $k \in U$. A sample $S \subset U$ of size $n$ is selected according to a sampling design $p(\cdot)$. Let $\{I_k\}_{k \in U}$ denote the sample inclusion indicators, where $I_k=1$ if $k \in S$ and $I_k=0$ otherwise. The first-order and second-order inclusion probabilities are given by $\pi_k=\mathbb{P}(I_k=1)$ and $\pi_{kl}=\mathbb{P}(I_k=1,I_l=1)$. Provided that $\pi_k>0$ for all $k \in U$, the Horvitz--Thompson estimator
$$
\widehat{\mu}_{\pi}=\frac{1}{N}\sum_{k \in S}\frac{y_k}{\pi_k}
$$
is design-unbiased for $\mu$, that is, $\E_p[\widehat{\mu}_{\pi}]=\mu$, where $\mathbb{E}_p[\cdot]$ denotes the expectation operator with respect to the sampling design. We assume that, for each $k\in U$, the vector $\mathbf{x}_k=(x_{k1},\ldots,x_{kp})^\top$ of $p$ auxiliary variables is known. To exploit this information, consider the working relationship
\begin{equation}\label{eq:mod}
  y_k=m(\mathbf{x}_k)+\epsilon_k,\qquad k\in U,
\end{equation}
where $m:\mathbb{R}^p\to\mathbb{R}$ is an unknown function and
$\mathbb{E}(\epsilon_k\mid\mathbf{x}_k)=0$.

\subsection{Revisiting model-based estimation with machine learning}

A long-standing approach to using auxiliary information in finite population inference is model-based, or prediction-based, estimation. This perspective goes back at least to \citet{brewer1963ratio} and \citet{royall1970finite}; comprehensive accounts are given by \citet{valliant2000finite} and \citet{chambers2012introduction}. The finite population values are viewed as a realization from a superpopulation model, and the unobserved values for nonsampled units are predicted from the observed sample.

Let $\widehat m(\cdot)$ be a prediction rule fitted using $\{(\mathbf{x}_k,y_k):k\in S\}$. The resulting model-based estimator is
$$
\widehat{\mu}_{mb}(\widehat m) = \frac{1}{N} \left\{ \sum_{k\in S}y_k+ \sum_{k\in U\setminus S}\widehat
m(\mathbf{x}_k) \right\}.
$$
In the parametric case, when the assumed model is appropriate, this estimator can be highly efficient. Its validity, however, depends on the model: misspecification or informative sampling may lead to biased predictions and invalid measures of uncertainty. This issue was central to the classical debate between design-based and model-based inference; see, for example, \citet{hansen1983evaluation}.

Model-based estimators have been studied extensively under parametric models, particularly linear regression models. Nonparametric contributions include regression, kernel, and spline-based methods; see \citet{dorfman1992nonparametric}, \citet{chambers1993bias}, \citet{dorfman1993estimators}, \citet{kuk1993kernel}, \citet{zheng2003penalized}, and \citet{rueda2009predictive}. Their theoretical properties are generally established using assumptions and arguments tailored to the particular prediction method.

A fundamental difficulty with machine learning is that the model-based estimator is not Neyman-orthogonal with respect to the prediction rule. Indeed, for any reference rule $m$,
$$
\widehat{\mu}_{mb}(\widehat m)-\widehat{\mu}_{mb}(m) = \frac{1}{N}\sum_{k\in U\setminus S} \left\{\widehat
m(\mathbf{x}_k)-m(\mathbf{x}_k)\right\}.
$$
Thus, prediction error enters the estimator at first order. Equivalently, the design expectation of the estimating equation defining $\widehat{\mu}_{mb}$ has derivative $N^{-1}\sum_{k\in U}(1-\pi_k)h(\mathbf{x}_k)$ in a direction $h$, which is generally nonzero. Without additional structure, the convergence rate of the estimator may therefore inherit the potentially slower rate of the learner, making root-$n$ inference difficult. An elementary example illustrating this phenomenon is given in Section~\ref{app:mb-example} of the Supplementary Material. See also \citet{chernozhukov2018} for a formal discussion of Neyman orthogonality.

This observation motivates the model-assisted approach developed below. Unlike the direct model-based estimator, the model-assisted estimator is Neyman-orthogonal, allowing flexible and slowly converging learners to be used without requiring parametric rates. However, using the same sample to fit the prediction rule and construct the estimator creates an additional dependence that must be accounted for. We address this issue using cross-fitting.

\subsection{Model-assisted estimation: Advantages and obstacles to a general theory}

Let $\widetilde{m}(\cdot)$ denote a deterministic target associated with the chosen learning strategy. It is treated as fixed under the sampling design and need not be the true regression function or the rule obtained by fitting the learner to the full finite population. Rather, it represents a stable target toward which the fitted predictions converge in a finite-population average sense, as formalized later through an $L_2$ stability condition. In classical settings such as linear regression, it may coincide with the usual population-level fit.

If $\widetilde{m}(\mathbf{x}_k)$ were known for every $k\in U$, the oracle model-assisted estimator, also known as the difference estimator, would be
\begin{equation}\label{oracle_pi}
\widehat{\mu}_{\mathrm{ma}}(\widetilde{m},\pi)
=
\frac{1}{N}
\left\{
\sum_{k\in U}\widetilde{m}(\mathbf{x}_k)
+
\sum_{k\in S}
\frac{y_k-\widetilde{m}(\mathbf{x}_k)}{\pi_k}
\right\}.
\end{equation}
Because $\widetilde{m}(\cdot)$ is fixed under the sampling design, $\E_p\{\widehat{\mu}_{\mathrm{ma}}(\widetilde{m},\pi)\}=\mu$, regardless of its predictive accuracy. Thus, the target need not correctly describe the conditional mean of $Y$. Although infeasible, \eqref{oracle_pi} provides a useful benchmark for studying the effect of estimating the prediction rule. In the linear case, a natural target is $\widetilde{m}(\mathbf{x}_k)=\mathbf{x}_k^\top\mathbf{B}$, where $\mathbf{B}$ is the finite population least-squares coefficient, and \eqref{oracle_pi} reduces to the classical difference estimator studied by \citet{cassel1976}.

In practice, $\widetilde{m}(\cdot)$ is replaced by a prediction rule $\widehat{m}(\cdot)$ fitted using $\{(\mathbf{x}_k,y_k):k\in S\}$. The feasible model-assisted estimator is
\begin{equation}\label{maestimator}
\widehat{\mu}_{\mathrm{ma}}(\widehat{m},\pi)
=
\frac{1}{N}
\left\{
\sum_{k\in U}\widehat{m}(\mathbf{x}_k)
+
\sum_{k\in S}
\frac{y_k-\widehat{m}(\mathbf{x}_k)}{\pi_k}
\right\}.
\end{equation}
Under a linear working model, $\widehat{m}(\mathbf{x}_k)=\mathbf{x}_k^\top\widehat{\boldsymbol{\beta}}$, with $\widehat{\boldsymbol{\beta}}$ estimated by weighted least squares, leading to the generalized regression (GREG) estimator. Unlike more flexible estimators, the GREG estimator requires only the population totals of the auxiliary variables.

In a seminal paper, \citet{breidt2000local} established the $\sqrt{n}$-consistency and asymptotic normality of a local polynomial estimator, proposed a consistent variance estimator, and showed that the estimator attains the Godambe--Joshi lower bound. As with the other model-assisted methods cited in the introduction, these results do not require the prediction rule to converge at a parametric rate. This contrasts with direct model-based estimation and reflects the Neyman orthogonality of the model-assisted estimator.

A remaining difficulty is that the same sample is used to fit $\widehat{m}(\cdot)$ and to construct \eqref{maestimator}. Consequently, the fitted predictions depend on the sampling indicators that also enter the estimator. Writing them as $\widehat{m}(\mathbf{x}_k;\mathbf{I})$, where $\mathbf{I}=(I_1,\ldots,I_N)^\top$, gives
$$
\widehat{\mu}_{\mathrm{ma}}(\widehat{m},\pi)-\mu = \frac{1}{N}\sum_{k\in U} \left(\frac{I_k}{\pi_k}-1\right)
\left\{y_k-\widehat{m}(\mathbf{x}_k;\mathbf{I})\right\}.
$$
If the prediction rule were fixed under the sampling design, the expectation of the right-hand side would be zero. With sample-fitted predictions, however,
\begin{equation}\label{bias}
\E_p \left\{
\widehat{\mu}_{\mathrm{ma}}(\widehat{m},\pi)-\mu
\right\}
=
-\frac{1}{N}\sum_{k\in U}
\operatorname{Cov}_p \left\{
\frac{I_k}{\pi_k},
\widehat{m}(\mathbf{x}_k;\mathbf{I})
\right\}.
\end{equation}
Thus, the design bias is driven by the dependence between the fitted predictions and the sampling indicators. These covariance terms may be non-negligible in finite samples and need not vanish asymptotically without suitable restrictions on the learner. A simple counterexample showing that the feasible estimator may fail to be design-consistent is given in Section~\ref{app:ma-counterexample} of the Supplementary Material.

The same dependence affects variance estimation. When the prediction rule is fitted and evaluated on the same observations, the sample residuals, $(y_k-\widehat{m}(\mathbf{x}_k))_{k\in S}$, may be artificially small, particularly for highly adaptive learners. Plug-in variance estimators that treat $\widehat{m}(\cdot)$ as fixed may then underestimate the sampling variance. A general theory for model-assisted estimation with arbitrary learners must therefore account for the dependence created by data reuse.

\subsection{Asymptotic framework}
We consider the asymptotic framework of \citet{isaki1982survey}. Let $\{U_v\}_{v \in \mathbb{N}}$ be an increasing sequence of populations, that is, $U_1 \subset U_2 \subset \cdots$, with corresponding sizes $\{N_v\}_{v \in \mathbb{N}}$. In each population $U_v$, a sample $S_v$ is drawn according to a sampling design in order to estimate the population mean $\mu$. In the asymptotic results, $n_v:=\E_p(|S_v|)=\sum_{k\in U_v}\pi_k$ denotes the expected sample size; it equals the sample size for fixed-size designs. All subsequent steps (e.g., fitting a prediction rule, prediction, and variance estimation) are carried out at each $v \in \mathbb{N}$. The number of covariates $p$ is assumed to remain fixed as $v$ increases. For simplicity, we omit the index $v$ whenever there is no ambiguity. Below, we state a set of assumptions related to the sampling design.

\begin{enumerate}[
  label=({D\arabic*}),
  leftmargin=2.4em
]
\setlength{\abovedisplayskip}{8pt}
\setlength{\belowdisplayskip}{8pt}
\setlength{\abovedisplayshortskip}{6pt}
\setlength{\belowdisplayshortskip}{6pt}

\item\label{D1} The expected sampling fraction satisfies
$$
\lim_{v\to\infty}\frac{n_v}{N_v}=\pi_\star>0.
$$
\item\label{D2} There exist positive constants $\lambda > 0$ and $\lambda^{\star} > 0$ such that, for all $v \in \mathbb{N}$,
$$
\min_{k \in U_v} \pi_k \geq \lambda, \qquad \text{and} \qquad \min_{k,l \in U_v} \pi_{kl} \geq
\lambda^{\star}.
$$
\item\label{D3} Let $\Delta_{kl}:=\pi_{kl}-\pi_k\pi_l$. There exists a positive constant $C>0$ such that $$ \max_{k \neq l \in U_v} |\Delta_{kl}|\leq \dfrac{C}{n_v}.$$
\item\label{D4} Let $D_{t,N_v}$ denote the set of distinct $t$-tuples from $U_v$. Then
$$
\lim_{v \to \infty} \max_{(i,j,k,l) \in D_{4,N_v}} \left| \mathbb{E}_p\big[(I_i I_j - \pi_i \pi_j)(I_k I_l -
\pi_k \pi_l)\big] \right| = 0.
$$

\end{enumerate}

These regularity conditions are standard in the survey sampling literature. Assumption~\ref{D1} states that the expected sample size $n_v$ grows at the same rate as the population size $N_v$. Assumption~\ref{D2} ensures that the first- and second-order inclusion probabilities are bounded away from zero. Assumption \ref{D3} requires that the sampling covariances decrease fast enough to zero. Finally, Assumption~\ref{D4} is commonly used to establish the consistency of the Horvitz--Thompson variance estimator; see \cite{boistard2017functional} for a discussion.

\section{Cross-fitting in survey sampling}\label{sec:crossfit}
As discussed above, the standard model-assisted estimator suffers from the dependence created by using the same observations to fit the prediction rule and construct the estimator. Cross-fitting addresses this issue through honest predictions, obtained from data not reused at the evaluation stage.

Implementing cross-fitting under design-based inference requires care because the sampling indicators $\{I_k\}_{k\in U}$ are generally dependent. For certain sampling designs, \citet{lu2025conditional} propose a construction based on conditional rather than unconditional independence. We present this \emph{design-aware} approach in Section~\ref{Sec:Lu_cf}. Section~\ref{Sec:Proposed_cf} introduces a \emph{design-agnostic} alternative based on uniformly randomized folds. We then relate the two approaches and discuss their implementation under more complex designs.

\subsection{The design-aware cross-fitting method}\label{Sec:Lu_cf}
We first describe the approach of \citet{lu2025conditional}, independently of any specific estimator. The goal is to partition the finite population into folds satisfying conditional independence properties that separate the data used to train a prediction rule from those on which it is evaluated.

Let $\mathcal{P}=\{U_1,\ldots,U_M\}$ be a random partition of $U$ into $M$ disjoint folds, so that $U=\bigcup_{j=1}^M U_j$ and $U_j\cap U_q=\emptyset$ for $j\neq q$. Let $N_j=|U_j|$ and, for a given sample $S$, define $S_j:=S\cap U_j$. Finally, let $\delta_{k,j}=1$ if $k\in U_j$ and $\delta_{k,j}=0$ otherwise,
and let $\boldsymbol{\delta}=\{\delta_{k,j}:k\in U, j=1,\ldots,M\}$ denote the collection of partition indicators. The construction of \citet{lu2025conditional} aims to make the sampling indicators conditionally independent across folds, given the partition. More precisely,
\begin{equation}\label{Lu_cond}
\text{(Ind-$\delta$)}:\qquad
\{I_k\}_{k\in U_j}
\indep
\{I_k\}_{k\in U\backslash U_j}
\mid\boldsymbol{\delta},
\qquad j=1,\ldots,M.
\end{equation}
Thus, conditional on the partition, the sampling indicators in one fold provide no additional information about those in the remaining folds. Examples of the fold constructions proposed by \citet{lu2025conditional} for Bernoulli sampling, simple random sampling without replacement (SRSWOR), and stratified SRSWOR are given in Section~\ref{supp:lu-schemes} of the Supplementary Material.

\begin{definition} \label{def1}
     An $M$-fold cross-fitting scheme is a pair $(\boldsymbol{\delta}, \cA)$ where $\boldsymbol{\delta} $ is a random partition of $U$ into folds $U_1, ..., U_M$, and $\cA$ is a random element satisfying $\sigma(\boldsymbol{\delta}) \subseteq \sigma (\cA).$
    Let $$\pi_{k|\cA} = \P(k \in S  \mid  \cA), \qquad \pi_{kl|\cA} = \P(k,l \in S  \mid  \cA), \qquad \Delta_{kl|\cA} =\pi_{kl|\cA}-\pi_{k|\cA}\pi_{l|\cA}$$ be the first- and second-order inclusion probabilities and design covariances of elements $k,l \in U$. 
\end{definition}

In Definition \ref{def1}, the random element $\cA$ specifies what is conditioned on to obtain independence across folds. In the examples of \citet{lu2025conditional} presented in 
Section~\ref{supp:lu-schemes} of the Supplementary Material, $\cA = \boldsymbol{\delta}$; however, as shown in the next section, $\cA$ need not coincide with
$\boldsymbol{\delta}$: the required conditional independence may hold
after conditioning on a different random element.
 
\begin{definition}\label{def:honest}
    Let $(\boldsymbol{\delta}, \cA)$ be a cross-fitting scheme. We say that $(\boldsymbol{\delta}, \cA)$ is \emph{honest} if 
    \begin{equation}
        \text{(Ind-$\mathcal{A}$)}: \qquad	\{I_k\}_{k\in U_j}\indep \{I_k\}_{k\in U \backslash U_j}\mid  \mathcal{A}, \qquad j=1,\ldots,M.
    \end{equation}
\end{definition}

\subsection{A design-agnostic cross-fitting procedure}\label{Sec:Proposed_cf}
We now describe a simple cross-fitting procedure, implemented at the estimation stage, after the sample has been selected. The key idea is to uniformly at random partition the population $U$, and therefore the observed sample $S$, into folds, and to construct prediction rules using observations outside each fold. Let $\mathcal{P} = \{U_1, ..., U_M\}$ be a random partition of the population with fixed respective sizes $N_1, N_2, ..., N_M$, characterized by $\boldsymbol{\delta}$. For each fold $j=1,\dots,M$, define $S_j=S\cap U_j$, and let
$n_j=\sum_{k \in U_j} I_k$
denote the number of sampled units in $U_j$. We write $\boldsymbol{n}=(n_1,\dots,n_M)$ for the vector of fold-specific sample sizes. This cross-fitting scheme is summarized in the following algorithm. 

\begin{algorithm}[H]
\setstretch{1}
\setlength{\parskip}{4pt}
\setlength{\abovedisplayskip}{6pt}
\setlength{\belowdisplayskip}{6pt}
\setlength{\abovedisplayshortskip}{3pt}
\setlength{\belowdisplayshortskip}{3pt}
\caption{Uniform cross-fitting scheme.}
\label{alg:design-agnostic}
\vspace{1mm}
\textbf{Input:}
Population $U$, sample $S$, and positive integer fold sizes
$N_1,\ldots,N_M$.

\textbf{Do:}
\begin{enumerate}
    \item Independently of $S$, uniformly partition $U$ into
    $U_1,\ldots,U_M$, where $|U_j|=N_j.$

    \item For $j=1,\ldots,M$, set
    $
        S_j=S\cap U_j
        \qquad\text{and}\qquad
        n_j=|S_j|.
   $
\end{enumerate}

\textbf{Output:}
The population partition $\cP=(U_1,\ldots,U_M).$
\end{algorithm}

Unlike the construction in Section \ref{Sec:Lu_cf}, this procedure does not, in general, satisfy the conditional independence property \text{(Ind-$\delta$)} in \eqref{Lu_cond}. However, under an important class of sampling designs, another form of conditional independence holds. Specifically, conditional on $(\boldsymbol{n},\boldsymbol{\delta})$, we have
\begin{equation}\label{conditionindepn}
\text{(Ind-$\{\boldsymbol{n}, \boldsymbol{\delta}\}$)}: \qquad	\{I_k\}_{k\in U_j}\indep \{I_k\}_{k\in U \backslash U_j}\mid \left(\boldsymbol{n},\boldsymbol{\delta}\right), \qquad j=1,\ldots,M.
\end{equation}

We now ask under which sampling designs the uniform cross-fitting scheme satisfies condition \textnormal{(Ind-$\{\boldsymbol{n},\boldsymbol{\delta}\}$)}. To answer this question, we recall two definitions from \citet{tille2006sampling}. Let $\cP(U)$ denote the power set of $U$, and let $\cQ:=\{s\in\cP(U):p(s)>0\}$ be the support of the sampling design $p$. The support $\cQ$ is \emph{symmetric} if, for every $s\in\cQ$, all subsets $r\subseteq U$ satisfying $|r|=|s|$ also belong to $\cQ$. A sampling design is \emph{exponential} on $\cQ$ if, for some $\boldsymbol{\lambda}\in\mathbb{R}^N$,
$$
p(s;\cQ,\boldsymbol{\lambda}) = \frac{ \exp \left(\sum_{k\in s}\lambda_k\right) }{ \displaystyle
\sum_{r\in\cQ} \exp \left(\sum_{l\in r}\lambda_l\right) }, \qquad s\in\cQ.
$$
An overview of exponential sampling designs is provided in Appendix \ref{exponetialdesign}. A more comprehensive discussion is presented in \citet{tille2006sampling}. Below, we provide additional results regarding these designs, which will be useful for the rest of the article. \Black

\begin{theorem} \label{theo:condIndExp}
    Let $p$ be a sampling design on a symmetric support $\cQ$. Let $M\geq 2$ and let $N_1, ..., N_M \geq 2$ satisfy $\sum_{j=1}^M N_j=N$. Define $$\mathcal{D}_N := \bigg\{ \left(U_1, ..., U_M\right)  : \biguplus_{j=1}^M U_j = U,   |U_j | = N_j, \text{ for } j =1,...,M\bigg\}.$$

    Let $\boldsymbol{\delta}$ be a random partition, independent of $S$, whose distribution has full support on $\mathcal{D}_N$, that is, $\P(\boldsymbol{\delta} = \boldsymbol{d})>0$ for every $\boldsymbol{d} \in \mathcal{D}_N$. The following statements are equivalent:
    \begin{enumerate}
        \item[(i)] Property (Ind-$\{\boldsymbol{n}, \boldsymbol{\delta}\}$) in \eqref{conditionindepn} holds.
        \item[(ii)] For every sample size $n$ such that $\P(|S| = n)>0$, the conditional design given $|S| = n$ is an exponential design on $ \cS_n := \{s \in \cP(U) : |s| = n\}.$
    \end{enumerate}
\end{theorem}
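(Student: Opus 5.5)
Since $\boldsymbol{\delta}$ is independent of $S$, conditioning on $\boldsymbol{\delta}=\boldsymbol{d}$ leaves the design unchanged. For a fixed partition $\boldsymbol{d}=(U_1,\dots,U_M)$ and fold sizes $\boldsymbol{n}=(n_1,\dots,n_M)$, the conditional law of $S$ given $(\boldsymbol{n},\boldsymbol{\delta})=(\boldsymbol{n},\boldsymbol{d})$ is $p(s)/\P(\boldsymbol{n}\mid \boldsymbol{d})$ on the product set $\{s:|s\cap U_j|=n_j,\ j=1,\dots,M\}$. Property (Ind-$\{\boldsymbol{n},\boldsymbol{\delta}\}$) for all $j$ is then equivalent to this conditional law factorizing as $\prod_j q_j(s\cap U_j)$; pairwise independence of each fold against the rest, for every $j$, gives full mutual factorization by iteration. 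So the theorem reduces to a combinatorial statement about $p$ restricted to each $\cS_n$. Symmetry of $\cQ$ ensures that whenever $\P(|S|=n)>0$, every $s\in\cS_n$ has $p(s)>0$. This lets us take logarithms and set $f(s)=\log p(s)$ on $\cS_n$.

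\emph{(ii)$\Rightarrow$(i).} If $p(s)\propto \exp(\sum_{k\in s}\lambda_k)$ on $\cS_n$, then on the slice with fold sizes $\boldsymbol{n}$ we have $p(s)\propto \prod_j \exp(\sum_{k\in s\cap U_j}\lambda_k)$. This is a product of functions of $s\cap U_j$ over a product domain, which gives conditional independence directly.

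\emph{(i)$\Rightarrow$(ii).} Fix $n$ with $\P(|S|=n)>0$. Factorization on every slice of every partition says that $f(s)=\sum_j g_{j,\boldsymbol{d},\boldsymbol{n}}(s\cap U_j)$. We want to conclude that $f(s)=c+\sum_{k\in s}\lambda_k$. The key step is a swap identity. Take $s\in\cS_n$, $a,b\in s$ and $a',b'\notin s$, all distinct. I claim
\[
f(s)-f(s-a+a')-f(s-b+b')+f(s-a-b+a'+b')=0 .
\]
To obtain it, choose a partition with $a,a'\in U_1$ and $b,b'\in U_2$; this is possible since $N_1,N_2\ge 2$ and full support permits any such partition. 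All four samples have the same fold sizes, and the two swaps act on different folds. Additivity across folds then makes the mixed difference vanish. The constraint $N_j\ge 2$ for all $j$ is exactly what lets each swap stay within its fold.

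It remains to show that vanishing mixed differences force $f$ to be linear on the Johnson graph $\cS_n$. Define $\lambda_{a'}-\lambda_a:=f(s-a+a')-f(s)$. The swap identity says this quantity does not depend on $s$, provided the pairs are disjoint; some care is needed when $b=a$ or $b'=a'$, and we chain through an intermediate unit, using $N\ge 4$. Consistency of these differences follows from cycles of length three: the triple $a\to a'\to a''$ is again handled by chaining through the identity. Once $\lambda$ is fixed up to an additive constant, $f(s)-\sum_{k\in s}\lambda_k$ is invariant under single swaps. Since single swaps connect $\cS_n$, this quantity is constant, which gives the exponential form. The edge cases $n\in\{0,1,N-1,N\}$ are trivial, because every function on such an $\cS_n$ is already of exponential form.

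I expect the main obstacle to be this last step: showing that the swap increments are well defined independently of the base sample, including the degenerate overlapping configurations and small $n$ or $N-n$. There, only one unit may be swappable on one side. In those cases I would verify directly that every function on $\cS_n$ is exponential, or handle them via the complement symmetry $s\mapsto U\setminus s$.
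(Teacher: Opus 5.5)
Your proposal is correct and follows essentially the same route as the paper. The direction (ii)$\Rightarrow$(i) uses the same factorization of the normalizing sum over the Cartesian product of fold-wise samples. For (i)$\Rightarrow$(ii), your mixed-difference identity with $a,a'\in U_1$ and $b,b'\in U_2$ is the logarithmic form of the paper's Step 2.a invariance $\rho_{kl}(r)=\rho_{kl}(r')$. After that, your base-unit definition of $\lambda$, three-cycle consistency, and connectivity of single swaps reproduce Lemma~\ref{lemma:proof1}.

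The overlapping configurations you flag never arise. Moving between base samples that contain $a$ and exclude $a'$ only needs swaps $b\to b'$ with $b,b'\notin\{a,a'\}$. So the range $2\le n\le N-2$, together with your trivial edge cases $n\in\{0,1,N-1,N\}$, is all you need.
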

\begin{proof}
    See Appendix \ref{Prooftheo:condIndExp}.
\end{proof}
Theorem~\ref{theo:condIndExp} gives an exact characterization: under the uniform cross-fitting scheme, condition \textnormal{(Ind-$\{\boldsymbol n,\boldsymbol\delta\}$)} holds if and only if, conditional on each attainable sample size $n$, the sampling design is exponential on $\cS_n$. This class includes SRSWOR, Bernoulli sampling, Poisson sampling, and conditional Poisson sampling. Thus, for these designs, a uniformly generated partition provides the required conditional independence.

\subsection{Relationship between design-aware and design-agnostic cross-fitting schemes}

Although it is sufficient to obtain conditional independence in an interesting class of sampling designs, the design-agnostic construction has the drawback of not controlling the fold sample sizes. For example, it is possible to obtain a partition $\mathcal{P}$ such that $n_j =0 $ for some fold $U_j$. On the other hand, the design-aware cross-fitting philosophy, such as the construction proposed for SRSWOR in \cite{lu2025conditional}, allows one to choose how many sampled elements should be in each fold. We restate this procedure for a general fixed-size design.

\begin{algorithm}[H]
\setstretch{1}
\setlength{\parskip}{4pt}
\setlength{\abovedisplayskip}{6pt}
\setlength{\belowdisplayskip}{6pt}
\setlength{\abovedisplayshortskip}{3pt}
\setlength{\belowdisplayshortskip}{3pt}
\caption{Design-aware cross-fitting scheme.}
\label{alg:design-aware}
\vspace{1mm}
\textbf{Input:}
Population $U$, sample $S$, positive integer fold sizes
$N_1,\ldots,N_M$, and prescribed integer sample fold sizes
$n_1,\ldots,n_M$.

\textbf{Do:}

\begin{enumerate}

    \item Partition $S$ uniformly at random into
    $S_1,\ldots,S_M$, where $|S_j|=n_j$.

    \item Independently of the first partition, partition $U\setminus S$ uniformly at random into
    $R_1,\ldots,R_M$, where $|R_j|=N_j-n_j.$

    \item For $j=1,\ldots,M$, set $U_j=S_j\cup R_j.$
    
\end{enumerate}

\textbf{Output:}
The population partition $\cP=(U_1,\ldots,U_M).$

\end{algorithm}

Algorithm \ref{alg:design-aware} happens to be exactly the cross-fitting scheme proposed by \cite{lu2025conditional} for the particular case of completely randomized trials (corresponding to SRSWOR). It happens that, in fact, this procedure produces valid folds satisfying conditional independence in a wider variety of designs, as the next theorem shows.

\begin{theorem}\label{Theo:eqAwareAgno}
Let $\boldsymbol{\delta}_{\rm unif}$ and $\boldsymbol{\delta}_{\rm aware}$ denote the partitions obtained from Algorithm \ref{alg:design-agnostic} and Algorithm \ref{alg:design-aware}, respectively, using the same fold sizes $N_1, \ldots, N_M$. Fix target fold counts $\boldsymbol{n}=(n_1,\ldots,n_M)$, and suppose that Algorithm \ref{alg:design-aware} uses these counts. For $s\in\cP(U)$ and a partition $\boldsymbol{d}$, define $\boldsymbol{n}(s, \boldsymbol{d}) := [|s \cap U_1(\boldsymbol{d})|, \ldots,|s \cap U_M(\boldsymbol{d})|]^\top.$\\
Then, for every $s\in\cP(U)$ and every partition $\boldsymbol{d}$ with fold sizes $N_1,\ldots,N_M$ for which $\P( \boldsymbol{n}(S, \boldsymbol{d}) = \boldsymbol{n})>0$,
$$ \P \left( S = s  \big\rvert  \boldsymbol{\delta}_{\rm aware} = \boldsymbol{d}\right) = \P \left( S=s  \big\rvert  \boldsymbol{\delta}_{\rm unif} = \boldsymbol{d}, \boldsymbol{n}(S, \boldsymbol{d}) = \boldsymbol{n}\right).$$
Consequently, Algorithm \ref{alg:design-aware} satisfies \textnormal{(Ind-$\delta$)} for every choice of $\boldsymbol{n}$ if and only if Algorithm \ref{alg:design-agnostic} satisfies \textnormal{(Ind-$\{\boldsymbol{n},\boldsymbol{\delta}\}$)}, and both schemes yield the same first- and second-order conditional inclusion probabilities.
\end{theorem}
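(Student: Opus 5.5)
The plan is to compute both conditional probabilities explicitly by Bayes' rule and show they reduce to the same expression, namely $p(s)\mathbb{1}\{\boldsymbol n(s,\boldsymbol d)=\boldsymbol n\}$ normalized over $\{r:\boldsymbol n(r,\boldsymbol d)=\boldsymbol n\}$.

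\textbf{Uniform scheme.} Since $\boldsymbol\delta_{\rm unif}$ is independent of $S$, we have
$$\P(S=s\mid\boldsymbol\delta_{\rm unif}=\boldsymbol d)=p(s).$$
Conditioning further on the event $\boldsymbol n(S,\boldsymbol d)=\boldsymbol n$, which has positive probability by assumption, gives
$$\P(S=s\mid \boldsymbol\delta_{\rm unif}=\boldsymbol d,\boldsymbol n(S,\boldsymbol d)=\boldsymbol n)=\frac{p(s)\mathbb 1\{\boldsymbol n(s,\boldsymbol d)=\boldsymbol n\}}{\sum_{r}p(r)\mathbb 1\{\boldsymbol n(r,\boldsymbol d)=\boldsymbol n\}}.$$

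\textbf{Design-aware scheme.} I compute the joint law $\P(S=s,\boldsymbol\delta_{\rm aware}=\boldsymbol d)=p(s)\,\P(\boldsymbol\delta_{\rm aware}=\boldsymbol d\mid S=s)$. Given $S=s$ with $|s|=n:=\sum_j n_j$, Algorithm~\ref{alg:design-aware} produces $\boldsymbol d$ if and only if $s\cap U_j(\boldsymbol d)$ is the chosen $S_j$ and $U_j(\boldsymbol d)\setminus s$ is the chosen $R_j$. This is possible exactly when $\boldsymbol n(s,\boldsymbol d)=\boldsymbol n$, and the two independent uniform partitions then give
$$\P(\boldsymbol\delta_{\rm aware}=\boldsymbol d\mid S=s)=\frac{\prod_j n_j!}{n!}\cdot\frac{\prod_j (N_j-n_j)!}{(N-n)!}\,\mathbb 1\{\boldsymbol n(s,\boldsymbol d)=\boldsymbol n\}.$$
The key point is that this factor depends on $s$ only through the indicator, since $|s|=n$ is forced by $\boldsymbol n(s,\boldsymbol d)=\boldsymbol n$. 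When $p(s)>0$ but $|s|\neq n$, the algorithm is not applicable and that $s$ contributes zero. Applying Bayes' rule, the constant cancels and yields the same normalized expression as above. The positivity hypothesis guarantees the denominator is nonzero, so $\P(\boldsymbol\delta_{\rm aware}=\boldsymbol d)>0$.

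\textbf{Consequences.} The identity says that the law of $S$ given $\boldsymbol\delta_{\rm aware}=\boldsymbol d$ equals the law of $S$ given $(\boldsymbol\delta_{\rm unif},\boldsymbol n)=(\boldsymbol d,\boldsymbol n)$. (Ind-$\delta$) for the aware scheme with counts $\boldsymbol n$ is a factorization property of the first law, for every attainable $\boldsymbol d$. (Ind-$\{\boldsymbol n,\boldsymbol\delta\}$) is the same factorization for the second law, for every attainable pair $(\boldsymbol d,\boldsymbol n)$. Letting $\boldsymbol n$ range over all choices, and discarding unattainable ones on both sides, gives the equivalence. First- and second-order conditional inclusion probabilities are marginals of these identical conditional laws, so they coincide.

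The main obstacle is bookkeeping rather than depth. One must handle the support mismatch: $\boldsymbol d$ values where $\P(\boldsymbol n(S,\boldsymbol d)=\boldsymbol n)=0$, and the need for $|s|=n$ with random-size designs. One must also state precisely that "for every choice of $\boldsymbol n$" matches "conditional on $\boldsymbol n$" over the attainable set, so that both independence statements are quantified over the same set of conditioning values.
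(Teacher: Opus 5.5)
Your proposal is correct and takes essentially the same route as the paper's proof. The paper likewise writes $\P(S=s,\boldsymbol\delta_{\rm aware}=\boldsymbol d)=p(s)\mathds{1}_{\boldsymbol n(s,\boldsymbol d)=\boldsymbol n}/K_{\rm aware}$, where $K_{\rm aware}$ is the product of the two multinomial counts, and uses the independence of $\boldsymbol\delta_{\rm unif}$ and $S$ to reduce the uniform-scheme conditional law to the same normalized expression. It then derives the equivalence of the factorization properties, and the equality of the inclusion probabilities, directly from the equality of the two conditional laws.
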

\begin{proof}
    See Appendix \ref{ProofTheo:eqAwareAgno}.
\end{proof}

 For fixed-size sampling designs, Theorems~\ref{theo:condIndExp} and \ref{Theo:eqAwareAgno} show that whenever the uniform scheme satisfies \textnormal{(Ind-$\{\boldsymbol n,\boldsymbol\delta\}$)}, Algorithm~\ref{alg:design-aware} satisfies \textnormal{(Ind-$\delta$)} while allowing the sample fold sizes to be fixed in advance. The construction does not, however, cover conditional independence arising through a different conditioning variable $\cA$.

\subsection{Assumptions on the cross-fitting scheme}

In order to analyze the asymptotic behavior of model-assisted estimators based on cross-fitting, we will require some regularity conditions on the asymptotic behavior of the sequence of cross-fitting schemes as $v \to \infty$. Throughout, the number of folds $M$ is fixed as $v\to\infty$. All probabilities and expectations involving the cross-fitting scheme are taken with respect to the joint distribution of the sample and the random partition.

\begin{enumerate}[
  label=({CF\arabic*}),
  leftmargin=2.4em
]
\setlength{\abovedisplayskip}{8pt}
\setlength{\belowdisplayskip}{8pt}
\setlength{\abovedisplayshortskip}{6pt}
\setlength{\belowdisplayshortskip}{6pt}

\item\label{CF1} There exist constants $0 < c_1 \leq c_2 < 1$ such that, for $j=1,\ldots,M$,
$$
\frac{N_{j,v}}{N_v} \xrightarrow[v \to \infty]{\P} \alpha_j \in [c_1, c_2].
$$

\item Consider the following three versions.

\begin{enumerate}[
  label=({CF\arabic{enumi},\alph*}),
  leftmargin=3.5em,
  labelsep=0.5em,
  align=left
]
\item\label{CF2a} There exist positive constants $\lambda_{cf}>0$ and $\lambda_{cf}^\star>0$ such that
$$
\lim_{v \to \infty}  \P\left( \min_{k\in U_v} \pi_{k|\cA}\geq \lambda_{cf}\right) = 1, \qquad \lim_{v \to
\infty} \P \left( \min_{k\neq l\in U_v}\pi_{kl|\cA}\geq \lambda^\star_{cf}\right)=1.
$$
\item\label{CF2b} First-order conditional inclusion probabilities satisfy 
$$\P \left( \min_{k\in U_v} \pi_{k\rvert \cA}>0\right)  = 1 - o\left( n_v^{-1}\right).$$
\item\label{CF2c} There exists a constant $\lambda_c>0$ such that, for all $v \in \mathbb{N}$, almost surely,
$$
\min_{k \in U_v} \pi_{k|\cA} \geq \lambda_c.
$$
\end{enumerate}

\item\label{CF3} There exists a positive constant $C_1>0$ such that, for all $v \in \N$,
$$
\E_p\left[\max_{k \in U_v}(\pi_{k|\cA}-\pi_k)^2\right]\leq \dfrac{C_1}{n_v}.
$$
    
\item\label{CF4} Define $\Delta_{kl|\cA} = \pi_{kl|\cA} - \pi_{k|\cA} \pi_{l|\cA}$ for $k, l \in U_v$. There exists a constant $C_{\Delta 2}>0$ such that, for all $v \in \N$,
$$
n_v \max_{k \neq l \in U_v} \rvert \Delta_{kl|\cA}\rvert \leq C_{\Delta 2},
$$
almost surely.
\end{enumerate}

Assumption~\ref{CF1} requires that the fold proportions converge to positive limits below one. This must be ensured when choosing the fold sizes in our constructions. Conditions \ref{CF2a} and \ref{CF3} require the conditional inclusion probabilities to inherit the good behavior of the unconditional ones. Assumption~\ref{CF4} guarantees that the conditional sampling covariances decrease to zero at a rate of at least $\mathcal{O}(n_v^{-1})$, and is used in the analysis of the consistency of the point estimator. This is the conditional analog of what is classically used; see, for example, \cite{breidt2000local}.

Some additional assumptions will also be needed for variance estimation. These are the conditional analogs of classical assumptions in the literature; see, e.g., \cite{breidt2000local}.

\begin{enumerate}[
  label=({CF\arabic*}),
  start=5,
  leftmargin=2.8em
]
\setlength{\abovedisplayskip}{8pt}
\setlength{\belowdisplayskip}{8pt}
\setlength{\abovedisplayshortskip}{6pt}
\setlength{\belowdisplayshortskip}{6pt}
\item\label{CF5} Assume that, almost surely,
    $$
\max_{(i,j,k,l) \in D_{4,N_v}} \big\rvert \E_p \left[
\left(I_iI_j-\pi_{ij|\cA}\right)\left(I_kI_l-\pi_{kl|\cA}\right)\big\rvert \cA\right] \big\rvert=o(1).
$$
\item\label{CF6} Let $\pi_+ := \min\limits_{k\in U_v: \pi_{k\rvert \cA} >0} \pi_{k\rvert \cA}$. Set $\pi_+=1$ if all the conditional inclusion probabilities are zero. Assume that there exists $\delta>0$ such that $$\sup_{v \in \N} \E_p \left[ \dfrac{1}{\pi_+^{2+\delta}}\right] <\infty.$$
\end{enumerate}

The following proposition verifies all these assumptions under SRSWOR.

\begin{proposition}\label{propAssumptions}
Suppose that $p$ is SRSWOR satisfying \ref{D1}. Use Algorithm~\ref{alg:design-aware} with deterministic fold sizes and sample counts such that $n_{j,v}\geq1$ for every $j,v$, and
$$
\frac{N_{j,v}}{N_v}\xrightarrow[v \to \infty]{}\alpha_j\in(0,1), \qquad \max_{1\leq j\leq
M}\left|\frac{n_{j,v}}{N_{j,v}}-\frac{n_v}{N_v}\right| =\mathcal{O}(n_v^{-1/2}).
$$
Then the scheme with $\cA=(\boldsymbol{n},\boldsymbol{\delta})$ is honest and satisfies \ref{CF1}--\ref{CF6}, including \ref{CF2a}--\ref{CF2c}.
\end{proposition}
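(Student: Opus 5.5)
The approach is to show first that, conditional on $\cA=(\boldsymbol n,\boldsymbol\delta)$, the sample produced by SRSWOR followed by Algorithm~\ref{alg:design-aware} is a stratified SRSWOR. The strata are the folds $U_1,\ldots,U_M$, the within-fold sizes are $n_1,\ldots,n_M$, and the folds are independent. Every condition \ref{CF1}--\ref{CF6} then reduces to explicit formulas.

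\textbf{Conditional design.} SRSWOR of size $n$ is exponential on $\cS_n$ with $\boldsymbol\lambda=0$. By Theorem~\ref{theo:condIndExp}, the uniform scheme therefore satisfies (Ind-$\{\boldsymbol n,\boldsymbol\delta\}$). By Theorem~\ref{Theo:eqAwareAgno}, $\P(S=s\mid\boldsymbol\delta_{\rm aware}=\boldsymbol d)=\P(S=s\mid \boldsymbol\delta_{\rm unif}=\boldsymbol d,\boldsymbol n(S,\boldsymbol d)=\boldsymbol n)$. The right-hand side is constant over all $s$ with $|s\cap U_j|=n_j$ for every $j$, because all $n$-subsets are equally likely. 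Hence the conditional design is stratified SRSWOR.

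Since $\boldsymbol n$ is deterministic here, $\sigma(\cA)=\sigma(\boldsymbol\delta)$, so honesty follows. For $k\in U_j$ and $l\in U_q$ this gives
\[
\pi_{k|\cA}=\frac{n_j}{N_j},\qquad
\pi_{kl|\cA}=\frac{n_j(n_j-1)}{N_j(N_j-1)}\ \ (j=q),\qquad
\pi_{kl|\cA}=\frac{n_jn_q}{N_jN_q}\ \ (j\neq q).
\]
The same formulas can be checked directly from the two independent uniform partitions in Algorithm~\ref{alg:design-aware}.

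\textbf{Verifying the conditions.}
\begin{itemize}
\item \ref{CF1} holds deterministically, with $[c_1,c_2]$ taken to contain all $\alpha_j$.
\item By \ref{D1} and the rate assumption, $n_j/N_j=n/N+\mathcal O(n^{-1/2})\to\pi_\star>0$. So $\min_k\pi_{k|\cA}\ge\pi_\star/2$ for large $v$. Since $n_j\ge1$ for all $v$, there is a positive uniform lower bound for every $v$, which gives \ref{CF2b} and \ref{CF2c}.
\item For the second-order part of \ref{CF2a}, $n_j\to\infty$, so $(n_j-1)/(N_j-1)\to\pi_\star$ and the bound $\pi_\star^2/2$ works eventually.
\item \ref{CF3}: $\max_k(\pi_{k|\cA}-\pi_k)^2=\max_j(n_j/N_j-n/N)^2=\mathcal O(n^{-1})$ deterministically.
\item \ref{CF4}: $\Delta_{kl|\cA}=0$ across folds. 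Within a fold, $\Delta_{kl|\cA}=-\frac{n_j}{N_j}\bigl(1-\frac{n_j}{N_j}\bigr)\frac{1}{N_j-1}=\mathcal O(N^{-1})=\mathcal O(n^{-1})$ by \ref{D1} and \ref{CF1}.
\item \ref{CF6}: $\pi_+=\min_j n_j/N_j$ is deterministic and bounded below uniformly in $v$, so every moment of $1/\pi_+$ is bounded.
\end{itemize}

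\textbf{Main obstacle: \ref{CF5}.} Take distinct $i,j,k,l$. If $\{i,j\}$ and $\{k,l\}$ lie in disjoint sets of folds, the expectation vanishes by conditional independence. Otherwise, the plan is to expand
\[
\E\bigl[(I_iI_j-\pi_{ij|\cA})(I_kI_l-\pi_{kl|\cA})\mid\cA\bigr]=\pi_{ijkl|\cA}-\pi_{ij|\cA}\pi_{kl|\cA}.
\]
Then factor $\pi_{ijkl|\cA}$ over folds, and within a fold use the SRSWOR falling-factorial formula $\prod_{r=0}^{t-1}(n_j-r)/(N_j-r)$. Each within-fold factor equals $(n_j/N_j)^t\{1+\mathcal O(1/n_j)\}$. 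Consequently, in every configuration of the four units among the folds, $\pi_{ijkl|\cA}$ and $\pi_{ij|\cA}\pi_{kl|\cA}$ are both $\prod(n_\cdot/N_\cdot)\{1+\mathcal O(n^{-1})\}$ with the same leading product. Their difference is therefore $\mathcal O(n^{-1})=o(1)$ uniformly over 4-tuples.

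The step needing most care is this bookkeeping over the possible fold configurations (all four units in one fold, three in one, two pairs split, and so on), checking that the leading terms cancel in each case. Throughout, $n_j\to\infty$ linearly in $n$ is used.
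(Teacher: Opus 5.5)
Your proposal is correct and follows essentially the same route as the paper. You reduce, via Theorem~\ref{Theo:eqAwareAgno}, to stratified SRSWOR within the folds given $\cA$, read off the explicit conditional inclusion probabilities, and handle \ref{CF5} with the falling-factorial expansion: both $\pi_{ijkl|\cA}$ and $\pi_{ij|\cA}\pi_{kl|\cA}$ equal the product of first-order probabilities up to $\mathcal{O}(n_v^{-1})$. Two of your steps are unnecessary. The appeal to Theorem~\ref{theo:condIndExp} is redundant, since your direct uniformity argument already gives the factorization. The configuration-by-configuration bookkeeping you flag as the main obstacle is also not needed, because that single uniform expansion covers every configuration at once, as in the paper.
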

\begin{proof}
See Appendix~\ref{proofpropAssumptions}.
\end{proof}
The allocation condition holds, for example, when $n_{j,v}=n_vN_{j,v}/N_v+\mathcal{O}(1)$. Similar results are expected for Poisson, conditional Poisson, and stratified sampling under suitable conditions on the design and the cross-fitting scheme.

\section{Conditional cross-fitted model-assisted estimation}
\label{sec:cdtMA}

\subsection{Point estimation}

We now define model-assisted estimators based on the cross-fitting procedures introduced above. For each fold $j=1,\ldots,M$, a prediction rule $\widehat{m}_j$ is fitted using the observations $\{(\mathbf{x}_k,y_k):k\in S\backslash S_j\}$ outside the fold, and a model-assisted estimator is computed using the units in $U_j$. Any tuning or variable selection based on the survey variable must also use only the training observations in $S\backslash S_j$. The fold-specific estimators are then aggregated. The prediction rule is assumed to return finite values, including when the training sample is empty. Any randomness used to fit the prediction rules is treated as fixed.

We first introduce the \emph{conditional model-assisted estimator}, which uses the conditional inclusion probabilities $\pi_{k|\cA}:=\P(I_k=1\mid\cA)$ for $k\in U$. For each $j$ with $N_j>0$, define
$$
\widehat{\mu}_j(\widehat{m}_j,\pi_{|\cA}) = \frac{1}{N_j} \left\{ \sum_{k\in U_j}\widehat{m}_j(\mathbf{x}_k) +
\sum_{k\in S_j} \frac{y_k-\widehat{m}_j(\mathbf{x}_k)}{\pi_{k|\cA}} \right\}.
$$
If $N_j=0$, set $\widehat{\mu}_j(\widehat{m}_j,\pi_{|\cA})=0$. The corresponding cross-fitted model-assisted estimator is
\begin{equation}\label{mod_ass_Lu}
\widehat{\mu}_{cf}(\widehat{m},\pi_{|\cA})
=
\sum_{j=1}^M
\frac{N_j}{N}
\widehat{\mu}_j(\widehat{m}_j,\pi_{|\cA}).
\end{equation}

\begin{remark}
The estimator in \eqref{mod_ass_Lu} is well defined with probability one because
$\P(\pi_{k|\cA}>0\text{ for every }k\in S)=1$. This does not imply that $\pi_{k|\cA}>0$ for every $k\in U$. For example, under SRSWOR with the uniform cross-fitting scheme of Algorithm~\ref{alg:design-agnostic}, $\pi_{k|\cA}=n_j/N_j$ for $k\in U_j$. A fold may contain no sampled unit, in which case $n_j=0$ and $\pi_{k|\cA}=0$ throughout that fold. Nevertheless, no division by zero occurs in \eqref{mod_ass_Lu}, because $S_j=\emptyset$ whenever $n_j=0$.
\end{remark}

Although \eqref{mod_ass_Lu} is defined as a weighted combination of fold-specific estimators, it can also be written in the usual model-assisted form. Let $\widehat{m}^{(-k)}$ denote the prediction rule fitted without using the fold containing unit $k$. Then
\begin{equation}\label{SimplifiedVer}
\widehat{\mu}_{cf}(\widehat{m},\pi_{|\cA})
=
\frac{1}{N}
\left\{
\sum_{k\in U}\widehat{m}^{(-k)}(\mathbf{x}_k)
+
\sum_{k\in S}
\frac{y_k-\widehat{m}^{(-k)}(\mathbf{x}_k)}
{\pi_{k|\cA}}
\right\}.
\end{equation}

\begin{proposition}\label{prop:unb}
Assume that the cross-fitting scheme is honest and that $\pi_{k|\cA}>0$ for every $k\in U$ almost surely. Then, $\E_p \left\{
\widehat{\mu}_{cf}(\widehat{m},\pi_{|\cA})
\right\}=\mu.$
\end{proposition}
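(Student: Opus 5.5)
We condition on $\cA$ and use the tower property. From the representation \eqref{SimplifiedVer},
$$
\widehat{\mu}_{cf}(\widehat{m},\pi_{|\cA})-\mu=\frac{1}{N}\sum_{j=1}^M\sum_{k\in U_j}\left(\frac{I_k}{\pi_{k|\cA}}-1\right)\left\{y_k-\widehat m_j(\mathbf{x}_k)\right\},
$$
using $\sum_{k\in U}y_k=N\mu$. The positivity assumption on $\pi_{k|\cA}$ makes each summand well defined. The fold $U_j$, and hence the assignment $k\mapsto j$, is $\sigma(\cA)$-measurable because $\sigma(\boldsymbol\delta)\subseteq\sigma(\cA)$. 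So it suffices to show that, for each $j$ and each $k\in U_j$,
$$
\E_p\left[\left(\frac{I_k}{\pi_{k|\cA}}-1\right)\left\{y_k-\widehat m_j(\mathbf{x}_k)\right\}\,\Big|\,\cA\right]=0 \quad\text{a.s.}
$$

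The key step is to exploit honesty. The rule $\widehat m_j$ is a deterministic function of $\{(\mathbf{x}_l,y_l):l\in S\setminus S_j\}$; any auxiliary randomness is treated as fixed. Given $\cA$, the set $U\setminus U_j$ is fixed, so $\widehat m_j(\mathbf{x}_k)$ is a measurable function of $(\cA,\{I_l\}_{l\in U\setminus U_j})$. The first factor is a function of $(\cA, I_k)$ with $k\in U_j$, since $\pi_{k|\cA}$ is $\cA$-measurable.

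By (Ind-$\cA$), the blocks $\{I_l\}_{l\in U_j}$ and $\{I_l\}_{l\in U\setminus U_j}$ are conditionally independent given $\cA$. Hence the conditional expectation of the product factorizes into
$$
\E_p\!\left[\tfrac{I_k}{\pi_{k|\cA}}-1\,\Big|\,\cA\right]\cdot\E_p\!\left[y_k-\widehat m_j(\mathbf{x}_k)\,\Big|\,\cA\right].
$$
The first factor equals $\pi_{k|\cA}/\pi_{k|\cA}-1=0$. Taking outer expectations gives $\E_p\{\widehat{\mu}_{cf}\}=\mu$.

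The main obstacle is technical. The factorization needs integrability, or at least a well-defined conditional expectation, of $\widehat m_j(\mathbf{x}_k)$. The paper only assumes finite values, and the design has finitely many samples. If $\cA$ also takes finitely many values, as for partitions of a finite population, everything is a finite sum and no measure-theoretic issue arises. I would state this explicitly. If $\cA$ could carry extra continuous randomness, I would instead condition on $\cA$ and use boundedness of $\widehat m_j$ over the finitely many possible training samples, for each fixed value of the fitting randomness.

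A second point to check is that the random membership $k\in U_j$ is handled correctly. I would write the sum as $\sum_j\sum_{k\in U}\delta_{k,j}(\cdots)$ and pull the $\cA$-measurable factors $\delta_{k,j}$ out of the conditional expectation.
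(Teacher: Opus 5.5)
Your proposal is correct and takes essentially the same route as the paper: condition on $\cA$, note that $\widehat m_j(\bx_k)$ depends only on $\cA$ and the out-of-fold indicators, and use (Ind-$\cA$) to reduce the conditional mean of the residual correction to zero via $\pi_{k|\cA}/\pi_{k|\cA}=1$. The only difference is packaging: the paper conditions on the finer $\sigma$-field generated by $\cA$ and $\{I_l\}_{l\in U\setminus U_j}$, uses $\E_p[I_k\mid\cA,\{I_l\}_{l\in U\setminus U_j}]=\pi_{k|\cA}$, and applies the tower property instead of factorizing a product, which makes your integrability remark unnecessary.
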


\begin{proof}
See Appendix~\ref{Proofprop:unb}.
\end{proof}

Proposition~\ref{prop:unb} establishes exact design unbiasedness under \textnormal{(Ind-$\cA$)} and positivity of the conditional inclusion probabilities, regardless of the machine learning method used to construct the predictions. Thus, flexible prediction methods can be combined with exact finite-sample design validity. Traditional model-assisted estimators, including the GREG estimator, generally have a small-sample bias because the prediction rule is estimated from the same sample. Under the stated conditions, sample splitting removes this source of bias.

For ease of exposition, we will generally assume below that the first-order conditional inclusion probabilities are positive. Some of our theoretical results also cover cases where conditional inclusion probabilities may be equal to zero, although this requires additional technical arguments. In practice, cross-fitting schemes ensuring positivity should be preferred.

We next study the asymptotic properties of the conditional model-assisted estimator when the fitted prediction rule $\widehat{m}$ converges to a deterministic target $\widetilde{m}$. This target often corresponds to the prediction rule fitted at the population level, although this interpretation is not required. We impose the following conditions.
\begin{enumerate}[
  label=({M\arabic*}),
  leftmargin=2.5em,
  labelsep=0.5em,
  align=left
]
\setlength{\abovedisplayskip}{8pt}
\setlength{\belowdisplayskip}{8pt}
\setlength{\abovedisplayshortskip}{6pt}
\setlength{\belowdisplayshortskip}{6pt}
\item\label{M1} The fitted prediction rule $\widehat{m}$ converges to its oracle target $\widetilde{m}$ in the following finite-population $L_2$ sense:
$$
\lim_{v\rightarrow\infty} \frac{1}{N_v} \sum_{k\in U_v} \E_p\left[ \left\{ \widehat{m}^{(-k)}(\bx_k) -
\widetilde{m}(\bx_k) \right\}^2 \right] = 0.
$$

\item Let $\widetilde{e}_k:=y_k-\widetilde{m}(\bx_k)$ for $k\in U_v$, and consider the following two moment conditions:
\begin{enumerate}[
  label=({M\arabic{enumi},\alph*}),
  leftmargin=3.5em,
  labelsep=0.5em,
  align=left
]
\item\label{M2a}
The oracle predictor $\widetilde{m}$ satisfies
$$
\limsup_{v\rightarrow\infty} \frac{1}{N_v} \sum_{k\in U_v} \widetilde{e}_k^{2} < \infty.
$$

\item\label{M2b}
The oracle predictor $\widetilde{m}$ satisfies
$$
\limsup_{v\rightarrow\infty} \frac{1}{N_v} \sum_{k\in U_v} \widetilde{e}_k^{4} < \infty.
$$
\end{enumerate}
\end{enumerate}

Write $\widehat{\mu}_{\mathrm{ma}}(\widetilde m,\pi_{|\cA})$ for the estimator in \eqref{SimplifiedVer} with every fitted prediction replaced by $\widetilde m$. Theorem~\ref{th:equivalent1} establishes its first-order equivalence with the fitted estimator.

\begin{theorem}\label{th:equivalent1}
Assume \textnormal{(Ind-$\cA$)}, \ref{D1}, \ref{CF2a}, \ref{CF4}, and \ref{M1}. Then,
$$
\sqrt{n_v} \left\{ \widehat{\mu}_{cf}(\widehat{m},\pi_{|\cA}) -
\widehat{\mu}_{\mathrm{ma}}(\widetilde{m},\pi_{|\cA}) \right\} = o_\P(1).
$$
\end{theorem}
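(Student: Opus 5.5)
The plan is to write the difference as a sum of fold-wise centered sums of prediction errors, and then bound its second moment conditionally on $\cA$ using honesty.

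Set $h_k:=\widehat m^{(-k)}(\bx_k)-\widetilde m(\bx_k)$. From \eqref{SimplifiedVer},
$$
\widehat{\mu}_{cf}(\widehat{m},\pi_{|\cA})-\widehat{\mu}_{\mathrm{ma}}(\widetilde{m},\pi_{|\cA})
=-\frac{1}{N}\sum_{j=1}^M\sum_{k\in U_j}\Big(\frac{I_k}{\pi_{k|\cA}}-1\Big)h_k
=: -\frac1N\sum_{j=1}^M T_j .
$$
On the event $E_v:=\{\min_k\pi_{k|\cA}\ge\lambda_{cf},\ \min_{k\ne l}\pi_{kl|\cA}\ge\lambda^\star_{cf}\}$ everything is well defined, and by \ref{CF2a} $\P(E_v)\to1$. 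It therefore suffices to show $n_v N^{-2}\E[T_j^2\1b_{E_v}]\to0$ for each fixed $j$. Since $M$ is fixed, the triangle inequality then finishes the argument.

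\textbf{Step 1 (conditioning).} For $k\in U_j$, the rule $\widehat m^{(-k)}=\widehat m_j$ is fitted on $S\setminus S_j$, so it is a function of $\{I_l\}_{l\notin U_j}$, of $\cA$ (which determines the folds), and of the fixed population values. Condition on $\mathcal G_j:=\sigma(\cA,\{I_l\}_{l\notin U_j})$. By (Ind-$\cA$), the law of $\{I_k\}_{k\in U_j}$ given $\mathcal G_j$ equals its law given $\cA$, with first- and second-order probabilities $\pi_{k|\cA}$ and $\pi_{kl|\cA}$. The $h_k$, $k\in U_j$, are $\mathcal G_j$-measurable, and $E_v\in\sigma(\cA)$. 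Hence $\E[T_j\mid\mathcal G_j]=0$ and
$$
\E[T_j^2\mid\mathcal G_j]=\sum_{k\in U_j}\frac{1-\pi_{k|\cA}}{\pi_{k|\cA}}h_k^2+\sum_{k\ne l\in U_j}\frac{\Delta_{kl|\cA}}{\pi_{k|\cA}\pi_{l|\cA}}h_kh_l .
$$

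\textbf{Step 2 (bounding).} On $E_v$, the diagonal term is at most $\lambda_{cf}^{-1}\sum_{k\in U}h_k^2$. For the off-diagonal term, \ref{CF4} gives $|\Delta_{kl|\cA}|\le C_{\Delta2}/n_v$, and then Cauchy--Schwarz gives
$$
\Big|\sum_{k\ne l}\cdots\Big|\le\frac{C_{\Delta2}}{\lambda_{cf}^2n_v}\Big(\sum_{k}|h_k|\Big)^2\le\frac{C_{\Delta2}N}{\lambda_{cf}^2n_v}\sum_k h_k^2 .
$$
By \ref{D1}, $N/n_v$ is bounded, so $\E[T_j^2\1b_{E_v}]\le C\,\E\sum_{k\in U}h_k^2$. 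Consequently
$$
\frac{n_v}{N^2}\E[T_j^2\1b_{E_v}]\le C\frac{n_v}{N}\cdot\frac1N\sum_{k\in U}\E[h_k^2]\to0
$$
by \ref{D1} and \ref{M1}. Chebyshev's inequality on $E_v$, together with $\P(E_v^c)\to0$, gives the $o_\P(1)$ conclusion.

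\textbf{Main obstacle.} The delicate point is Step 1. We must justify that conditioning on $\mathcal G_j$ leaves the within-fold design given by $\pi_{\cdot|\cA}$ and $\pi_{\cdot\cdot|\cA}$. This is exactly where (Ind-$\cA$) and $\sigma(\boldsymbol\delta)\subseteq\sigma(\cA)$ are used. We also need $E_v$ to be $\cA$-measurable so that truncating does not break the conditional mean-zero property. Both hold because the $\pi_{\cdot|\cA}$ are functions of $\cA$. A minor additional care point is the randomness of $N_j$ and the folds, which is handled entirely inside the conditioning.
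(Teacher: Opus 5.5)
Your proposal is correct and follows essentially the paper's argument: the paper truncates on the $\cA$-measurable event $\{\min_k\pi_{k|\cA}\ge\lambda_{cf}\}$. It then packages your Steps~1--2 into a second-moment bound, Lemma~\ref{lemma:secondMoment}, which conditions on $\cA$ and the out-of-fold indicators exactly as you do, and concludes with Markov's inequality and \ref{M1}. The differences are cosmetic: you combine folds with the triangle inequality rather than Cauchy--Schwarz, and you use \ref{D1} to bound $N_v/n_v$, whereas the paper lets the factor $n_v$ cancel using only $n_v\le N_v$.
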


\begin{proof}
See Appendix~\ref{Proofth:equivalent1}.
\end{proof}

The next result shows that the oracle estimator is $\sqrt{n_v}$-consistent for $\mu$. Together, the two results imply the $\sqrt{n_v}$-consistency of $\widehat{\mu}_{cf}(\widehat{m},\pi_{|\cA})$.

\begin{theorem}\label{theo:consCond}
Assume \ref{D1}, \ref{CF2a}, \ref{CF4}, and \ref{M2a}. Then,
$$
\sqrt{n_v} \left\{ \widehat{\mu}_{\mathrm{ma}}(\widetilde{m},\pi_{|\cA}) - \mu \right\} = \mathcal{O}_\P(1).
$$
\end{theorem}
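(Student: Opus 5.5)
The plan is to work conditionally on $\cA$, where the conditional inclusion probabilities act as fixed weights, and then remove the conditioning with a Chebyshev argument. Since $\widetilde m$ is deterministic, write
$$
\widehat{\mu}_{\mathrm{ma}}(\widetilde{m},\pi_{|\cA}) - \mu = \frac{1}{N}\sum_{k\in U}\Big(\frac{I_k}{\pi_{k|\cA}}-1\Big)\widetilde e_k .
$$
This holds on the event $E_v:=\{\min_k \pi_{k|\cA}\ge\lambda_{cf},\ \min_{k\ne l}\pi_{kl|\cA}\ge\lambda^\star_{cf}\}$, which has probability tending to one by \ref{CF2a}. Because only an $\mathcal{O}_\P(1)$ statement is needed, it suffices to control $\sqrt{n_v}$ times this quantity on $E_v$. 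For an event of vanishing probability, $\P(\sqrt{n_v}|\cdot|>K)\le \P(\cdot\cap E_v)+\P(E_v^c)$.

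The first step is the conditional mean. On $E_v$, $\E_p[I_k\mid\cA]=\pi_{k|\cA}$ by definition, so $\E_p[\widehat{\mu}_{\mathrm{ma}}(\widetilde m,\pi_{|\cA})-\mu\mid\cA]=0$. Here $E_v$ is $\sigma(\cA)$-measurable, and honesty is not needed for this step.

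The second step is the conditional variance on $E_v$. It equals
$$
\frac{1}{N^2}\sum_{k}\frac{1-\pi_{k|\cA}}{\pi_{k|\cA}}\widetilde e_k^2+\frac{1}{N^2}\sum_{k\ne l}\frac{\Delta_{kl|\cA}}{\pi_{k|\cA}\pi_{l|\cA}}\widetilde e_k\widetilde e_l .
$$
The diagonal term is at most $\lambda_{cf}^{-1}N^{-2}\sum_k\widetilde e_k^2$. By \ref{M2a}, this is $\mathcal{O}(N^{-1})$, which is $\mathcal{O}(n_v^{-1})$ by \ref{D1}.

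For the off-diagonal term, use \ref{CF4} and the lower bound on $\pi_{k|\cA}$ to bound it by
$$
\frac{C_{\Delta2}}{n_v\lambda_{cf}^2}\,\frac{1}{N^2}\Big(\sum_k|\widetilde e_k|\Big)^2\le \frac{C_{\Delta2}}{n_v\lambda_{cf}^2}\,\frac1N\sum_k\widetilde e_k^2 .
$$
The last inequality is Cauchy--Schwarz. Again by \ref{M2a}, this is $\mathcal{O}(n_v^{-1})$. Hence $n_v\,\V_p(\cdot\mid\cA)\1b_{E_v}\le C'$ almost surely, for a deterministic constant $C'$.

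The final step is the conditional Chebyshev inequality:
$$
\P\big(\sqrt{n_v}|\widehat{\mu}_{\mathrm{ma}}-\mu|>K,\ E_v\big)=\E\big[\1b_{E_v}\P(\cdot\mid\cA)\big]\le C'/K^2 .
$$
Adding $\P(E_v^c)\to0$ gives tightness.

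I expect the main obstacle to be bookkeeping rather than substance. Two points need care. First, the estimator must be well defined off $E_v$, which Remark 1 guarantees since $\pi_{k|\cA}>0$ on $S$ almost surely. Second, the off-diagonal sum has $N^2$ terms, so the $1/n_v$ rate in \ref{CF4} is exactly what is required, combined with $N\asymp n_v$ from \ref{D1}. The lower bound on $\pi_{kl|\cA}$ in \ref{CF2a} is not actually used, because the variance is expressed through $\Delta_{kl|\cA}$ and not through the Horvitz--Thompson-type ratio.
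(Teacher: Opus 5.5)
Your proof is correct and follows the paper's argument. The paper likewise restricts to the $\cA$-measurable event where $\min_k\pi_{k|\cA}\ge\lambda_{cf}$, uses conditional unbiasedness, and bounds $n_v$ times the conditional variance by a deterministic constant via \ref{CF4}, Cauchy--Schwarz and \ref{M2a}; it then concludes with a Chebyshev/Markov bound plus $\P(\cE_v^c)\to 0$. The only cosmetic differences are that you apply Chebyshev conditionally before integrating, whereas the paper multiplies by the indicator and applies Markov unconditionally, and that you include the unused second-order lower bound in the event.
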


\begin{proof}
See Appendix~\ref{Prooftheo:consCond}.
\end{proof}
\subsection{Variance estimation}

Consider the oracle estimator $\widehat{\mu}_{\mathrm{ma}}(\widetilde{m},\pi_{|\cA})$, and set $\pi_{kk|\cA}=\pi_{k|\cA}$. If $\pi_{k|\cA}>0$ for every $k\in U_v$ almost surely, then the estimator is conditionally design-unbiased given $\cA$. The law of total variance gives
$$
\begin{aligned}
\V_p \left\{
\widehat{\mu}_{\mathrm{ma}}(\widetilde{m},\pi_{|\cA})
\right\}
=
\E_p \left[
\V_p \left\{
\widehat{\mu}_{\mathrm{ma}}(\widetilde{m},\pi_{|\cA})
\mid\cA
\right\}
\right] 
=
\E_p \left[
\frac{1}{N_v^2}
\sum_{k\in U_v}\sum_{l\in U_v}
\Delta_{kl|\cA}
\frac{\widetilde e_k}{\pi_{k|\cA}}
\frac{\widetilde e_l}{\pi_{l|\cA}}
\right],
\end{aligned}
$$
where $\widetilde e_k=y_k-\widetilde m(\bx_k)$. If, in addition, $\pi_{kl|\cA}>0$ for every $k,l\in U_v$ almost surely, a design-unbiased estimator of this variance would be
$$
\frac{1}{N_v^2} \sum_{k\in S_v}\sum_{l\in S_v} \frac{\Delta_{kl|\cA}}{\pi_{kl|\cA}} \frac{\widetilde
e_k}{\pi_{k|\cA}} \frac{\widetilde e_l}{\pi_{l|\cA}}.
$$
Because the oracle residuals are unknown, we replace them with the cross-fitted residuals $\widehat e_{(-k)}:=y_k-\widehat m^{(-k)}(\bx_k)$. These residuals are computed using prediction rules fitted without the fold containing unit $k$. Unlike ordinary in-sample residuals, they are therefore not artificially reduced by overfitting. This leads to
$$
\widehat V_{cf}(\widehat m,\pi_{|\cA}) := \frac{1}{N_v^2} \sum_{k\in S_v}\sum_{l\in S_v}
\frac{\Delta_{kl|\cA}}{\pi_{kl|\cA}} \frac{\widehat e_{(-k)}}{\pi_{k|\cA}} \frac{\widehat
e_{(-l)}}{\pi_{l|\cA}}.
$$
No zero denominator occurs in this sum almost surely, since $I_kI_l=1$ implies $\pi_{kl|\cA}>0$, and hence $\pi_{k|\cA}>0$ and $\pi_{l|\cA}>0$.
This estimator is generally not design-unbiased for either the variance of the feasible estimator or that of the oracle estimator. The next theorem allows the conditional inclusion probabilities to be zero on events whose probabilities decrease sufficiently fast, and establishes consistency for
$\V_p\{\widehat{\mu}_{\mathrm{ma}}(\widetilde m,\pi_{|\cA})\}$.
\begin{theorem}\label{theo:consistencyCondVar}
Assume \ref{CF2a}, \ref{CF2b}, \ref{CF4}, \ref{CF5}, \ref{CF6}, \ref{M1}, and \ref{M2b}. Suppose, moreover, that there exists a constant $c>0$ such that
$$
n_v \V_p \left\{ \widehat{\mu}_{\mathrm{ma}}(\widetilde m,\pi_{|\cA}) \mid\cA \right\} \xrightarrow[v \to
\infty]{\P} c.
$$
Then,
$$
n_v \left[ \widehat V_{cf}(\widehat m,\pi_{|\cA}) - \V_p \left\{ \widehat{\mu}_{\mathrm{ma}}(\widetilde
m,\pi_{|\cA}) \right\} \right] = o_\P(1).
$$
\end{theorem}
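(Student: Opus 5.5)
The plan is to split the error into three pieces and handle each one separately:
\[
\widehat V_{cf}(\widehat m,\pi_{|\cA})-\V_p\{\widehat\mu_{\mathrm{ma}}(\widetilde m,\pi_{|\cA})\}
=(\widehat V_{cf}(\widehat m)-\widehat V_{cf}(\widetilde m))+(\widehat V_{cf}(\widetilde m)-V_{\cA})+(V_{\cA}-\V_p\{\widehat\mu_{\mathrm{ma}}(\widetilde m,\pi_{|\cA})\}).
\]
Here $\widehat V_{cf}(\widetilde m)$ is the same quadratic form evaluated at the oracle residuals $\widetilde e_k$, and
\[
V_{\cA}=\V_p\{\widehat\mu_{\mathrm{ma}}(\widetilde m,\pi_{|\cA})\mid\cA\}.
\]
Everything is multiplied by $n_v$. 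Throughout I will work on the event
\[
E_v=\Bigl\{\min_k\pi_{k|\cA}\ge\lambda_{cf},\ \min_{k\ne l}\pi_{kl|\cA}\ge\lambda^\star_{cf}\Bigr\},
\]
which has probability tending to one by \ref{CF2a}. Convergence in probability only requires control on events of probability tending to one. On $E_v$ the denominators are bounded, and the conditional design-unbiasedness remark preceding the theorem applies whenever all $\pi_{k|\cA}>0$. The role of \ref{CF2b} and \ref{CF6} is to handle the complementary event: there the weights $1/\pi_+$ may blow up, and I use the moment bound in \ref{CF6} together with $\P(\min\pi_{k|\cA}=0)=o(n_v^{-1})$ via Hölder.

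\textbf{Third term.} By the law of total variance, $\V_p\{\widehat\mu_{\mathrm{ma}}\}=\E_p[V_{\cA}]$, so this term is $n_vV_{\cA}-\E_p[n_vV_{\cA}]$. Given $n_vV_{\cA}\to c$ in probability, it suffices to show that $n_vV_{\cA}$ is uniformly integrable. On $E_v$, \ref{CF4} and \ref{M2a} (implied by \ref{M2b}) give
\[
n_vV_{\cA}\le C\, n_vN_v^{-2}\Bigl(\sum_k\widetilde e_k^2+C_{\Delta2}n_v^{-1}\bigl(\textstyle\sum_k|\widetilde e_k|\bigr)^2\Bigr)=\mathcal O(1),
\]
using $n_v\le N_v$. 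Off $E_v$ the bound is polynomial in $1/\pi_+$, so \ref{CF6} together with the vanishing probability of the exceptional event makes its contribution negligible in $L_1$. Hence $\E_p[n_vV_{\cA}]\to c$, and the third term is $o_\P(1)$.

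\textbf{Second term.} This is the conditional Horvitz--Thompson-type variance estimator with fixed residuals, so I will show its conditional mean-square error given $\cA$ is $o(n_v^{-2})$. Its conditional mean is $V_{\cA}$, since $\pi_{kl|\cA}>0$ on $E_v$. Its conditional variance is a fourfold sum with coefficients
\[
\frac{\Delta_{ij|\cA}\Delta_{kl|\cA}}{\pi_{ij|\cA}\pi_{kl|\cA}\pi_{i|\cA}\pi_{j|\cA}\pi_{k|\cA}\pi_{l|\cA}}\;\E_p[(I_iI_j-\pi_{ij|\cA})(I_kI_l-\pi_{kl|\cA})\mid\cA]\;\widetilde e_i\widetilde e_j\widetilde e_k\widetilde e_l .
\]
I will split it by the number of distinct indices, exactly as in \citet{breidt2000local}. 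The distinct quadruples use \ref{CF5} together with the factor $\Delta^2=\mathcal O(n_v^{-2})$ from \ref{CF4}. Diagonal and partially diagonal terms are bounded via \ref{M2b} and Cauchy--Schwarz. This yields $n_v^2\,\E[(\widehat V_{cf}(\widetilde m)-V_{\cA})^2\mid\cA]\to0$ on $E_v$, and Chebyshev finishes this term.

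\textbf{First term (main obstacle).} Write $\widehat e_{(-k)}=\widetilde e_k-r_k$ with $r_k=\widehat m^{(-k)}(\bx_k)-\widetilde m(\bx_k)$. The difference of the two quadratic forms splits into cross terms in $\widetilde e\,r$ and a term in $r\,r$. Diagonal parts ($k=l$) are bounded on $E_v$ by
\[
n_vN_v^{-2}\sum_{k\in S}\bigl(2|\widetilde e_k r_k|+r_k^2\bigr)\lesssim\Bigl(N_v^{-1}\textstyle\sum_k\widetilde e_k^2\Bigr)^{1/2}\Bigl(N_v^{-1}\textstyle\sum_k r_k^2\Bigr)^{1/2}+N_v^{-1}\textstyle\sum_k r_k^2,
\]
which is $o_\P(1)$ by \ref{M1} and Markov's inequality. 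Off-diagonal parts carry the factor $|\Delta_{kl|\cA}|\le C_{\Delta2}/n_v$, so they are at most $C\,N_v^{-2}(\sum_k|\widetilde e_k|)(\sum_k|r_k|)$ and $C\,N_v^{-2}(\sum_k|r_k|)^2$. These are $o_\P(1)$ by the same Cauchy--Schwarz and \ref{M1} argument.

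The difficulty I expect is that $r_k$ depends on the sample, so the off-diagonal crude bound loses nothing only because $\Delta=\mathcal O(n_v^{-1})$ compensates the $n_v^2$ pairs. The careful point is handling the event where some $\pi_{kl|\cA}$ or $\pi_{k|\cA}$ is small. There I would again restrict to $E_v$, and use \ref{CF6} with Hölder (exponent $2+\delta$) plus \ref{CF2b} to show the remainder vanishes in probability.
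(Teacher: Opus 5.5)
Your three-way split matches the paper's. Your first term is its $B_v$, your second is its $A_v$ (a conditional version of the Breidt--Opsomer variance-estimator argument), and your third is its $T_{2v}$. The first two are handled essentially as in the paper, and restricting to your event $E_v$ is all they need.

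The gap is in the third term. You assert $\V_p\{\widehat\mu_{\mathrm{ma}}(\widetilde m,\pi_{|\cA})\}=\E_p[V_{\cA}]$ by the law of total variance. That identity requires the oracle to be conditionally unbiased given $\cA$ on the whole sample space, i.e.\ $\pi_{k|\cA}>0$ for all $k$ almost surely. This theorem deliberately does not assume that; it assumes only \ref{CF2b}. Units with $\pi_{k|\cA}=0$ still enter $\sum_{k\in U_v}\widetilde m(\bx_k)$ but are never sampled, so
\[
\E_p\bigl[\widehat\mu_{\mathrm{ma}}(\widetilde m,\pi_{|\cA})\mid\cA\bigr]-\mu=-\frac{1}{N_v}\sum_{k\in U_v:\,\pi_{k|\cA}=0}\widetilde e_k .
\]
This is genuinely random, for example under SRSWOR with the uniform scheme when a fold receives no sampled unit. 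Your third term therefore equals $n_vV_{\cA}-\E_p(n_vV_{\cA})-n_v\V_p\{\E_p[\widehat\mu_{\mathrm{ma}}(\widetilde m,\pi_{|\cA})\mid\cA]\}$. Your uniform-integrability argument only disposes of the first difference.

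The missing step is exactly where the rate in \ref{CF2b} is used. Let $\cF_v=\{\min_{k}\pi_{k|\cA}>0\}$ and $C_e=\sup_v N_v^{-1}\sum_{k}\widetilde e_k^2<\infty$ (finite by \ref{M2b}). Cauchy--Schwarz gives $|\E_p[\widehat\mu_{\mathrm{ma}}(\widetilde m,\pi_{|\cA})\mid\cA]-\mu|\le C_e^{1/2}\mathds{1}_{\cF_v^c}$, so
\[
n_v\V_p\bigl\{\E_p[\widehat\mu_{\mathrm{ma}}(\widetilde m,\pi_{|\cA})\mid\cA]\bigr\}\le n_v\,\E_p\bigl[\{\E_p[\widehat\mu_{\mathrm{ma}}(\widetilde m,\pi_{|\cA})\mid\cA]-\mu\}^2\bigr]\le n_vC_e\P(\cF_v^c)=o(1).
\]
Merely having $\P(\cF_v^c)\to0$ would not give this.

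The uses you sketch for \ref{CF2b} ``via H\"older'' are unnecessary. \ref{CF6} alone yields uniform integrability, because $n_vV_{\cA}\le C\pi_+^{-2}$ everywhere by \ref{CF4} and \ref{M2b}. In the first two terms, convergence in probability only needs $\P(E_v^c)\to0$, which \ref{CF2a} provides. With the bound on the between-$\cA$ variance added, your argument coincides with the paper's proof.
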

\begin{proof}
See Appendix~\ref{Prooftheo:consistencyCondVar}.
\end{proof}
The preceding results can now be combined to establish the asymptotic distribution of the feasible estimator. If the oracle estimator is asymptotically normal, as expected under standard conditions, then the feasible estimator is asymptotically normal as well. Studentized statistics and confidence intervals below are defined arbitrarily when the estimated variance is nonpositive.
\begin{corollary}\label{coro1}
Suppose that the conditions of Theorems~\ref{th:equivalent1}, \ref{theo:consCond}, and \ref{theo:consistencyCondVar} hold. If
$$
\frac{ \widehat{\mu}_{\mathrm{ma}}(\widetilde m,\pi_{|\cA})-\mu }{ \sqrt{ \V_p \left\{
\widehat{\mu}_{\mathrm{ma}}(\widetilde m,\pi_{|\cA}) \right\} } } \xrightarrow[v \to \infty]{\cL} \cN(0,1),
$$
then
$$
\frac{ \widehat{\mu}_{cf}(\widehat m,\pi_{|\cA})-\mu }{ \sqrt{ \widehat V_{cf}(\widehat m,\pi_{|\cA}) } }
\xrightarrow[v \to \infty]{\cL} \cN(0,1).
$$
\end{corollary}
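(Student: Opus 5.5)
The argument is a Slutsky-type combination of the three earlier results. Write $\widehat\mu_{cf}:=\widehat{\mu}_{cf}(\widehat m,\pi_{|\cA})$, $\widetilde\mu:=\widehat{\mu}_{\mathrm{ma}}(\widetilde m,\pi_{|\cA})$, $V:=\V_p(\widetilde\mu)$ and $\widehat V:=\widehat V_{cf}(\widehat m,\pi_{|\cA})$. Decompose
$$
\frac{\widehat\mu_{cf}-\mu}{\sqrt{\widehat V}}
=\sqrt{\frac{V}{\widehat V}}\left\{\frac{\widetilde\mu-\mu}{\sqrt{V}}+\frac{\sqrt{n_v}(\widehat\mu_{cf}-\widetilde\mu)}{\sqrt{n_vV}}\right\}.
$$
The first term inside the braces converges to $\cN(0,1)$ by hypothesis. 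The conclusion then follows from Slutsky's lemma once we show two things. First, $n_vV$ is bounded away from zero, so that the second term is $o_\P(1)$ by Theorem~\ref{th:equivalent1}. Second, $V/\widehat V\to1$ in probability.

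The key step is to identify the limit of $n_vV$. By the law of total variance displayed before Theorem~\ref{theo:consistencyCondVar}, $n_vV=\E_p[n_v\V_p(\widetilde\mu\mid\cA)]$, using conditional unbiasedness of $\widetilde\mu$. The hypothesis of Theorem~\ref{theo:consistencyCondVar} gives $n_v\V_p(\widetilde\mu\mid\cA)\to c>0$ in probability. To pass to the expectation, we need uniform integrability of $n_v\V_p(\widetilde\mu\mid\cA)$. We get it by bounding this quantity using \ref{CF4}, \ref{M2a} (or \ref{M2b}) and the lower bound on conditional inclusion probabilities. Specifically,
$$
n_v\V_p(\widetilde\mu\mid\cA)\lesssim \frac{n_v}{N_v^2}\sum_k\frac{\widetilde e_k^2}{\pi_{k|\cA}}+\frac{C_{\Delta2}}{N_v^2}\Big(\sum_k\frac{|\widetilde e_k|}{\pi_{k|\cA}}\Big)^2 .
$$
On the event of \ref{CF2a} this is $\mathcal O(1)$ by \ref{D1} and \ref{M2a}. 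Off that event, the moment bound \ref{CF6} on $1/\pi_+$, combined with Hölder's inequality and the vanishing probability of the bad event, controls the contribution. This yields $n_vV\to c$. I expect this uniform-integrability step to be the main technical obstacle. If it proves awkward, the fallback is to work directly with the conditional-variance normalization and note that $n_vV$ and $n_v\V_p(\widetilde\mu\mid\cA)$ share the limit $c$.

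Given $n_vV\to c>0$, Theorem~\ref{theo:consistencyCondVar} yields $n_v\widehat V=n_vV+o_\P(1)\to c$ in probability. Hence $\widehat V>0$ with probability tending to one, so the arbitrary definition on $\{\widehat V\le0\}$ is asymptotically irrelevant, and $V/\widehat V\to1$ in probability. Moreover, Theorem~\ref{th:equivalent1} gives $\sqrt{n_v}(\widehat\mu_{cf}-\widetilde\mu)=o_\P(1)$, and dividing by $\sqrt{n_vV}\to\sqrt c$ keeps it $o_\P(1)$. Theorem~\ref{theo:consCond} is consistent with this picture and ensures the numerator is $\mathcal O_\P(n_v^{-1/2})$. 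Slutsky's lemma then gives the stated $\cN(0,1)$ limit for the studentized feasible estimator.
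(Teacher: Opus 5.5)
Your proof follows the paper's: the same Slutsky decomposition, with $n_vV\to c>0$ turning Theorem~\ref{theo:consistencyCondVar} into $\widehat V/V\to1$ and Theorem~\ref{th:equivalent1} into negligibility of $(\widehat\mu_{cf}-\widetilde\mu)/\sqrt V$. The paper does not redo the uniform-integrability step. It imports $n_vV\to c$ from the proof of Theorem~\ref{theo:consistencyCondVar}. There, $Z_v:=n_v\V_p(\widetilde\mu\mid\cA)$ is bounded on the whole sample space by a constant times $\pi_+^{-2}$, so \ref{CF6} gives $\sup_v\E_p(Z_v^{1+\delta/2})<\infty$ directly, with no need to split on the \ref{CF2a} event.

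One step, as written, is not justified. The identity $n_vV=\E_p(Z_v)$ rests on conditional unbiasedness, which needs $\pi_{k|\cA}>0$ for every $k$ almost surely. That is not among the hypotheses: \ref{CF2b} only guarantees it with probability $1-o(n_v^{-1})$. For the same reason, your displayed bound must omit units with $\pi_{k|\cA}=0$. In general,
\[
n_vV=\E_p(Z_v)+n_v\V_p\{\E_p(\widetilde\mu\mid\cA)\}.
\]
The paper controls the extra term by writing $\E_p(\widetilde\mu\mid\cA)-\mu=-N_v^{-1}\sum_{k:\pi_{k|\cA}=0}\widetilde e_k$. By Cauchy--Schwarz, its absolute value is at most $\{\sup_vN_v^{-1}\sum_{k\in U_v}\widetilde e_k^2\}^{1/2}$ times the indicator of $\{\min_k\pi_{k|\cA}=0\}$. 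Hence the extra term is $\mathcal O\{n_v\P(\min_k\pi_{k|\cA}=0)\}=o(1)$ by \ref{CF2b}.

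Alternatively, note that the corollary only needs $\liminf_v n_vV>0$, not the exact limit. The law of total variance gives $n_vV\ge\E_p(Z_v)\ge(c-\epsilon)\P(Z_v\ge c-\epsilon)\to c-\epsilon$ for every $\epsilon\in(0,c)$. This removes both the conditional-bias term and the uniform-integrability step you flagged as the main obstacle. Your ``fallback'' would not help, because the assumed central limit theorem and Theorem~\ref{theo:consistencyCondVar} are both normalized by the unconditional variance $V$.
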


\begin{proof}
See Appendix~\ref{proofcoro1}.
\end{proof}
It follows from Corollary~\ref{coro1} that an asymptotically valid $100(1-\alpha)\%$ confidence interval for $\mu$ is
$$
\mathrm{CI}_{1-\alpha} \left( \widehat{\mu}_{cf}(\widehat m,\pi_{|\cA}), \widehat V_{cf}(\widehat
m,\pi_{|\cA}) \right) := \widehat{\mu}_{cf}(\widehat m,\pi_{|\cA}) \pm z_{1-\alpha/2} \sqrt{ \widehat
V_{cf}(\widehat m,\pi_{|\cA}) }.
$$

\section{Unconditional cross-fitted model-assisted estimation}
\label{Sec:uncond}

\subsection{Point estimation}

The preceding section provides a complete inferential framework based on the conditional inclusion probabilities. Under SRSWOR and conditional Poisson sampling, which are exponential sampling designs with symmetric support, the first- and second-order conditional inclusion probabilities can be obtained. Their calculation and implementation nevertheless require additional work, particularly for variance estimation. It is therefore natural to ask whether the original, unconditional inclusion probabilities $\pi_k=\P(I_k=1)$ can be used instead. Retaining the original inclusion probabilities avoids computing $\pi_{k|\cA}$ and $\pi_{kl|\cA}$ altogether and leads to the estimator
\begin{equation}\label{eq:biasUncond}
\widehat{\mu}_{cf}(\widehat{m},\pi)
=
\frac{1}{N}
\left\{
\sum_{k\in U}\widehat{m}^{(-k)}(\mathbf{x}_k)
+
\sum_{k\in S}
\frac{y_k-\widehat{m}^{(-k)}(\mathbf{x}_k)}{\pi_k}
\right\}.
\end{equation}
Using unconditional inclusion probabilities generally sacrifices exact design unbiasedness. Unbiasedness is retained if $\pi_{k|\cA}=\pi_k$ for every $k\in U_v$ almost surely, in which case the conditional and unconditional estimators coincide. This occurs, for example, under Poisson sampling with a uniform partition and $\cA=\boldsymbol{\delta}$. More generally, under \textnormal{(Ind-$\cA$)}, the design bias of the unconditional estimator is
$$
\operatorname{Bias}_p \left\{ \widehat{\mu}_{cf}(\widehat m,\pi) \right\} = -\frac{1}{N_v} \sum_{k\in U_v}
\frac{1}{\pi_k} \operatorname{Cov}_p \left\{ \widehat m^{(-k)}(\bx_k), \pi_{k|\cA} \right\}.
$$
By comparison, \eqref{bias} shows that the bias of the non-cross-fitted estimator depends on $\operatorname{Cov}_p\{\widehat m(\bx_k),I_k\}$. The distinction is important: $\widehat m(\bx_k)$ is fitted using the same sample and therefore depends directly on $I_k$, whereas $\widehat m^{(-k)}(\bx_k)$ is conditionally independent of $I_k$ given $\cA$. Its remaining dependence on $I_k$ is mediated by $\cA$ and captured by its covariance with $\pi_{k|\cA}$. Thus, although cross-fitting with unconditional inclusion probabilities does not eliminate the bias exactly, it can be expected to make it small in many applications.

To bound the bias, we first write
$$
\operatorname{Cov}_p \left\{\widehat{m}^{(-k)}(\bx_k),\pi_{k|\cA}\right\} =
\operatorname{Cov}_p \left\{\widehat{m}^{(-k)}(\bx_k)-\widetilde{m}(\bx_k),\pi_{k|\cA}-\pi_k\right\},
$$
and then apply the Cauchy--Schwarz inequality twice under \ref{D2} to obtain
\begin{align*}
\left\rvert\operatorname{Bias}_p \left\{\widehat{\mu}_{cf}(\widehat{m},\pi)\right\}\right\rvert
&\leq \frac{1}{\lambda}
\left\{\max_{k\in U_v}\E_p\left[(\pi_{k|\cA}-\pi_k)^2\right]\right\}^{1/2}\\
&\quad\times\left\{\frac{1}{N_v}\sum_{k\in U_v}\E_p\left[\left\{\widehat{m}^{(-k)}(\bx_k)-\widetilde{m}(\bx_k)\right\}^2\right]\right\}^{1/2}.
\end{align*}
The first term on the right-hand side is $\cO(n_v^{-1/2})$ by \ref{CF3} and, under \ref{M1}, the second term is $o(1)$; this shows directly that the bias is negligible at the first-order scale. 

We next show that the unconditional cross-fitted estimator is first-order equivalent to its oracle,
$\widehat{\mu}_{\mathrm{ma}}(\widetilde m,\pi)$. Thus, estimating the prediction rule has no first-order effect. Unlike the corresponding result for the conditionally weighted estimator, this result requires the proximity condition \ref{CF3}.

\begin{theorem}\label{th:equivalent2}
Assume \textnormal{(Ind-$\cA$)}, \ref{D1}, \ref{D2}, \ref{CF3}, \ref{CF4}, and \ref{M1}. Then,
$$
\sqrt{n_v} \left\{ \widehat{\mu}_{cf}(\widehat m,\pi) - \widehat{\mu}_{\mathrm{ma}}(\widetilde m,\pi) \right\}
= o_\P(1).
$$
\end{theorem}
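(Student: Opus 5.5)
We write the difference as
$$
\widehat{\mu}_{cf}(\widehat m,\pi)-\widehat{\mu}_{\mathrm{ma}}(\widetilde m,\pi)
=\frac{1}{N_v}\sum_{k\in U_v}\Big(1-\frac{I_k}{\pi_k}\Big)h_k,
\qquad h_k:=\widehat m^{(-k)}(\bx_k)-\widetilde m(\bx_k),
$$
and split the weight $1-I_k/\pi_k$ through the conditional inclusion probability:
$$
1-\frac{I_k}{\pi_k}=\underbrace{\frac{\pi_{k|\cA}-I_k}{\pi_k}}_{\text{(A)}}+\underbrace{\frac{\pi_k-\pi_{k|\cA}}{\pi_k}}_{\text{(B)}}.
$$
It then suffices to show that each of the two resulting averages, multiplied by $\sqrt{n_v}$, is $o_\P(1)$. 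We control them via second moments and Markov's inequality.

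\emph{Term (A).} By (Ind-$\cA$), conditionally on $\cA$ the values $h_k$ for $k\in U_j$ are functions of $\{I_l\}_{l\notin U_j}$ and $\cA$ only, so they are conditionally independent of $\{I_k\}_{k\in U_j}$. We work fold by fold, which is enough since $M$ is fixed. Conditionally on $(\cA,\{I_l\}_{l\notin U_j})$, the sum $N_v^{-1}\sum_{k\in U_j}(\pi_{k|\cA}-I_k)h_k/\pi_k$ has mean zero, because $\E[I_k\mid \cA,\{I_l\}_{l\notin U_j}]=\pi_{k|\cA}$ by honesty. Its conditional variance is
$$
N_v^{-2}\sum_{k,l\in U_j}\Delta_{kl|\cA}\,\frac{h_kh_l}{\pi_k\pi_l}.
$$
We bound the diagonal terms by $\pi_{k|\cA}(1-\pi_{k|\cA})\le 1$ and the off-diagonal terms by $C_{\Delta2}/n_v$ using \ref{CF4}. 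With $\pi_k\ge\lambda$ from \ref{D2}, this gives the bound
$$
\lambda^{-2}\Big(N_v^{-2}\sum_k h_k^2+\frac{C_{\Delta2}}{n_v}\Big(N_v^{-1}\sum_k|h_k|\Big)^2\Big)\le \frac{\lambda^{-2}(1+C_{\Delta2}N_v/n_v)}{N_v}\cdot\frac1{N_v}\sum_k h_k^2 .
$$
Taking expectations, multiplying by $n_v$, and using \ref{D1} together with \ref{M1}, the result is $o(1)$. A point to check is that conditional independence given $\cA$ really gives the conditional mean-zero property given $\cA$ and the out-of-fold indicators. It does, since $\sigma(\boldsymbol\delta)\subseteq\sigma(\cA)$ fixes the fold membership.

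\emph{Term (B).} This is the step where \ref{CF3} enters, and I expect it to be the main obstacle, since the term is not centered. A crude Cauchy--Schwarz bound gives
$$
\Big|N_v^{-1}\sum_k\frac{(\pi_k-\pi_{k|\cA})h_k}{\pi_k}\Big|\le\lambda^{-1}\max_k|\pi_{k|\cA}-\pi_k|\Big(N_v^{-1}\sum_k h_k^2\Big)^{1/2}.
$$
By \ref{CF3}, $\sqrt{n_v}\max_k|\pi_{k|\cA}-\pi_k|=\cO_\P(1)$, since its second moment is bounded by $C_1$. By \ref{M1} and Markov's inequality, $N_v^{-1}\sum_k h_k^2=o_\P(1)$. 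The product is therefore $o_\P(1)$ after multiplying by $\sqrt{n_v}$, which settles (B) without any centering argument. The uniform-in-$k$ maximum in \ref{CF3} is exactly what permits this.

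Finally, Slutsky-type combination of (A) and (B) yields the claim. Note that neither \ref{CF2a} nor any positivity of $\pi_{k|\cA}$ is needed, because the only division is by $\pi_k$.
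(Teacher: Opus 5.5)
Your proof is correct and follows the paper's argument essentially step for step. You use the same split of $1-I_k/\pi_k$ through $\pi_{k|\cA}$, and you bound the centered part by a conditional second moment using (Ind-$\cA$) and \ref{CF4}; the paper packages that step as Lemma~\ref{lemma:secondMoment}, while you inline it. You handle the non-centered part, as the paper does, by Cauchy--Schwarz with the uniform maximum combined with \ref{CF3}, \ref{M1}, and Markov's inequality.
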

\begin{proof}
See Appendix~\ref{Proofth:equivalent2}.
\end{proof}
\begin{remark}
The proof reveals a trade-off between the proximity of the conditional and unconditional inclusion probabilities and the convergence of the prediction rule. More precisely, the part involving these probability differences is controlled by
$$
\left\{ \sqrt{n_v} \max_{k\in U_v} \left|\pi_{k|\cA}-\pi_k\right| \right\} \left\{ \frac{1}{N_v} \sum_{k\in
U_v} \left[ \widehat m^{(-k)}(\bx_k)-\widetilde m(\bx_k) \right]^2 \right\}^{1/2}.
$$
Consequently, the proximity condition \ref{CF3} may be weakened when the prediction rule converges sufficiently fast.
\end{remark}
Because $\widetilde m$ is deterministic, $\widehat{\mu}_{\mathrm{ma}}(\widetilde m,\pi)$ is a classical difference estimator. Its first-order properties therefore follow from standard design-based arguments: it is $\sqrt{n_v}$-consistent and, under many sampling designs, asymptotically normal.
\begin{theorem}\label{theo:consUnCond}
Assume \ref{D1}--\ref{D3} and \ref{M2a}. Then,
$$
\widehat{\mu}_{\mathrm{ma}}(\widetilde m,\pi)-\mu = \mathcal{O}_\P \left(\frac{1}{\sqrt{N_v}}\right).
$$
\end{theorem}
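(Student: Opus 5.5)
Since $\widetilde m$ is deterministic, $\widehat{\mu}_{\mathrm{ma}}(\widetilde m,\pi)$ is a Horvitz--Thompson-type difference estimator. We would write
$$
\widehat{\mu}_{\mathrm{ma}}(\widetilde m,\pi)-\mu=\frac{1}{N_v}\sum_{k\in U_v}\left(\frac{I_k}{\pi_k}-1\right)\widetilde e_k ,
$$
which is valid because $\pi_k\geq\lambda>0$ by \ref{D2}. The expectation of this expression is zero. The plan is therefore to bound its second moment by $\mathcal{O}(1/N_v)$ and then conclude with Chebyshev's inequality (equivalently, Markov's inequality applied to the square).

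The variance decomposes in the usual way as
$$
\V_p\left\{\widehat{\mu}_{\mathrm{ma}}(\widetilde m,\pi)\right\}=\frac{1}{N_v^2}\sum_{k\in U_v}\frac{\pi_k(1-\pi_k)}{\pi_k^2}\widetilde e_k^2+\frac{1}{N_v^2}\sum_{k\neq l}\Delta_{kl}\frac{\widetilde e_k\widetilde e_l}{\pi_k\pi_l}.
$$

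\textbf{Diagonal term.} By \ref{D2}, this term is at most $\lambda^{-1}N_v^{-1}\cdot N_v^{-1}\sum_k\widetilde e_k^2$. Under \ref{M2a} this is $\mathcal{O}(N_v^{-1})$.

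\textbf{Off-diagonal term.} Using \ref{D3} and \ref{D2}, its absolute value is at most
$$
\frac{C}{n_v\lambda^2}\frac{1}{N_v^2}\Big(\sum_{k}|\widetilde e_k|\Big)^2\le \frac{C}{n_v\lambda^2}\cdot\frac{1}{N_v}\sum_k \widetilde e_k^2,
$$
where the last step uses Cauchy--Schwarz, $(\sum_k|\widetilde e_k|)^2\le N_v\sum_k\widetilde e_k^2$. By \ref{M2a} the right-hand side is $\mathcal{O}(n_v^{-1})$. By \ref{D1}, $n_v\asymp N_v$, so this is also $\mathcal{O}(N_v^{-1})$.

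This is the only step requiring care. Summing $|\Delta_{kl}|$ over roughly $N_v^2$ pairs costs a factor $N_v^2$, and this factor must be compensated by the $C/n_v$ rate in \ref{D3} combined with \ref{D1}. That is precisely why the covariance condition is needed. Once it is in place the bound is routine.

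Combining the two bounds gives $\V_p\{\widehat{\mu}_{\mathrm{ma}}(\widetilde m,\pi)\}=\mathcal{O}(N_v^{-1})$. Chebyshev's inequality then yields, for any $\varepsilon>0$, a constant $K$ such that
$$
\P\left(\sqrt{N_v}\,\big|\widehat{\mu}_{\mathrm{ma}}(\widetilde m,\pi)-\mu\big|>K\right)\le\varepsilon
$$
for all $v$, which is the claimed $\mathcal{O}_\P(N_v^{-1/2})$ rate.
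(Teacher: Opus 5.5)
Your proposal is correct and follows essentially the same route as the paper's proof. Like the paper, you use the mean-zero representation of the oracle difference estimator, split its second moment into diagonal and off-diagonal parts, and bound these with \ref{D2}, \ref{D3} and $(\sum_k|\widetilde e_k|)^2\le N_v\sum_k\widetilde e_k^2$, before concluding with \ref{D1}, \ref{M2a} and Chebyshev's inequality.
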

\begin{proof}
See Appendix~\ref{Prooftheo:consUnCond}.
\end{proof}
Theorems~\ref{th:equivalent2} and \ref{theo:consUnCond} together establish the $\sqrt{n_v}$-consistency of $\widehat{\mu}_{cf}(\widehat m,\pi)$ without imposing a specific convergence rate on $\widehat m$ toward $\widetilde m$. When the oracle central limit theorem holds, the predictive quality of $\widetilde m$ affects the first-order distribution only through the asymptotic variance. Optimality results are given and established in Appendix \ref{App:GJ}.

\subsection{Variance estimation }

Section~\ref{sec:cdtMA} introduced the variance estimator
$\widehat V_{cf}(\widehat m,\pi_{|\cA})$ for the conditional estimator
$\widehat{\mu}_{cf}(\widehat m,\pi_{|\cA})$. The conditional and unconditional point estimators are first-order equivalent to different oracle difference estimators and may therefore have different asymptotic variances. We now construct a variance estimator for $\widehat{\mu}_{cf}(\widehat m,\pi)$. Conditions under which the two point estimators are asymptotically equivalent are given in Section~\ref{sec:6}. Set $\pi_{kk}=\pi_k$, so that $\Delta_{kk}=\pi_k(1-\pi_k)$. Whenever $\pi_{kl}>0$ for every $k,l\in U_v$, the natural unconditional analogue of $\widehat V_{cf}(\widehat m,\pi_{|\cA})$ is
$$
\widehat V_{cf}(\widehat m,\pi) := \frac{1}{N_v^2} \sum_{k\in S_v}\sum_{l\in S_v} \frac{\Delta_{kl}}{\pi_{kl}}
\frac{\widehat e_{(-k)}}{\pi_k} \frac{\widehat e_{(-l)}}{\pi_l}.
$$
This estimator is closely related to that proposed by \citet{dagdouggogahaziza2023}. Under leave-one-out cross-fitting, it reduces to the estimator of \citet{opsomer2005selecting}, who established its properties for local polynomial regression. The leave-one-out version was also recently studied by \citet{bouhadra2026high} for high-dimensional linear models. For general cross-fitting, however, \citet{dagdouggogahaziza2023} investigated only its empirical performance and did not establish consistency. The next result fills this gap for generic prediction rules and an arbitrary number of folds.
\begin{theorem}\label{theoVar}
Assume \ref{D1}--\ref{D4}, \ref{M1}, and \ref{M2b}. Then,
$$
n_v\left\rvert\widehat{V}_{cf}(\widehat{m},\pi)-\V_p \left\{\widehat{\mu}_{\mathrm{ma}}(\widetilde{m},\pi)\right\}\right\rvert=o_\P(1).
$$
\end{theorem}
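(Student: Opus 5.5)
We split the error as
\[
\widehat V_{cf}(\widehat m,\pi)-\V_p\{\widehat\mu_{\mathrm{ma}}(\widetilde m,\pi)\}
=\bigl[\widehat V_{cf}(\widehat m,\pi)-\widehat V(\widetilde m,\pi)\bigr]
+\bigl[\widehat V(\widetilde m,\pi)-\V_p\{\widehat\mu_{\mathrm{ma}}(\widetilde m,\pi)\}\bigr],
\]
where $\widehat V(\widetilde m,\pi)$ is the Horvitz--Thompson variance estimator with the oracle residuals $\widetilde e_k$ in place of $\widehat e_{(-k)}$. Since $\widetilde m$ is deterministic, the second bracket is the classical problem of consistency of the HT variance estimator for a fixed variable. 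Write $n_v\widehat V(\widetilde m,\pi)=\frac{n_v}{N_v^2}\sum_{k,l\in U}\frac{\Delta_{kl}}{\pi_{kl}\pi_k\pi_l}\widetilde e_k\widetilde e_l I_kI_l$. This is unbiased for $n_v\V_p\{\widehat\mu_{\mathrm{ma}}(\widetilde m,\pi)\}$, so it suffices to show that its variance tends to zero. Expanding the variance gives a quadruple sum of terms $\E_p[(I_iI_j-\pi_{ij})(I_kI_l-\pi_{kl})]$.
\begin{itemize}
\item Terms over distinct 4-tuples are controlled by \ref{D4} together with $|\Delta_{kl}|\le C/n_v$ from \ref{D3}.
\item Terms with coinciding indices, which lie in $D_{2}$ or $D_3$ or on the diagonal, are bounded by counting them and using \ref{D2} for the denominators, \ref{D1} for $n_v\asymp N_v$, and \ref{M2b}, via Cauchy--Schwarz on $N_v^{-1}\sum\widetilde e_k^4$.
\end{itemize}
This is the standard argument, e.g.\ as in \citet{breidt2000local}, and gives $n_v[\widehat V(\widetilde m,\pi)-\V_p\{\cdot\}]=o_\P(1)$.

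For the first bracket, set $r_k:=\widehat m^{(-k)}(\bx_k)-\widetilde m(\bx_k)$, so that $\widehat e_{(-k)}=\widetilde e_k-r_k$. Then $n_v$ times the first bracket is a quadratic form $n_vN_v^{-2}\sum_{k,l\in S}w_{kl}(a_ka_l-b_kb_l)$ with weights $w_{kl}=\Delta_{kl}/(\pi_{kl}\pi_k\pi_l)$, difference $-r_k\widetilde e_l-\widetilde e_kr_l+r_kr_l$. We split it into the diagonal part ($k=l$) and the off-diagonal part.
\begin{itemize}
\item \textbf{Diagonal.} Here $|w_{kk}|\le\lambda^{-2}$, so the contribution is bounded by $\frac{n_v}{N_v^2}\lambda^{-2}\sum_{k\in U}(2|r_k\widetilde e_k|+r_k^2)$. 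By \ref{D1} and Cauchy--Schwarz this is $\mathcal O\bigl((N_v^{-1}\sum r_k^2)^{1/2}(N_v^{-1}\sum\widetilde e_k^2)^{1/2}+N_v^{-1}\sum r_k^2\bigr)$. By \ref{M1} and Markov's inequality, $N_v^{-1}\sum_k r_k^2=o_\P(1)$, and \ref{M2b} implies \ref{M2a}, so this part is $o_\P(1)$.
\item \textbf{Off-diagonal.} Here $|w_{kl}|\le C/(n_v\lambda^\star\lambda^2)$ by \ref{D2}--\ref{D3}, so the contribution is at most $\frac{C'}{N_v^2}\sum_{k\neq l\in U}|a_ka_l-b_kb_l|$. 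Bounding by absolute values gives $\frac{C'}{N_v^2}\bigl\{2(\sum|r_k|)(\sum|\widetilde e_l|)+(\sum|r_k|)^2\bigr\}$. Using $N_v^{-1}\sum|r_k|\le(N_v^{-1}\sum r_k^2)^{1/2}=o_\P(1)$ and $N_v^{-1}\sum|\widetilde e_k|=\mathcal O(1)$, this part is also $o_\P(1)$.
\end{itemize}

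No independence across folds is needed in either step, because absolute bounds with $I_k\le1$ suffice. This explains why the theorem assumes no (Ind-$\cA$) or CF conditions.

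The main obstacle is the second bracket: the fourth-moment bookkeeping of the HT variance-estimator variance across the index-coincidence cases. In particular, the distinct 4-tuple terms must be shown to be $o(1)$ after multiplying by $n_v^2N_v^{-4}\cdot N_v^4\cdot n_v^{-2}$. This is exactly what \ref{D4} gives once the weights are bounded by $\mathcal O(1/n_v)$ and $(N_v^{-1}\sum|\widetilde e_k|)^4$ is bounded.
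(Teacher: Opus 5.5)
Your proposal is correct and follows the paper's proof essentially line by line. The split is the same: a cross-fitted-versus-oracle-residual term, which you bound exactly as the paper bounds $A_v$ (diagonal weights at most $\lambda^{-2}$, off-diagonal weights at most $C/(n_v\lambda^2\lambda^\star)$, absolute values with sample sums replaced by population sums, Cauchy--Schwarz, \ref{M1} with Markov, and \ref{M2b}), and a fixed-residual Horvitz--Thompson term, which both you and the paper hand off to the argument of Theorem~3 of \citet{breidt2000local}.

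One caution on your sketch of that classical part: bounding each coinciding-index term by counting does not make them all $o(1)$. The terms pairing a diagonal summand $k=l$ with an off-diagonal pair $i\neq j$, $k\notin\{i,j\}$, number $\mathcal{O}(N_v^3)$ with coefficient products of order $n_vN_v^{-4}$, so they are only $\mathcal{O}(1)$ in total. Moreover, \ref{D1}--\ref{D4} impose nothing on $\operatorname{Cov}_p(I_k,I_iI_j)$. You should therefore first separate the $k=l$ and $k\neq l$ parts of the estimator and use $\V_p(a+b)\leq 2\V_p(a)+2\V_p(b)$. The diagonal part then has variance $\mathcal{O}(N_v^{-1})$ by \ref{D1}--\ref{D3} and \ref{M2b}, and only the off-diagonal part needs the $D_2$/$D_3$/$D_4$ bookkeeping you describe.
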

\begin{proof}
    See Appendix \ref{ProoftheoVar}.
\end{proof}

The following corollary establishes valid inference based on
$\{\widehat{\mu}_{cf}(\widehat m,\pi),
\widehat V_{cf}(\widehat m,\pi)\}$
without requiring first- or second-order conditional inclusion probabilities.

\begin{corollary}\label{coro2}
Suppose that the conditions of Theorems~\ref{th:equivalent2} and \ref{theoVar} hold and that
\begin{equation}\label{eq:varOrder}
\liminf_{v\to \infty} n_v \V_p \left\{\widehat{\mu}_{\mathrm{ma}}(\widetilde{m},\pi)\right\}>0.
\end{equation}
If
$$
\frac{\widehat{\mu}_{\mathrm{ma}}(\widetilde{m},\pi)-\mu}{\sqrt{\V_p \left\{\widehat{\mu}_{\mathrm{ma}}(\widetilde{m},\pi)\right\}}}
\xrightarrow[v \to \infty]{\cL}\cN(0,1),
$$
then
$$
\frac{\widehat{\mu}_{cf}(\widehat{m},\pi)-\mu}{\sqrt{\widehat{V}_{cf}(\widehat{m},\pi)}} \xrightarrow[v \to
\infty]{\cL}\cN(0,1).
$$
\end{corollary}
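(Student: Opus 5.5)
The proof is a Slutsky argument that combines Theorems~\ref{th:equivalent2} and~\ref{theoVar} with the assumed oracle central limit theorem. Write $V_v:=\V_p\{\widehat{\mu}_{\mathrm{ma}}(\widetilde m,\pi)\}$. We decompose
$$
\frac{\widehat{\mu}_{cf}(\widehat m,\pi)-\mu}{\sqrt{\widehat V_{cf}(\widehat m,\pi)}}
=\sqrt{\frac{V_v}{\widehat V_{cf}(\widehat m,\pi)}}\left\{\frac{\widehat{\mu}_{\mathrm{ma}}(\widetilde m,\pi)-\mu}{\sqrt{V_v}}+\frac{\sqrt{n_v}\{\widehat{\mu}_{cf}(\widehat m,\pi)-\widehat{\mu}_{\mathrm{ma}}(\widetilde m,\pi)\}}{\sqrt{n_vV_v}}\right\}.
$$
We then treat the three pieces separately.

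\emph{Numerator remainder.} By Theorem~\ref{th:equivalent2}, $\sqrt{n_v}\{\widehat{\mu}_{cf}(\widehat m,\pi)-\widehat{\mu}_{\mathrm{ma}}(\widetilde m,\pi)\}=o_\P(1)$. Condition \eqref{eq:varOrder} gives $\eta>0$ and $v_0$ with $n_vV_v\geq\eta$ for all $v\geq v_0$. Hence $1/\sqrt{n_vV_v}$ is bounded for large $v$, and the second summand in braces is $o_\P(1)$. The first summand converges in law to $\cN(0,1)$ by hypothesis. Slutsky's lemma then shows that the bracket converges in law to $\cN(0,1)$.

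\emph{Variance ratio.} By Theorem~\ref{theoVar}, $n_v\widehat V_{cf}(\widehat m,\pi)-n_vV_v=o_\P(1)$. Because $n_vV_v\geq\eta$ eventually, we get
$$
\left|\frac{\widehat V_{cf}(\widehat m,\pi)}{V_v}-1\right|\leq \eta^{-1}\,\bigl|n_v\widehat V_{cf}(\widehat m,\pi)-n_vV_v\bigr|=o_\P(1).
$$
So the ratio $\widehat V_{cf}(\widehat m,\pi)/V_v$ tends to $1$ in probability. In particular, $\P\{\widehat V_{cf}(\widehat m,\pi)>0\}\to1$. This means the arbitrary definition of the studentized statistic on the event $\{\widehat V_{cf}(\widehat m,\pi)\leq0\}$ only matters on an event of vanishing probability, which does not affect convergence in law. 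The continuous mapping theorem, applied to $x\mapsto x^{-1/2}$ on $(0,\infty)$, then gives $\sqrt{V_v/\widehat V_{cf}(\widehat m,\pi)}\to1$ in probability.

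A final application of Slutsky's lemma yields the claim. The only point that needs care is the step from an \emph{additive} $o_\P(n_v^{-1})$ variance error to a \emph{multiplicative} ratio converging to one. This is exactly where \eqref{eq:varOrder} is needed, since it prevents $n_vV_v$ from degenerating to zero. We do not need $n_vV_v$ to converge: we only use a lower bound, together with the fact that $\sqrt{n_v}$ cancels in the ratios.
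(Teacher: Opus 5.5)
Your proposal is correct and follows essentially the same route as the paper. In both arguments, \eqref{eq:varOrder} converts the additive error from Theorem~\ref{theoVar} into a variance ratio tending to one in probability, and it also makes the Theorem~\ref{th:equivalent2} remainder negligible after standardization; Slutsky's theorem is then applied on the event $\{\widehat V_{cf}(\widehat m,\pi)>0\}$, whose probability tends to one.
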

\begin{proof}
See Appendix~\ref{proofcoro2}.
\end{proof}

\section{Conditional versus unconditional cross-fitting}
\label{sec:6}

Sections~\ref{sec:cdtMA} and \ref{Sec:uncond} introduced the conditional estimator $\widehat{\mu}_{cf}(\widehat m,\pi_{|\cA})$ in \eqref{SimplifiedVer} and the unconditional estimator $\widehat{\mu}_{cf}(\widehat m,\pi)$ in \eqref{eq:biasUncond}. They use the same out-of-fold predictions and differ only in the inclusion probabilities used in the residual correction. Consequently, they are first-order equivalent to different oracle estimators. We now compare their asymptotic properties.

Let
$$
B_{\cA} := \E_p \left[ \widehat{\mu}_{\mathrm{ma}}(\widetilde m,\pi)-\mu \mid\cA \right]
$$
denote the conditional bias of the unconditional oracle estimator. Since this estimator is unconditionally design-unbiased, $\E_p(B_{\cA})=0$. The magnitude of its fluctuations determines whether the conditional and unconditional estimators are first-order equivalent.

\begin{theorem}\label{theo:eqq}
Suppose that \ref{M2a} and the conditions of Theorems~\ref{th:equivalent1} and \ref{th:equivalent2} hold. Then,
$$
\sqrt{n_v} \left\{ \widehat{\mu}_{cf}(\widehat m,\pi) - \widehat{\mu}_{cf}(\widehat m,\pi_{|\cA}) \right\} =
\sqrt{n_v}B_{\cA}+o_\P(1).
$$
Consequently, the two estimators are first-order asymptotically equivalent if and only if $\sqrt{n_v}B_{\cA}=o_\P(1)$.
\end{theorem}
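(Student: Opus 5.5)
By Theorems~\ref{th:equivalent1} and \ref{th:equivalent2}, the feasible estimators can be replaced by their oracles at cost $o_\P(n_v^{-1/2})$:
$$
\sqrt{n_v}\left\{\widehat{\mu}_{cf}(\widehat m,\pi)-\widehat{\mu}_{cf}(\widehat m,\pi_{|\cA})\right\}
=\sqrt{n_v}\left\{\widehat{\mu}_{\mathrm{ma}}(\widetilde m,\pi)-\widehat{\mu}_{\mathrm{ma}}(\widetilde m,\pi_{|\cA})\right\}+o_\P(1).
$$
It therefore suffices to study the oracle difference
$$
D:=\frac{1}{N_v}\sum_{k\in U_v} I_k\,\widetilde e_k\left(\frac{1}{\pi_k}-\frac{1}{\pi_{k|\cA}}\right),
$$
where the $\sum_{k\in U}\widetilde m(\bx_k)$ terms cancel. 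Under \ref{CF2a}, the event $\{\min_k\pi_{k|\cA}\geq\lambda_{cf}\}$ has probability tending to one, so I will work on it.

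Next I decompose $D=\E_p(D\mid\cA)+\{D-\E_p(D\mid\cA)\}$. Since $\E_p(I_k\mid\cA)=\pi_{k|\cA}$, the conditional mean is
$$
\E_p(D\mid\cA)=\frac{1}{N_v}\sum_{k\in U_v}\widetilde e_k\left(\frac{\pi_{k|\cA}}{\pi_k}-1\right).
$$
The conditional oracle is conditionally unbiased for $\mu$ given $\cA$, so this quantity equals $\E_p[\widehat{\mu}_{\mathrm{ma}}(\widetilde m,\pi)-\mu\mid\cA]=B_{\cA}$. The remainder is $N_v^{-1}\sum_k (I_k-\pi_{k|\cA})\widetilde e_k a_k$ with $a_k:=1/\pi_k-1/\pi_{k|\cA}$, which is $\cA$-measurable. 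Its conditional variance is
$$
N_v^{-2}\sum_{k,l}\Delta_{kl|\cA}\,\widetilde e_k\widetilde e_l\, a_k a_l .
$$
I bound this by splitting into diagonal and off-diagonal parts. Using \ref{CF4} and Cauchy--Schwarz, it is at most
$$
N_v^{-2}\sum_k \widetilde e_k^2 a_k^2+\frac{C_{\Delta2}}{n_v}\Bigl(N_v^{-1}\sum_k|\widetilde e_k||a_k|\Bigr)^2 \leq \frac{C}{n_v}\max_k a_k^2\cdot N_v^{-1}\sum_k\widetilde e_k^2 ,
$$
where the first term is handled with \ref{D1}. On the good event, $|a_k|\leq|\pi_{k|\cA}-\pi_k|/(\lambda\lambda_{cf})$ by \ref{D2} and \ref{CF2a}. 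By \ref{CF3}, $\max_k a_k^2=\cO_\P(n_v^{-1})$, and by \ref{M2a} the empirical second moment of $\widetilde e_k$ is bounded.

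Multiplying by $n_v$, the conditional variance of $\sqrt{n_v}\{D-\E_p(D\mid\cA)\}$ is $\cO_\P(n_v^{-1})=o_\P(1)$. A conditional Chebyshev argument then gives $\sqrt{n_v}\{D-B_{\cA}\}=o_\P(1)$: for each $\varepsilon$, bound $\P(|\cdot|>\varepsilon\mid\cA)$ by $\min\{1,\varepsilon^{-2}\V(\cdot\mid\cA)\}$ and apply dominated convergence. This yields the displayed identity. The ``if and only if'' follows immediately, since the difference equals $\sqrt{n_v}B_{\cA}$ up to $o_\P(1)$.

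The main technical care is the bookkeeping on the complement of the \ref{CF2a} event and the sign conventions. Even $\sqrt{n_v}\max_k|a_k|=\cO_\P(1)$ times a bounded factor would suffice for the remainder term, so the step is not delicate. I expect no serious obstacle beyond checking that $B_{\cA}$ matches $\E_p(D\mid\cA)$ exactly, which relies on conditional unbiasedness of the conditional oracle under positivity.
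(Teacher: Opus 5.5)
Your proposal is correct and follows the paper's proof essentially step for step. You reduce to the oracle difference via Theorems~\ref{th:equivalent1} and \ref{th:equivalent2}, identify its conditional mean with $B_{\cA}$ on the event where \ref{CF2a} holds, and bound the conditional second moment of the centered remainder using \ref{D2}, \ref{CF4}, Cauchy--Schwarz, \ref{CF3}, and \ref{M2a}. The only difference is cosmetic: you conclude with a conditional Chebyshev bound plus bounded convergence, whereas the paper takes unconditional expectations on the good event and applies Markov's inequality directly, using \ref{CF3} in its expectation form.
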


\begin{proof}
See Appendix~\ref{Prooftheo:eqq}.
\end{proof}

Condition \ref{CF3} controls the discrepancy between the conditional and unconditional inclusion probabilities and is used in the expansion in Theorem~\ref{theo:eqq}. It does not, however, imply that $\sqrt{n_v}B_{\cA}=o_\P(1)$. Supplementary Section~\ref{supp:nonequivalence} gives an SRSWOR example for which
$\sqrt{n_v}B_{\cA}\xrightarrow[v \to \infty]{\cL}\cN(0,1-\pi_\star)$, showing that the two estimators need not be first-order equivalent.

When equivalence fails, the following result provides a general comparison of their first-order variances.

\begin{theorem}\label{Thm:reminder}
Assume \ref{D1}--\ref{D2}, \ref{CF2c}, \ref{CF3}, \ref{CF4}, and \ref{M2a}. Then,
$$
\begin{aligned}
n_v\Bigl[
\V_p \left\{
\widehat{\mu}_{\mathrm{ma}}(\widetilde m,\pi)
\right\}
-
\V_p \left\{
\widehat{\mu}_{\mathrm{ma}}(\widetilde m,\pi_{|\cA})
\right\}
\Bigr]
&=
n_v\E_p \left[
\left\{
\widehat{\mu}_{\mathrm{ma}}(\widetilde m,\pi_{|\cA})
-
\widehat{\mu}_{\mathrm{ma}}(\widetilde m,\pi)
\right\}^2
\right]
+o(1),
\end{aligned}
$$
where the leading term is $\mathcal O(1)$.
\end{theorem}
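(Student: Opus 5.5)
Write $A:=\widehat{\mu}_{\mathrm{ma}}(\widetilde m,\pi_{|\cA})$ and $B:=\widehat{\mu}_{\mathrm{ma}}(\widetilde m,\pi)$. The plan is to expand $\V_p(B)=\V_p(A+(B-A))$ and isolate the cross term. Both estimators have mean $\mu$: $B$ because $\widetilde m$ is fixed, and $A$ because $\E_p(A\mid\cA)=\mu$ under \ref{CF2c}, which makes $\pi_{k|\cA}\ge\lambda_c>0$ for all $k$. Hence
\[
\V_p(B)-\V_p(A)=\E_p\bigl[(B-A)^2\bigr]+2\,\E_p\bigl[(A-\mu)(B-A)\bigr].
\]
The proof then reduces to two facts: $n_v\E_p[(A-\mu)(B-A)]=o(1)$, and $n_v\E_p[(B-A)^2]=\cO(1)$.

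For the cross term, write
\[
B-A=\frac1N\sum_{k\in U}I_k\,\widetilde e_k\Bigl(\frac1{\pi_k}-\frac1{\pi_{k|\cA}}\Bigr)
=\frac1N\sum_k I_k\,\widetilde e_k\,w_k,\qquad w_k:=\frac{\pi_{k|\cA}-\pi_k}{\pi_k\pi_{k|\cA}},
\]
where $w_k$ is $\cA$-measurable. Decompose $B-A=\E_p(B-A\mid\cA)+R$, with
\[
\E_p(B-A\mid\cA)=N^{-1}\sum_k\pi_{k|\cA}\widetilde e_k w_k=B_{\cA}
\quad\text{and}\quad
R=N^{-1}\sum_k(I_k-\pi_{k|\cA})\widetilde e_k w_k .
\]
Since $\E_p(A-\mu\mid\cA)=0$, the $B_{\cA}$ part contributes nothing to the cross term. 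Therefore
\[
\E_p[(A-\mu)(B-A)]=\E_p\bigl[\operatorname{Cov}_p(A,R\mid\cA)\bigr]
=\E_p\Bigl[N^{-2}\sum_{k,l}\Delta_{kl|\cA}\,\frac{\widetilde e_k}{\pi_{k|\cA}}\,\widetilde e_l w_l\Bigr].
\]
We bound this by splitting diagonal and off-diagonal terms.
\begin{itemize}
\item On the diagonal, $|\Delta_{kk|\cA}|\le1$, $1/\pi_{k|\cA}\le1/\lambda_c$ by \ref{CF2c}, and $|w_k|\le|\pi_{k|\cA}-\pi_k|/(\lambda\lambda_c)$ by \ref{D2}. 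These terms are therefore at most
\[
C\,N^{-2}\sum_k\widetilde e_k^2\,\E_p\max_k|\pi_{k|\cA}-\pi_k|
=\cO\bigl(N^{-1}\cdot n_v^{-1/2}\bigr)
\]
by \ref{M2a} and \ref{CF3} (via Jensen), which is $o(n_v^{-1})$ by \ref{D1}.
\item Off the diagonal, \ref{CF4} gives $|\Delta_{kl|\cA}|\le C/n_v$. This yields the bound
\[
C\,n_v^{-1}N^{-2}\Bigl(\sum_k|\widetilde e_k|\Bigr)\Bigl(\sum_l|\widetilde e_l|\Bigr)\E_p\max_l|w_l|=\cO\bigl(n_v^{-1}\cdot n_v^{-1/2}\bigr),
\]
again using \ref{M2a}, Cauchy--Schwarz and \ref{CF3}.
\end{itemize}
Multiplying by $n_v$, the cross term is $o(1)$.

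For the $\cO(1)$ claim, use
\[
\E_p[(B-A)^2]\le\E_p\Bigl[\Bigl(N^{-1}\sum_k|\widetilde e_k|\,|w_k|\Bigr)^2\Bigr]
\le(\lambda\lambda_c)^{-2}\Bigl(N^{-1}\sum_k|\widetilde e_k|\Bigr)^2\E_p\max_k(\pi_{k|\cA}-\pi_k)^2 .
\]
By \ref{M2a} and \ref{CF3} this is $\cO(n_v^{-1})$, so the leading term is $\cO(1)$.

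The step I expect to be most delicate is the cross term. One has to check that conditioning on $\cA$ legitimately kills the $B_{\cA}$ part, which needs exact conditional unbiasedness of $A$ and hence the almost-sure positivity in \ref{CF2c}, not the weaker \ref{CF2a}. The diagonal contribution also needs the extra factor $n_v^{-1/2}$ from \ref{CF3}. Note that (Ind-$\cA$) is not needed, because no fitted predictions appear.
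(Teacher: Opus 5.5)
Your proof is correct, but it is organized differently from the paper's.

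\textbf{Your route.} You expand $\V_p(B)=\V_p\{A+(B-A)\}$ around the conditional oracle. You then show that the cross term $2\E_p[(A-\mu)(B-A)]$ is $\cO(n_v^{-3/2})$, so the target quantity $\E_p[(B-A)^2]$ appears at once.

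\textbf{The paper's route.} It applies the law of total variance to each oracle separately. This gives $\V_p(B)-\V_p(A)=\E_p(B_\cA^2)+R_v$, where $R_v$ collects the terms $\Delta_{kl|\cA}\widetilde e_k\widetilde e_l\{(\pi_k\pi_l)^{-1}-(\pi_{k|\cA}\pi_{l|\cA})^{-1}\}$. It then converts $\E_p(B_\cA^2)$ into $\E_p[(B-A)^2]$ through the orthogonal decomposition $B-A=B_\cA+R$, using $n_v\E_p(R^2)=\cO(n_v^{-1})$.

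\textbf{Relation between the two.} Both arguments bound the same objects. Since $\E_p[(A-\mu)B_\cA]=0$, one checks that $2\E_p[(A-\mu)(B-A)]=R_v-\E_p(R^2)$. Both also rest on the same diagonal/off-diagonal split using \ref{CF4}, \ref{CF3}, \ref{D2} and \ref{CF2c}.

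\textbf{What each buys.} Your version is slightly shorter, and it gets the $\cO(1)$ claim from a crude pointwise bound on $|B-A|$. The paper's detour through \eqref{dec1} identifies $n_v\E_p(B_\cA^2)$ explicitly as the leading term. That ties the result to Theorem~\ref{theo:eqq}, and it is reused verbatim in the SRSWOR characterization under balanced fold residuals.
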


\begin{proof}
See Appendix~\ref{ProofThm:reminder}.
\end{proof}

Since the leading term is nonnegative, Theorem~\ref{Thm:reminder} implies
$$
\liminf_{v\to\infty} n_v \left[ \V_p \left\{ \widehat{\mu}_{\mathrm{ma}}(\widetilde m,\pi) \right\} -
\V_p \left\{ \widehat{\mu}_{\mathrm{ma}}(\widetilde m,\pi_{|\cA}) \right\} \right] \geq 0.
$$
Thus, the conditional oracle is at least as efficient as the unconditional oracle to first order. Under the preceding oracle-equivalence conditions and suitable uniform integrability conditions, this comparison extends to the feasible estimators, and their first-order variances coincide when they are first-order equivalent. Supplementary Section~\ref{supp:balanced-residuals} gives a condition for equivalence under SRSWOR. This first-order comparison does not give a second-order ordering. Appendix~\ref{App:GJ} gives conditions under which the proposed estimators asymptotically attain the Godambe--Joshi lower bound.

\section{Simulation study}\label{Sec:Simulation}

We evaluate the finite-sample performance of
$\widehat{\mu}_{ma}(\widehat{m},\pi)$,
$\widehat{\mu}_{cf}(\widehat{m},\pi^{(\delta)})$,
$\widehat{\mu}_{cf}(\widehat{m},\pi_{|\cA})$, and
$\widehat{\mu}_{cf}(\widehat{m},\pi)$, together with their variance estimators. Point estimators are evaluated in terms of relative bias and efficiency, and variance estimators in terms of relative bias and coverage of Wald confidence intervals.

We generate finite populations of sizes $N\in\{1000,2500,5000\}$. For each unit, the $p=20$ components of $\mathbf{x}_k=(x_{k1},\ldots,x_{kp})^\top$ are generated independently from a normal distribution with mean 5 and variance 2. The survey variable is generated under either the \textit{linear model} $y_k=4x_{k1}+3x_{k2}+3x_{k3}+2x_{k4}+5x_{k5}+7x_{k6}+\varepsilon_k$, with $\varepsilon_k\sim\mathcal N(0,225)$, or the \textit{nonlinear model} $y_k=10+2I(x_{k1}>5)+(x_{k2}+x_{k3}-10)^2+\varepsilon_k$, with $\varepsilon_k\sim\mathcal N(0,9)$. Each population is held fixed throughout the Monte Carlo experiment.

From each population, we draw $R=20{,}000$ independent SRSWOR samples of size $n=0.2N$. Four prediction methods are considered: linear regression; random forests with $B=200$ trees and minimum terminal node size $n_0=5$; $K$-nearest neighbors with $K=5$ and Euclidean distance; and regression trees with minimum terminal node size $n_0=5$.

For each sample, we compute four model-assisted estimators. First, $\widehat{\mu}_{ma}(\widehat{m},\pi)$ in \eqref{maestimator} is the standard estimator without cross-fitting, based on the original inclusion probabilities. Second, $\widehat{\mu}_{cf}(\widehat{m},\pi^{(\delta)})$ in \eqref{mod_ass_Lu} uses the design-aware procedure of Section~\ref{Sec:Lu_cf} and the conditional inclusion probabilities $\pi^{(\delta)}$. We use $M=2$ population folds with $(N_1,N_2)=N(0.5,0.5)$ and consider both $(n_1,n_2)=n(0.5,0.5)$ and $(n_1,n_2)=n(0.8,0.2)$. Third, $\widehat{\mu}_{cf}(\widehat{m},\pi_{|\cA})$ in \eqref{SimplifiedVer} uses the design-agnostic procedure of Section~\ref{Sec:Proposed_cf} (Algorithm \ref{alg:design-agnostic}), with $M=2$ folds and the conditional inclusion probabilities $\pi_{|\cA}$ with $\cA = (\boldsymbol{n}, \boldsymbol{\delta})$. Finally, $\widehat{\mu}_{cf}(\widehat{m},\pi)$ in \eqref{eq:biasUncond} uses the same design-agnostic procedure but retains the original inclusion probabilities $\pi$.

For a generic estimator $\widehat{\mu}_m$ among the four above, let $\widehat{\mu}_m^{(r)}$ denote its value at replication $r$. We consider the Monte Carlo percent relative bias $\mathrm{RB}$, relative efficiency $\mathrm{RE}$ using the Horvitz--Thompson estimator as benchmark, percent relative bias of the variance estimator $\mathrm{RB\text{-}Var}$, and coverage probability $\mathrm{CP}$:
\begin{align*}
\mathrm{RB}(\widehat{\mu}_m)
&=
100\%\times\frac{1}{R}
\sum_{r=1}^R
\frac{\widehat{\mu}_m^{(r)}-\mu}{\mu},\\
\mathrm{RE}(\widehat{\mu}_m)
&=
100\%\times
\frac{\mathbb{E}_{MC} \left[
(\widehat{\mu}_m-\mu)^2
\right]}
{\mathbb{E}_{MC} \left[
(\widehat{\mu}_{\pi}-\mu)^2
\right]},\\
\mathrm{RB\text{-}Var}(\widehat{\mu}_m)
&=
100\%\times
\frac{
\mathbb{E}_{MC} \left[
\widehat{\V}(\widehat{\mu}_m)
\right]
-
\V_{MC}(\widehat{\mu}_m)
}
{\V_{MC}(\widehat{\mu}_m)},\\
\mathrm{CP}(\widehat{\mu}_m)
&=
100\%\times
\mathbb{E}_{MC} \left[
I \left\{
\widehat{\mu}_m
\pm
1.96\sqrt{\widehat{\V}(\widehat{\mu}_m)}
\ni\mu
\right\}
\right].
\end{align*}
Table~\ref{Table 1} shows that all estimators are essentially unbiased under the linear relationship. Under the nonlinear relationship, the standard model-assisted estimator has small but persistent biases with flexible learners, whereas the cross-fitted estimators remain nearly unbiased.

\begin{table}[p]
\centering
\caption{Relative bias and relative efficiency of the estimators under the linear and nonlinear relationships.}
\label{Table 1}

\begingroup
\small
\renewcommand{\arraystretch}{0.65}
\setlength{\tabcolsep}{4.5pt}

\resizebox{\textwidth}{!}{\begin{tabular}{@{}lllcccccc@{}}
\toprule
\multirow{2}{*}{Model}
& \multirow{2}{*}{Method}
& \multirow{2}{*}{}
& \multicolumn{3}{c}{Linear relationship}
& \multicolumn{3}{c}{Nonlinear relationship} \\
\cmidrule(lr){4-6}\cmidrule(lr){7-9}
& & $n$ & 200 & 500 & 1000 & 200 & 500 & 1000 \\
\midrule

\multirow{10}{*}{\shortstack{Linear\\ Regression}}
& \multirow{2}{*}{$\widehat{\mu}_{ma}(\widehat{m},\pi)$}
& RB & 0.0 & 0.0 & 0.0 & -0.4 & -0.2 & -0.1 \\
& & RE & 39.3 & 34.2 & 34.2 & 112.2 & 105.2 & 102.5 \\
\cmidrule(lr){2-9}
& \multirow{2}{*}{$\widehat{\mu}_{cf}(\widehat{m},\pi^{(\delta)})$ (equal)}
& RB & 0.0 & 0.0 & 0.0 & 0.0 & 0.0 & 0.0 \\
& & RE & 44.7 & 35.9 & 34.8 & 129.9 & 110.1 & 105.6 \\
\cmidrule(lr){2-9}
& \multirow{2}{*}{$\widehat{\mu}_{cf}(\widehat{m},\pi^{(\delta)})$ (unequal)}
& RB & 0.0 & 0.0 & 0.0 & 0.0 & 0.0 & 0.0 \\
& & RE & 78.1 & 60.8 & 59.6 & 224.6 & 188.5 & 179.9 \\
\cmidrule(lr){2-9}
& \multirow{2}{*}{$\widehat{\mu}_{cf}(\widehat{m},\pi)$}
& RB & 0.0 & 0.0 & 0.0 & -0.1 & 0.0 & 0.0 \\
& & RE & 44.4 & 35.1 & 34.8 & 133.4 & 110.5 & 105.9 \\
\cmidrule(lr){2-9}
& \multirow{2}{*}{$\widehat{\mu}_{cf}(\widehat{m},\pi_{|\cA})$}
& RB & 0.0 & 0.0 & 0.0 & -0.1 & 0.0 & 0.0 \\
& & RE & 44.1 & 35.2 & 34.9 & 133.2 & 110.6 & 106.0 \\
\midrule

\multirow{10}{*}{\shortstack{Random\\ Forests}}
& \multirow{2}{*}{$\widehat{\mu}_{ma}(\widehat{m},\pi)$}
& RB & 0.0 & 0.0 & 0.0 & -1.1 & -0.9 & -0.7 \\
& & RE & 60.8 & 49.6 & 47.7 & 45.4 & 34.1 & 31.2 \\
\cmidrule(lr){2-9}
& \multirow{2}{*}{$\widehat{\mu}_{cf}(\widehat{m},\pi^{(\delta)})$ (equal)}
& RB & 0.0 & 0.0 & 0.0 & 0.0 & 0.0 & 0.0 \\
& & RE & 68.3 & 54.6 & 51.4 & 59.7 & 36.7 & 25.2 \\
\cmidrule(lr){2-9}
& \multirow{2}{*}{$\widehat{\mu}_{cf}(\widehat{m},\pi^{(\delta)})$ (unequal)}
& RB & 0.0 & 0.0 & 0.0 & 0.0 & 0.0 & 0.0 \\
& & RE & 111.3 & 91.5 & 86.2 & 87.4 & 54.1 & 39.2 \\
\cmidrule(lr){2-9}
& \multirow{2}{*}{$\widehat{\mu}_{cf}(\widehat{m},\pi)$}
& RB & 0.0 & 0.0 & 0.0 & 0.0 & 0.0 & 0.0 \\
& & RE & 68.8 & 54.3 & 51.6 & 59.5 & 37.0 & 25.0 \\
\cmidrule(lr){2-9}
& \multirow{2}{*}{$\widehat{\mu}_{cf}(\widehat{m},\pi_{|\cA})$}
& RB & 0.0 & 0.0 & 0.0 & 0.0 & 0.0 & 0.0 \\
& & RE & 68.9 & 54.3 & 51.7 & 59.4 & 37.0 & 25.0 \\
\midrule

\multirow{10}{*}{KNN}
& \multirow{2}{*}{$\widehat{\mu}_{ma}(\widehat{m},\pi)$}
& RB & 0.0 & 0.0 & 0.0 & -1.7 & -2.1 & -1.7 \\
& & RE & 72.0 & 64.5 & 62.5 & 89.7 & 134.6 & 161.4 \\
\cmidrule(lr){2-9}
& \multirow{2}{*}{$\widehat{\mu}_{cf}(\widehat{m},\pi^{(\delta)})$ (equal)}
& RB & 0.0 & 0.0 & 0.0 & 0.0 & 0.0 & 0.0 \\
& & RE & 77.4 & 69.1 & 65.5 & 88.9 & 84.6 & 80.7 \\
\cmidrule(lr){2-9}
& \multirow{2}{*}{$\widehat{\mu}_{cf}(\widehat{m},\pi^{(\delta)})$ (unequal)}
& RB & 0.0 & 0.0 & 0.0 & 0.0 & 0.0 & 0.0 \\
& & RE & 130.4 & 115.7 & 111.5 & 148.9 & 140.4 & 135.8 \\
\cmidrule(lr){2-9}
& \multirow{2}{*}{$\widehat{\mu}_{cf}(\widehat{m},\pi)$}
& RB & 0.0 & 0.0 & 0.0 & -0.1 & 0.0 & 0.0 \\
& & RE & 78.1 & 68.3 & 65.7 & 89.4 & 84.4 & 80.6 \\
\cmidrule(lr){2-9}
& \multirow{2}{*}{$\widehat{\mu}_{cf}(\widehat{m},\pi_{|\cA})$}
& RB & 0.0 & 0.0 & 0.0 & -0.1 & 0.0 & 0.0 \\
& & RE & 77.6 & 68.3 & 65.6 & 89.5 & 84.4 & 80.8 \\
\midrule

\multirow{10}{*}{\shortstack{Regression\\ Trees}}
& \multirow{2}{*}{$\widehat{\mu}_{ma}(\widehat{m},\pi)$}
& RB & 0.0 & 0.0 & 0.0 & -1.4 & -1.3 & -1.0 \\
& & RE & 92.3 & 78.4 & 74.9 & 72.3 & 65.9 & 60.0 \\
\cmidrule(lr){2-9}
& \multirow{2}{*}{$\widehat{\mu}_{cf}(\widehat{m},\pi^{(\delta)})$ (equal)}
& RB & 0.0 & 0.0 & 0.0 & 0.0 & 0.0 & 0.0 \\
& & RE & 103.3 & 87.0 & 82.5 & 86.8 & 57.0 & 42.5 \\
\cmidrule(lr){2-9}
& \multirow{2}{*}{$\widehat{\mu}_{cf}(\widehat{m},\pi^{(\delta)})$ (unequal)}
& RB & 0.0 & 0.0 & 0.0 & 0.0 & 0.0 & 0.0 \\
& & RE & 169.6 & 143.5 & 138.4 & 130.7 & 85.3 & 66.8 \\
\cmidrule(lr){2-9}
& \multirow{2}{*}{$\widehat{\mu}_{cf}(\widehat{m},\pi)$}
& RB & 0.0 & 0.0 & 0.0 & 0.0 & 0.0 & 0.0 \\
& & RE & 103.6 & 84.3 & 82.3 & 88.9 & 58.2 & 42.1 \\
\cmidrule(lr){2-9}
& \multirow{2}{*}{$\widehat{\mu}_{cf}(\widehat{m},\pi_{|\cA})$}
& RB & 0.0 & 0.0 & 0.0 & 0.0 & 0.0 & 0.0 \\
& & RE & 103.9 & 84.4 & 82.5 & 88.8 & 58.2 & 42.2 \\
\bottomrule
\end{tabular}}

\endgroup
\end{table}

When the linear model is correctly specified, cross-fitting with balanced folds entails a modest efficiency loss that decreases as the sample size increases. Under the nonlinear relationship, cross-fitting can improve efficiency with flexible learners in larger samples. For random forests with $n=1000$, the relative efficiency decreases from $31.2\%$ for the standard estimator to $25.2\%$ for design-aware cross-fitting with equal folds. With KNN, persistent bias makes the standard estimator less efficient as $n$ increases, whereas the cross-fitted estimators remain stable. Thus, even a small persistent bias can dominate the mean squared error in large samples.

\begin{table}[p]
\centering
\caption{Relative bias of the variance estimators and coverage probabilities under the linear and nonlinear relationships.}
\label{Table 2}

\begingroup
\small
\renewcommand{\arraystretch}{0.65}
\setlength{\tabcolsep}{4.5pt}

\resizebox{\textwidth}{!}{\begin{tabular}{@{}lllcccccc@{}}
\toprule
\multirow{2}{*}{Model}
& \multirow{2}{*}{Method}
& \multirow{2}{*}{}
& \multicolumn{3}{c}{Linear relationship}
& \multicolumn{3}{c}{Nonlinear relationship} \\
\cmidrule(lr){4-6}\cmidrule(lr){7-9}
& & $n$ & 200 & 500 & 1000 & 200 & 500 & 1000 \\
\midrule

\multirow{10}{*}{\shortstack{Linear\\ Regression}}
& \multirow{2}{*}{$\widehat{\mu}_{ma}(\widehat{m},\pi)$}
& RB-Var & -19.1 & -7.1 & -4.6 & -20.2 & -8.9 & -5.0 \\
& & CP & 92.1 & 94.1 & 94.4 & 90.8 & 93.4 & 94.1 \\
\cmidrule(lr){2-9}
& \multirow{2}{*}{$\widehat{\mu}_{cf}(\widehat{m},\pi^{(\delta)})$ (equal)}
& RB-Var & -0.2 & 0.5 & -0.5 & 0.2 & -0.3 & -1.0 \\
& & CP & 94.9 & 95.0 & 94.8 & 94.5 & 94.7 & 94.6 \\
\cmidrule(lr){2-9}
& \multirow{2}{*}{$\widehat{\mu}_{cf}(\widehat{m},\pi^{(\delta)})$ (unequal)}
& RB-Var & 0.2 & 0.7 & -1.1 & 0.4 & -0.5 & -1.6 \\
& & CP & 94.6 & 94.9 & 94.8 & 93.0 & 94.1 & 94.2 \\
\cmidrule(lr){2-9}
& \multirow{2}{*}{$\widehat{\mu}_{cf}(\widehat{m},\pi)$}
& RB-Var & 1.2 & 1.7 & 0.1 & 0.6 & -0.4 & -0.6 \\
& & CP & 95.0 & 95.4 & 94.9 & 94.3 & 94.8 & 94.7 \\
\cmidrule(lr){2-9}
& \multirow{2}{*}{$\widehat{\mu}_{cf}(\widehat{m},\pi_{|\cA})$}
& RB-Var & 0.2 & 1.4 & -0.3 & 0.0 & -0.7 & -0.9 \\
& & CP & 94.8 & 95.2 & 94.9 & 94.2 & 94.7 & 95.8 \\
\midrule

\multirow{10}{*}{\shortstack{Random\\ Forests}}
& \multirow{2}{*}{$\widehat{\mu}_{ma}(\widehat{m},\pi)$}
& RB-Var & -66.9 & -67.3 & -68.1 & -53.0 & -56.7 & -60.1 \\
& & CP & 74.1 & 73.4 & 72.9 & 75.9 & 68.4 & 59.7 \\
\cmidrule(lr){2-9}
& \multirow{2}{*}{$\widehat{\mu}_{cf}(\widehat{m},\pi^{(\delta)})$ (equal)}
& RB-Var & 0.2 & 0.7 & -0.8 & -1.4 & -2.7 & -3.1 \\
& & CP & 94.9 & 95.1 & 94.7 & 94.1 & 94.2 & 94.3 \\
\cmidrule(lr){2-9}
& \multirow{2}{*}{$\widehat{\mu}_{cf}(\widehat{m},\pi^{(\delta)})$ (unequal)}
& RB-Var & 0.8 & -0.5 & -1.6 & 0.3 & -0.7 & -2.4 \\
& & CP & 94.6 & 94.9 & 94.8 & 93.0 & 94.1 & 94.2 \\
\cmidrule(lr){2-9}
& \multirow{2}{*}{$\widehat{\mu}_{cf}(\widehat{m},\pi)$}
& RB-Var & 1.0 & 0.8 & -0.1 & -0.8 & -2.8 & -2.1 \\
& & CP & 95.0 & 95.1 & 94.9 & 94.1 & 94.2 & 94.5 \\
\cmidrule(lr){2-9}
& \multirow{2}{*}{$\widehat{\mu}_{cf}(\widehat{m},\pi_{|\cA})$}
& RB-Var & -0.2 & 0.5 & -0.5 & -1.7 & -3.1 & -2.3 \\
& & CP & 94.9 & 95.0 & 94.9 & 94.0 & 94.2 & 94.6 \\
\midrule

\multirow{10}{*}{KNN}
& \multirow{2}{*}{$\widehat{\mu}_{ma}(\widehat{m},\pi)$}
& RB-Var & -32.0 & -31.8 & -32.5 & -24.6 & -26.4 & -26.8 \\
& & CP & 89.3 & 89.5 & 89.1 & 85.3 & 74.7 & 68.1 \\
\cmidrule(lr){2-9}
& \multirow{2}{*}{$\widehat{\mu}_{cf}(\widehat{m},\pi^{(\delta)})$ (equal)}
& RB-Var & 0.2 & 0.3 & 0.6 & 1.2 & 0.1 & -0.2 \\
& & CP & 94.7 & 94.8 & 95.0 & 94.4 & 94.9 & 94.9 \\
\cmidrule(lr){2-9}
& \multirow{2}{*}{$\widehat{\mu}_{cf}(\widehat{m},\pi^{(\delta)})$ (unequal)}
& RB-Var & -0.7 & 0.3 & -0.9 & 0.5 & 0.3 & -1.1 \\
& & CP & 94.5 & 94.8 & 94.6 & 93.2 & 94.2 & 94.3 \\
\cmidrule(lr){2-9}
& \multirow{2}{*}{$\widehat{\mu}_{cf}(\widehat{m},\pi)$}
& RB-Var & 1.3 & 1.2 & 0.8 & 1.9 & 0.2 & 0.6 \\
& & CP & 95.0 & 95.0 & 95.1 & 94.7 & 94.8 & 94.9 \\
\cmidrule(lr){2-9}
& \multirow{2}{*}{$\widehat{\mu}_{cf}(\widehat{m},\pi_{|\cA})$}
& RB-Var & 0.5 & 0.8 & 0.6 & 1.5 & 0.1 & 0.4 \\
& & CP & 94.5 & 95.0 & 95.0 & 94.6 & 94.7 & 94.9 \\
\midrule

\multirow{10}{*}{\shortstack{Regression\\ Trees}}
& \multirow{2}{*}{$\widehat{\mu}_{ma}(\widehat{m},\pi)$}
& RB-Var & -80.0 & -80.6 & -80.9 & -59.0 & -65.6 & -69.8 \\
& & CP & 61.4 & 61.5 & 61.1 & 71.7 & 58.0 & 49.7 \\
\cmidrule(lr){2-9}
& \multirow{2}{*}{$\widehat{\mu}_{cf}(\widehat{m},\pi^{(\delta)})$ (equal)}
& RB-Var & -0.8 & -0.7 & -1.1 & -1.1 & -2.5 & -2.2 \\
& & CP & 94.8 & 94.9 & 94.8 & 94.2 & 94.2 & 94.6 \\
\cmidrule(lr){2-9}
& \multirow{2}{*}{$\widehat{\mu}_{cf}(\widehat{m},\pi^{(\delta)})$ (unequal)}
& RB-Var & 0.7 & 0.5 & -1.5 & 0.3 & 0.9 & -2.1 \\
& & CP & 94.5 & 94.8 & 94.7 & 94.1 & 94.7 & 94.5 \\
\cmidrule(lr){2-9}
& \multirow{2}{*}{$\widehat{\mu}_{cf}(\widehat{m},\pi)$}
& RB-Var & 0.2 & 2.2 & -0.2 & -3.4 & -2.7 & -0.8 \\
& & CP & 94.8 & 95.2 & 94.8 & 93.8 & 94.4 & 94.7 \\
\cmidrule(lr){2-9}
& \multirow{2}{*}{$\widehat{\mu}_{cf}(\widehat{m},\pi_{|\cA})$}
& RB-Var & -0.6 & 1.9 & -0.5 & -4.1 & -2.9 & -1.0 \\
& & CP & 94.8 & 95.2 & 94.8 & 93.7 & 94.4 & 94.8 \\
\bottomrule
\end{tabular}}

\endgroup
\end{table}

Balanced sample fold sizes are important for the design-aware procedure. For random forests under the linear relationship with $n=200$, moving from equal to unequal folds increases the relative efficiency from $68.3\%$ to $111.3\%$. Apart from this effect, the design-aware and design-agnostic procedures perform similarly, and the choice between $\pi$ and $\pi_{|\cA}$ is essentially immaterial.

Table~\ref{Table 2} shows that the standard model-assisted variance estimator can be severely biased downward. Under the linear relationship, this bias decreases with sample size for linear regression but remains large for random forests and regression trees. The resulting confidence intervals undercover substantially: under the nonlinear relationship with $n=1000$, coverage is $59.7\%$ for random forests and $49.7\%$ for regression trees.

Cross-fitting largely eliminates this downward variance bias and yields coverage generally around $94\%$--$95\%$. Unequal folds can slightly reduce coverage for the design-aware procedure, but the effect is much smaller than for point estimation. The design-aware and design-agnostic procedures again perform similarly.

\FloatBarrier

\section{Final remarks}\label{sec:final}

In this paper, we developed an agnostic framework for model-assisted estimation with modern machine learning methods. The proposed cross-fitting procedures accommodate flexible prediction rules while ensuring design consistency, asymptotic normality, and valid variance estimation under general conditions. The simulations show negligible bias, competitive or improved efficiency, and variance estimators with small bias, yielding Wald confidence intervals with coverage close to the nominal level, particularly for highly adaptive learners.

Although our development focuses on the population mean, the framework extends naturally to finite-population parameters defined through estimating equations. Let $\theta_N$ solve the census estimating equation
$$
\Psi(\theta)
=
\frac{1}{N}\sum_{k\in U}\psi(y_k,\theta)
=
0,
$$
where $\psi:\mathbb{R}\times\Theta\rightarrow\mathbb{R}$ is known. For example, $\psi(y,\theta)=y-\theta$ yields the population mean, whereas $\psi(y,\theta)=\mathbf{1}(y\leq t)-\theta$ yields the finite-population distribution function at $t$.

Let $\widehat{g}^{(-k)}(\mathbf{x}_k,\theta)$ be a cross-fitted predictor of $\psi(y_k,\theta)$ constructed without observations from the fold containing unit $k$. A cross-fitted model-assisted estimator $\widehat{\theta}_{cf}$ can then be defined as a solution to $\widehat{\Psi}_{cf}(\widehat{\theta}_{cf})=0$, where
$$
\widehat{\Psi}_{cf}(\theta)
=
\frac{1}{N}
\left\{
\sum_{k\in U}\widehat{g}^{(-k)}(\mathbf{x}_k,\theta)
+
\sum_{k\in S}
\frac{
\psi(y_k,\theta)
-
\widehat{g}^{(-k)}(\mathbf{x}_k,\theta)
}{\pi_k}
\right\}.
$$
Model-assisted estimation of finite-population distribution functions using auxiliary information was studied by \citet{rao1990} under simple regression models, including ratio and difference-type estimators. The formulation above accommodates general prediction methods and addresses data reuse through cross-fitting. Encouraging preliminary simulations motivate a detailed study of these extensions in future work.

\renewcommand{\bibsection}{\section*{References}}

\begingroup
\setstretch{1.1}
\setlength{\bibsep}{5pt}
\setlength{\parskip}{0pt}
\filterreferences
\bibliography{cross-fitting}
\endgroup

\endgroup

\newpage
\begingroup
\referencepart{appendix}
\appendix
\addtocontents{toc}{\protect\setcounter{tocdepth}{2}}

\begingroup
\setstretch{1}
\setlength{\parskip}{0pt}
\renewcommand{\contentsname}{Contents of the Appendix}
\setcounter{tocdepth}{-1}
\tableofcontents
\endgroup
\bigskip

\newpage
\section{Illustrative examples for model-based and model-assisted estimation}
\label{app:illustrative-examples}
\subsection{A simple illustration for the model-based estimator}
\label{app:mb-example}

This section illustrates how the convergence rate of a model-based estimator may be driven by that of the fitted prediction rule. This issue is particularly relevant for nonparametric and machine learning methods, which typically converge more slowly than the parametric rate $n^{-1/2}$. For example, when the regression function is Lipschitz in $p$ dimensions, a typical minimax rate in root mean squared prediction error is $n^{-1/(p+2)}$, up to constants and logarithmic factors. This gives $n^{-1/3}$ for $p=1$ and $n^{-1/4}$ for $p=2$.

Consider the prediction rule
$$
\widehat{m}(\mathbf{x}) = m(\mathbf{x})+\frac{\xi}{n^\alpha}, \qquad 0<\alpha<\frac{1}{2}, \qquad
\xi\sim\mathcal{N}(0,1).
$$
Although $\widehat m$ converges to $m$ at rate $n^{-\alpha}$, its contribution to the model-based estimator is
$$
\widehat{\mu}_{mb}(\widehat m) - \widehat{\mu}_{mb}(m) = \frac{1}{N} \sum_{k\in U\setminus S}
\frac{\xi}{n^\alpha} = \frac{N-n}{N}\frac{\xi}{n^\alpha}.
$$
Consequently,
$$
\sqrt n \left\{ \widehat{\mu}_{mb}(\widehat m) - \widehat{\mu}_{mb}(m) \right\} =
\frac{N-n}{N}n^{1/2-\alpha}\xi.
$$
If the nonsampling fraction $(N-n)/N$ is bounded away from zero, this quantity is not bounded in probability when $\alpha<1/2$. Thus, even though $\widehat m$ is consistent for $m$, the error due to estimating the prediction rule may dominate the usual $n^{-1/2}$ sampling variation. This illustrates the lack of Neyman orthogonality of the model-based estimator: prediction errors enter the estimator at first order.

\subsection{Failure of design consistency under data reuse}
\label{app:ma-counterexample}

We give a simple example showing that data reuse may prevent the feasible model-assisted estimator from being design-consistent. Consider a population with distinct auxiliary vectors and $y_k=1$ for every $k\in U$, so that $\mu=1$, and define the fitted prediction rule by
$$
\widehat{m}(\mathbf{x}_k;\mathbf{I})=I_k.
$$
Thus, the prediction is exact for every sampled unit but equals zero for every nonsampled unit. Substituting this rule into the model-assisted estimator gives
$$
\begin{aligned}
\widehat{\mu}_{\mathrm{ma}}(\widehat m,\pi)
&=
\frac{1}{N}
\left\{
\sum_{k\in U}I_k
+
\sum_{k\in S}\frac{1-I_k}{\pi_k}
\right\}  \\
&=
\frac{1}{N}\sum_{k\in U}I_k
=
\frac{n}{N}.
\end{aligned}
$$
Therefore, if $n/N\xrightarrow[v \to \infty]{} f<1$, then
$$
\widehat{\mu}_{\mathrm{ma}}(\widehat m,\pi) \xrightarrow[v \to \infty]{} f\neq\mu.
$$
The estimator is consequently not design-consistent unless the sampling fraction converges to one. The failure is caused by the perfect dependence between $\widehat{m}(\mathbf{x}_k;\mathbf{I})$ and $I_k$: the fitted rule interpolates the sampled observations but does not provide meaningful predictions outside the sample. This example illustrates why some control of the dependence induced by data reuse is necessary for a general theory.

\section{Additional material on cross-fitting} \label{AppB}
\subsection{Design-aware cross-fitting schemes}
\label{supp:lu-schemes}

\citet{lu2025conditional} propose several cross-fitting schemes satisfying the conditional independence property (Ind-$\delta$). We briefly describe their constructions for two folds, that is, $M=2$.

\paragraph*{Bernoulli sampling.}
For a Bernoulli randomized experiment, corresponding to Bernoulli sampling in our setting, generate $N$ independent Bernoulli random variables $(L_k)_{k\in U}$ with success probability $\pi$, independently of $S$. The two folds are defined by
$$
U_1=\{k\in U:L_k=1\}, \qquad U_2=\{k\in U:L_k=0\}.
$$

\paragraph*{Simple random sampling without replacement.}
For a completely randomized experiment, corresponding to simple random sampling without replacement (SRSWOR), first split the sample $S$ uniformly at random into two disjoint subsets $S_1$ and $S_2$ of prescribed sizes $n_1$ and $n_2$, where $n_1+n_2=n$. Independently of this split, split the nonsampled units $U\setminus S$ uniformly at random into two disjoint subsets $R_1$ and $R_2$ of respective sizes $N_1-n_1$ and $N_2-n_2$. The folds are then defined by
$$
U_1=S_1\cup R_1, \qquad U_2=S_2\cup R_2.
$$

\paragraph*{Stratified simple random sampling without replacement.}
For a stratified randomized experiment, corresponding to stratified SRSWOR, \citet{lu2025conditional} describe two constructions. The first applies the preceding SRSWOR construction independently within each stratum and combines the resulting subsets across strata. The second partitions the stratum labels $1,\ldots,H$ between the two folds, independently of $S$, assigning all units from a given stratum to the same fold.

Each construction produces a random partition for which the sampling indicators in one fold are conditionally independent of those in the other fold, given the partition. Thus, all three schemes satisfy condition (Ind-$\delta$).

\subsection{Extension of cross-fitting schemes for stratified sampling}

Assume that the population $U$ is partitioned into $H$ strata $U^{(1)}, ..., U^{(H)}$. Stratified sampling uses the design $p(s)=\prod_{h=1}^H p_h(s_h)$, where $s_h=s\cap U^{(h)}$ and sampling is independent across strata. One valid cross-fitting scheme consists of partitioning the stratum labels into $F_1, ..., F_M$, independently of $S$, and setting $U_j=\bigcup_{h\in F_j}U^{(h)}$. Since each stratum belongs to only one fold, independence across strata gives \textnormal{(Ind-$\delta$)}, regardless of the within-stratum designs. The drawback is that prediction accuracy may suffer: if the relationships between the covariates and the survey variable differ substantially across strata, the prediction rule must extrapolate to strata absent from its training sample.

To circumvent this drawback, we propose the following construction, which uses within-stratum factorization when available and the above observation otherwise.

\begin{algorithm}[H]
\setstretch{1}
\setlength{\parskip}{4pt}
\setlength{\abovedisplayskip}{6pt}
\setlength{\belowdisplayskip}{6pt}
\setlength{\abovedisplayshortskip}{3pt}
\setlength{\belowdisplayshortskip}{3pt}
\caption{Cross-fitting under stratified sampling.}
\label{alg:stratified-crossfitting}
\vspace{1mm}

\textbf{Input:}
Strata $U^{(1)},\ldots,U^{(H)}$, sample $S$, number of folds $M$, and a fixed partition of the strata labels
$$
\{1,\ldots,H\}     =     \cH_{\rm split}\biguplus\cH_{\rm full},
$$
where a valid within-stratum cross-fitting scheme is specified for each $h\in\cH_{\rm split}$, and each stratum $h\in\cH_{\rm full}$ is kept intact.

\textbf{Do:}

\begin{enumerate}

    \item For each $h\in\cH_{\rm split}$, apply the specified
    cross-fitting scheme using only the sample in $U^{(h)}$ and independent randomization across strata, producing
    $$
U^{(h)}=U_1^{(h)}\biguplus\cdots\biguplus U^{(h)}_M.
$$
    
        \item For $j=1,\ldots,M$, combine the corresponding
    stratum-level folds by setting
    $$
U_j=\bigcup_{h\in\cH_{\rm split}} U_j^{(h)}.
$$

    \item For each $h\in\cH_{\rm full}$, assign the entire stratum
    $U^{(h)}$ to one of the folds $U_1,\ldots,U_M$, independently of $S$ and of the preceding randomizations.

\end{enumerate}

\textbf{Output:}
The population partition
$$
\cP=(U_1,\ldots,U_M).
$$
\end{algorithm}

\begin{proposition}\label{prop:strat}
    Let $p$ be a stratified sampling design with independent sampling across strata, and let $\cP$ be constructed by Algorithm \ref{alg:stratified-crossfitting}. Let $\cA$ collect the within-stratum conditioning variables and the assignments of intact strata. Then, the resulting scheme satisfies \textnormal{(Ind-$\cA$)}.
\end{proposition}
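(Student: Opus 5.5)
Write $\cA=(\cA_h)_{h\in\cH_{\rm split}}\cup(\cA_{\rm full})$, where $\cA_h$ is the conditioning variable of the within-stratum scheme in stratum $h$ and $\cA_{\rm full}$ records the fold assignment of each intact stratum. The idea is to show that, given $\cA$, the vector of sampling indicators factorizes into pieces that are independent of one another and each live inside a single fold. Concretely, I will work with the pieces $\mathbf{I}^{(h)}_{U_j^{(h)}}$ for $h\in\cH_{\rm split}$ and $j=1,\ldots,M$, and $\mathbf{I}^{(h)}$ for $h\in\cH_{\rm full}$. The statement then follows by grouping: $\{I_k\}_{k\in U_j}$ is a function of the pieces lying in fold $j$, and $\{I_k\}_{k\notin U_j}$ is a function of the remaining pieces.

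\textbf{Step 1: independence across strata given $\cA$.} The design factorizes as $p(s)=\prod_h p_h(s_h)$. The randomization of each split stratum uses only $S\cap U^{(h)}$ together with independent auxiliary randomness, and the intact-strata assignment is independent of everything else. Hence the tuples $(\mathbf{I}^{(h)},\cA_h)$ for $h\in\cH_{\rm split}$, the vectors $\mathbf{I}^{(h)}$ for $h\in\cH_{\rm full}$, and $\cA_{\rm full}$ are mutually independent. A standard lemma says that if $(X_h,Z_h)$ are mutually independent, then the $X_h$ are conditionally independent given $(Z_h)_h$, and $X_h\mid (Z_{h'})_{h'}\sim X_h\mid Z_h$. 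Applying it, the blocks $\mathbf{I}^{(h)}$ are conditionally independent given $\cA$, each with conditional law depending only on its own $\cA_h$. The intact strata need slightly more care: their indicators are independent of $\cA$ altogether, since $\cA_{\rm full}$ is independent of $S$.

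\textbf{Step 2: within-stratum factorization.} Validity of the within-stratum scheme means (Ind-$\cA_h$) holds inside $U^{(h)}$. For $M$ folds I need mutual conditional independence of $\mathbf{I}^{(h)}_{U_1^{(h)}},\ldots,\mathbf{I}^{(h)}_{U_M^{(h)}}$ given $\cA_h$, and this follows from the pairwise-against-complement property by induction. Condition on $\cA_h$; then fold $1$ is independent of the rest, and applying the property to fold $2$ inside the complement of fold $1$ and iterating gives full factorization. Note also that $\sigma(\cA_h)$ determines the partition of $U^{(h)}$, so the folds are fixed given $\cA$, as is the assignment of the intact strata.

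\textbf{Step 3: conclusion.} Combining Steps 1 and 2, given $\cA$ all the pieces are mutually independent, and each piece's index set is contained in a single fold that is $\cA$-measurable. Grouping the pieces belonging to $U_j$ against all the others yields (Ind-$\cA$) for every $j$. I also need to check $\sigma(\boldsymbol\delta)\subseteq\sigma(\cA)$, so that the pair is indeed a cross-fitting scheme in the sense of Definition~\ref{def1}.

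The main obstacle I expect is Step 1. The conditioning variables $\cA_h$ may depend on the sample itself, for instance through fold counts, so I must verify carefully that conditioning on the full collection $\cA$ does not introduce dependence across strata. This holds precisely because each $\cA_h$ is generated from stratum $h$ data and independent randomness only. I would state the conditional-independence lemma explicitly and prove it via the product form of the joint density.
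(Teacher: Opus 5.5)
Your proposal is correct and follows essentially the same route as the paper. Both arguments first get conditional independence across strata given $\cA$, because each within-stratum conditioning variable depends only on its own stratum's sample and randomization and the intact-strata assignment is independent of everything; they then use within-stratum factorization across folds and regroup the pieces by fold. Your explicit derivation of the full $M$-fold factorization from the fold-versus-complement property, and your check that $\sigma(\boldsymbol{\delta})\subseteq\sigma(\cA)$, fill in details the paper leaves implicit.
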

\begin{proof}

The samples and the randomizations used within different strata are independent. Each within-stratum conditioning variable depends only on that stratum's sample and randomization. Conditioning on their collection therefore preserves independence across strata. The assignments of intact strata are independent of these variables and of the samples, so conditioning on them also preserves this independence.

Within each split stratum, the conditional sample law factorizes across folds by the assumed validity of its cross-fitting scheme. Each intact stratum contributes sampling indicators to only one fold, so its factors for the other folds equal one. For $s\in\cP(U)$, these factorizations give
\begin{align*}
\P(S=s\mid\cA)
&=\prod_{h=1}^H\P(S\cap U^{(h)}=s\cap U^{(h)}\mid\cA)\\
&=\prod_{h=1}^H\prod_{j=1}^M
\P(S\cap U_j\cap U^{(h)}=s\cap U_j\cap U^{(h)}\mid\cA)\\
&=\prod_{j=1}^M\P(S\cap U_j=s\cap U_j\mid\cA).
\end{align*}
The last equality groups the independent stratum samples by fold. This proves \textnormal{(Ind-$\cA$)}.
\end{proof}

Following \cite{sarndal1992}, for strata $h \in \cH_{\rm split}$, several possibilities exist in order to train a regression function estimator $\widehat{m}$: (i) stratum-specific training; (ii) pooled training. Stratum-specific training consists in fitting a separate model for each stratum $h$, that is, fitting a regression function estimator $\widehat{m}_{jh}$ on $(\bx_k, y_k)_{S^{(h)} \backslash S_j^{(h)}}$ for prediction in $U_j^{(h)}$. Pooled training, on the other hand, would consist of pooling all the data across strata together in folds and proceeding to train a common regression function estimator $\widehat{m}_{j}$ on $(\bx_k, y_k)_{S \backslash S_j}$, to predict in the whole fold $U_j$, which would typically contain different strata. The main advantage of the stratum-specific training is that it allows for relationships between covariates $X$ and the survey variable $Y$ to differ freely in each stratum. However, the associated drawback is that the resulting sample size used to fit the stratum-specific learner $\widehat{m}_{jh}$ would typically be much smaller than that which could be used with the pooled learner $\widehat{m}_{j}$. If the sample size within strata is large enough and strata are very heterogeneous, this may be an appropriate strategy. If not, fitting a complex learner like a deep neural network, for example, might be challenging. The pooled version, on the other hand, would not suffer from these drawbacks, at the price of perhaps capturing only global trends in the regression function. An alternative strategy, when the number of strata is small compared to the overall sample size, consists of including the strata memberships as additional covariates.
Nonetheless, independently of the learning strategy used, the cross-fitting scheme will remain valid. For strata $h \in \cH_{\rm full}$, however, these strategies are not possible, and only pooled training is available.

\section{Godambe-Joshi asymptotic optimality} \label{App:GJ}

To discuss optimality, suppose that the population data $(\bx_k,y_k)_{k\in U_v}$ are independent and identically distributed. We can then write
$$y_k=m(\bx_k)+\epsilon_k,\qquad k\in U_v,$$
where $m:\bx\mapsto\E[Y\mid X=\bx]$. Conditional on the population covariates $\boldsymbol{X}_U$, the errors are independent, with $\E_m[\epsilon_k]=0$ and $\E_m[\epsilon_k^2]=\sigma^2(\bx_k)$. The sampling design is non-informative, and the fold construction and $\cA$ use only the sample, the covariates, and independent randomization. Thus, given $\boldsymbol{X}_U$, $(\boldsymbol{I}_U,\cA)$ is independent of $\boldsymbol{y}_U$; see \cite{pfeffermann2009inference} for non-informative sampling. Throughout this appendix, expectations are conditional on the population covariates: $\E_m$ averages over the responses, and $\E_{mp}$ also averages over sampling and fold construction. We moreover assume that
$$\limsup_{v\to\infty}\dfrac{1}{N_v}\sum_{k\in U_v}\sigma^2(\bx_k)<\infty.$$

In this setup, \cite{godambe1965admissibility} showed that, for any design-unbiased $\widehat{\mu}$,
$$\E_{mp}\left[\left(\widehat{\mu}-\mu\right)^2\right]\geq\dfrac{1}{N_v^2}\sum_{k\in U_v}\dfrac{1-\pi_k}{\pi_k}\sigma^2(\bx_k).$$
Several model-assisted estimators reach this bound to first order under appropriate assumptions, meaning
$$n_v\E_{mp}\left[\left(\widehat{\mu}_{ma}-\mu\right)^2\right]=\dfrac{n_v}{N_v^2}\sum_{k\in U_v}\dfrac{1-\pi_k}{\pi_k}\sigma^2(\bx_k)+o(1):=\textrm{GJ}+o(1).$$
This holds for linear regression with a well-specified model \citep{robinsonsarndal1983}, local polynomials \citep{breidt2000local}, and $B$-splines \citep{goga2005reduction}, to mention a few. The same argument applies to cross-fitting. When $\widetilde m=m$, the oracle $\widehat{\mu}_{\mathrm{ma}}(m,\pi)$ reaches the bound exactly. It therefore suffices to strengthen the first-order equivalence with the feasible estimator to convergence in $L^2$.

For the conditional oracle $\widehat{\mu}_{\mathrm{ma}}(m,\pi_{|\cA})$, the joint mean-squared error has an additional nonnegative term. A sufficient condition for this term to vanish at first order is that $\E_p[\max_{k\in U_v}(\pi_{k|\cA}-\pi_k)^2]=o(1)$. To make the role of this proximity clear, we do not assume \ref{CF3} in the following theorem and instead state the conditions directly.

\begin{theorem}\label{theoGJ}
Assume (Ind-$\cA$), \ref{D1}, \ref{D2}, \ref{CF2c}, \ref{CF4}, and \ref{M1}, with \ref{M1} holding almost surely under the model. Suppose that $\widetilde m=m$ and that the prediction errors $\widehat{m}^{(-k)}(\bx_k)-m(\bx_k)$ are bounded by a fixed constant, uniformly in $k$ and $v$, almost surely under the joint model and sampling law. Then, the following statements hold.
\begin{enumerate}
    \item[(i)] Assume that
    $$\E_p\left[\max_{k\in U_v}\left(\pi_{k|\cA}-\pi_k\right)^2\right]=o(1).$$
    Then, $\widehat{\mu}_{cf}(\widehat m,\pi_{|\cA})$ asymptotically reaches the Godambe-Joshi lower bound.
    \item[(ii)] Assume that
    $$\lim_{v\to\infty}n_v\E_{mp}\left[\max_{k\in U_v}\left(\pi_{k|\cA}-\pi_k\right)^2\times\frac{1}{N_v}\sum_{k\in U_v}\left(\widehat{m}^{(-k)}(\bx_k)-m(\bx_k)\right)^2\right]=0.$$
    Then, $\widehat{\mu}_{cf}(\widehat m,\pi)$ asymptotically reaches the Godambe-Joshi lower bound.
\end{enumerate}
\end{theorem}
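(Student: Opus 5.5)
Write $R_k:=\widehat m^{(-k)}(\bx_k)-m(\bx_k)$ and $\epsilon_k=y_k-m(\bx_k)$. For a generic weight $w_k\in\{\pi_{k|\cA},\pi_k\}$, the feasible estimator splits as
$$
\widehat\mu_{cf}(\widehat m,w)-\mu
=\underbrace{\frac1{N_v}\sum_{k\in U_v}\Bigl(\frac{I_k}{w_k}-1\Bigr)\epsilon_k}_{T_1(w)}
\;-\;\underbrace{\frac1{N_v}\sum_{k\in U_v}\Bigl(\frac{I_k}{w_k}-1\Bigr)R_k}_{T_2(w)} .
$$
Here $T_1(w)$ is the oracle error $\widehat\mu_{\mathrm{ma}}(m,w)-\mu$. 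Writing $n_v\E_{mp}[(T_1-T_2)^2]$ and expanding, it suffices by Cauchy--Schwarz to prove two things. First, $n_v\E_{mp}[T_1(w)^2]=\mathrm{GJ}+o(1)$. Second, $n_v\E_{mp}[T_2(w)^2]\to0$. The cross term is then $o(1)$ because $\mathrm{GJ}=O(1)$ by \ref{D1}, \ref{D2} and the bound on $N_v^{-1}\sum\sigma^2$.

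\textbf{Oracle term.} Conditionally on $(\boldsymbol I_U,\cA)$, non-informativeness makes the $\epsilon_k$ independent with mean zero and variance $\sigma^2(\bx_k)$. Hence
$$
\E_{mp}[T_1(w)^2]=\frac1{N_v^2}\sum_{k}\E_p\Bigl[\Bigl(\frac{I_k}{w_k}-1\Bigr)^2\Bigr]\sigma^2(\bx_k).
$$
\begin{itemize}
\item For $w=\pi$, this gives $\E_p[(I_k/\pi_k-1)^2]=(1-\pi_k)/\pi_k$, which is exactly GJ.
\item For $w=\pi_{|\cA}$, conditioning on $\cA$ gives $\E_p[(1-\pi_{k|\cA})/\pi_{k|\cA}]$. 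Positivity comes from \ref{CF2c}.
\end{itemize}
The difference from GJ in the second case is
$$
\E_p\bigl[1/\pi_{k|\cA}-1/\pi_k\bigr]=\E_p\Bigl[\frac{(\pi_{k|\cA}-\pi_k)^2}{\pi_{k|\cA}\pi_k^2}\Bigr],
$$
using $\E_p\pi_{k|\cA}=\pi_k$. By \ref{CF2c}, \ref{D2} and the hypothesis of (i), this is at most $\lambda_c^{-1}\lambda^{-2}\E_p[\max_k(\pi_{k|\cA}-\pi_k)^2]=o(1)$ uniformly in $k$. Multiplying by $n_vN_v^{-2}\sum_k\sigma^2(\bx_k)=O(1)$ gives $\mathrm{GJ}+o(1)$. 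This is the source of the extra nonnegative term mentioned before the theorem.

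\textbf{Remainder term, case (i).} Condition on $(\cA,\boldsymbol y_U)$. By (Ind-$\cA$), $R_k$ is a function of the sample outside the fold of $k$ and of $\cA$. Therefore $(I_k/\pi_{k|\cA}-1)$ has conditional mean zero given everything $R$-measurable from other folds. Expanding $T_2^2$ as a double sum over $k,l$:
\begin{itemize}
\item Pairs in different folds factor conditionally. The covariances are then governed by $\Delta_{kl|\cA}$ through the conditional independence structure, mirroring the proof of Theorem~\ref{th:equivalent1}.
\item Pairs in the same fold have $R_k,R_l$ fixed given the outside sample and $\cA$. Their conditional covariance is $\Delta_{kl|\cA}/(\pi_{k|\cA}\pi_{l|\cA})$.
\end{itemize}
Using \ref{CF4} and \ref{CF2c}, this yields
$$
n_v\E[T_2^2]\lesssim \frac{n_v}{N_v^2}\,\E\Bigl[\Bigl(\lambda_c^{-1}+C_{\Delta2}\lambda_c^{-2}\tfrac{N_v}{n_v}\Bigr)\sum_kR_k^2\Bigr]=O\Bigl(\E\Bigl[\tfrac1{N_v}\sum_kR_k^2\Bigr]\Bigr).
$$
This bound is the same computation as in Theorem~\ref{th:equivalent1}, but carried out in $L^2$ instead of in probability. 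It tends to zero by \ref{M1} (holding a.s. under the model) together with dominated convergence, which applies because the $R_k$ are uniformly bounded. The boundedness assumption is precisely what upgrades $o_\P$ to $L^2$ control; this upgrade is the main technical point.

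\textbf{Remainder term, case (ii).} Write $I_k/\pi_k-1=\pi_{k|\cA}/\pi_k\,(I_k/\pi_{k|\cA}-1)+(\pi_{k|\cA}/\pi_k-1)$. This splits $T_2(\pi)$ into two parts:
\begin{itemize}
\item A first part of the case (i) form, with bounded multipliers $\pi_{k|\cA}/\pi_k\le\lambda^{-1}$. It is handled identically.
\item A second part, $N_v^{-1}\sum_k(\pi_{k|\cA}-\pi_k)R_k/\pi_k$. By Cauchy--Schwarz its square is at most $\lambda^{-2}\max_k(\pi_{k|\cA}-\pi_k)^2\cdot N_v^{-1}\sum_kR_k^2$. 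Multiplied by $n_v$, its expectation vanishes exactly by the hypothesis of (ii).
\end{itemize}
The oracle term for $w=\pi$ equals GJ exactly, so (ii) follows.

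I expect the main obstacle to be the cross-fold covariance bookkeeping in $\E[T_2^2]$. Those terms must be controlled using conditional independence, without assuming any rate on $\widehat m$.
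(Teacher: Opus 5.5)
Your architecture matches the paper's. It uses the oracle-plus-remainder decomposition with Cauchy--Schwarz on the cross term, and the identity $\E_p[1/\pi_{k|\cA}]-1/\pi_k=\E_p[(\pi_{k|\cA}-\pi_k)^2/(\pi_k^2\pi_{k|\cA})]$ in (i). It uses the same two-part split of $I_k/\pi_k-1$ in (ii), and dominated convergence via the bounded prediction errors. Strictly, \ref{M1} is already an $L^2$ statement under the design; what boundedness buys is exchanging the limit with $\E_m$. Your oracle calculations and the second part of (ii) are fine.

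The gap is the step you flag yourself, the bound on $n_v\E_{mp}[T_2^2]$: your treatment of pairs $k,l$ in different folds fails. Under (Ind-$\cA$), $\Delta_{kl|\cA}=0$ for such pairs. But with $d_k=R_k/\pi_{k|\cA}$, the terms $\E_p[(I_k-\pi_{k|\cA})d_k(I_l-\pi_{l|\cA})d_l\mid\cA]$ need not vanish, because $d_k$ depends on $I_l$ and $d_l$ depends on $I_k$. For $M=2$, $k\in U_1$, $l\in U_2$, the term equals $\operatorname{Cov}_p(I_k,d_l\mid\cA)\operatorname{Cov}_p(I_l,d_k\mid\cA)$. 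This is a product of indicator--prediction covariances about which \ref{CF4} says nothing. For $M\geq 3$ it does not even factor, since $d_k$ and $d_l$ both depend on the indicators of the remaining folds. So your displayed bound is not justified by the argument you give.

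The missing device is to avoid expanding across folds at all:
\begin{itemize}
\item Write $T_2=\sum_{j=1}^M T_{2,j}$ with $T_{2,j}=N_v^{-1}\sum_{k\in U_j}(I_k-\pi_{k|\cA})d_k$, and use $(\sum_j T_{2,j})^2\leq M\sum_j T_{2,j}^2$.
\item Within fold $j$, condition on $\cA$ and $\{I_q\}_{q\notin U_j}$; also conditioning on the responses is harmless by non-informativeness. This freezes every $d_k$ with $k\in U_j$.
\item (Ind-$\cA$) then gives the conditional second moment $\sum_{k,l\in U_j}\Delta_{kl|\cA}d_kd_l$. By \ref{CF4} and Cauchy--Schwarz this is at most $(1/4+C_{\Delta 2}N_j/n_v)\sum_{k\in U_j}d_k^2$.
\end{itemize}
Since $M$ is fixed, this gives your bound up to the factor $M$; it is exactly Lemma~\ref{lemma:secondMoment}. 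The same device handles the centered part in (ii) with $d_k=R_k/\pi_k$. With this substitution your proof is complete.
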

\begin{proof}
Let $\widehat{\mu}$ denote either feasible estimator and $\widetilde{\mu}$ its corresponding oracle, $\widehat{\mu}_{\mathrm{ma}}(m,\pi_{|\cA})$ or $\widehat{\mu}_{\mathrm{ma}}(m,\pi)$. We have
$$\E_{mp}\left[\left(\widehat{\mu}-\mu\right)^2\right]=\E_{mp}\left[\left(\widehat{\mu}-\widetilde{\mu}\right)^2\right]+\E_{mp}\left[\left(\widetilde{\mu}-\mu\right)^2\right]+R_v,$$
where, by Cauchy--Schwarz,
\begin{align*}
|R_v|
&=2\left|\E_{mp}\left[(\widehat{\mu}-\widetilde{\mu})(\widetilde{\mu}-\mu)\right]\right|\\
&\leq2\left\{\E_{mp}\left[\left(\widehat{\mu}-\widetilde{\mu}\right)^2\right]\right\}^{1/2}
\left\{\E_{mp}\left[\left(\widetilde{\mu}-\mu\right)^2\right]\right\}^{1/2}.
\end{align*}
Since $n_v/N_v\leq1$, \ref{D2} and the bound on the average model variance give $\textrm{GJ}=\mathcal{O}(1)$. Thus, it is enough to show that
$$n_v\E_{mp}\left[\left(\widetilde{\mu}-\mu\right)^2\right]=\textrm{GJ}+o(1),\qquad n_v\E_{mp}\left[\left(\widehat{\mu}-\widetilde{\mu}\right)^2\right]=o(1).$$
Indeed, the bound on $R_v$ then gives $n_v|R_v|=o(1)$.
To bound the difference between each feasible estimator and its oracle, we will use the mean-square convergence of the prediction errors. By \ref{M1}, the inner average below tends to zero for almost every model realization. It is bounded by the square of the fixed bound on the prediction errors. Thus, dominated convergence gives
\begin{align*}
&\frac{1}{N_v}\sum_{k\in U_v}\E_{mp}\left[\left(\widehat{m}^{(-k)}(\bx_k)-m(\bx_k)\right)^2\right]\\
&\quad=\E_m\left[\frac{1}{N_v}\sum_{k\in U_v}
\E_p\left[\left(\widehat{m}^{(-k)}(\bx_k)-m(\bx_k)\right)^2\right]\right]
\xrightarrow[v\to\infty]{}0.
\end{align*}

\emph{Proof of (i).} We first compute the mean squared error of the conditional oracle. Its estimation error is
$$\widehat{\mu}_{\mathrm{ma}}(m,\pi_{|\cA})-\mu
=\frac{1}{N_v}\sum_{k\in U_v}\left(\frac{I_k}{\pi_{k|\cA}}-1\right)\epsilon_k.$$
The model errors are independent and centered, and are independent of sampling and fold construction given the covariates. For the diagonal terms,
\begin{align*}
\E_p\left[\left(\frac{I_k}{\pi_{k|\cA}}-1\right)^2\Bigm|\cA\right]=\frac{1-\pi_{k|\cA}}{\pi_{k|\cA}}.
\end{align*}
Expanding the square and separating the sampling and model expectations, we obtain
\begin{align*}
n_v\E_{mp}\left[\left(\widehat{\mu}_{\mathrm{ma}}(m,\pi_{|\cA})-\mu\right)^2\right]
&=\frac{n_v}{N_v^2}\sum_{k\in U_v}\sum_{l\in U_v}
\E_p\left[
\left(\frac{I_k}{\pi_{k|\cA}}-1\right)
\left(\frac{I_l}{\pi_{l|\cA}}-1\right)
\right]\E_m[\epsilon_k\epsilon_l]\\
&=\dfrac{n_v}{N_v^2}\sum_{k\in U_v}\E_p\left[\dfrac{1-\pi_{k|\cA}}{\pi_{k|\cA}}\right]\sigma^2(\bx_k)\\
&=\textrm{GJ}+\dfrac{n_v}{N_v^2}\sum_{k\in U_v}\left(\E_p\left[\dfrac{1}{\pi_{k|\cA}}\right]-\dfrac{1}{\pi_k}\right)\sigma^2(\bx_k).
\end{align*}
To bound the additional term, use the identity
$$\frac{1}{\pi_{k|\cA}}-\frac{1}{\pi_k}
=-\frac{\pi_{k|\cA}-\pi_k}{\pi_k^2}
+\frac{(\pi_{k|\cA}-\pi_k)^2}{\pi_k^2\pi_{k|\cA}}$$
and $\E_p[\pi_{k|\cA}]=\pi_k$ to obtain
$$\E_p\left[\dfrac{1}{\pi_{k|\cA}}\right]-\dfrac{1}{\pi_k}
=\E_p\left[\dfrac{(\pi_{k|\cA}-\pi_k)^2}{\pi_k^2\pi_{k|\cA}}\right].$$
By \ref{D2} and \ref{CF2c}, the additional term is nonnegative and at most
$$C\E_p\left[\max_{k\in U_v}(\pi_{k|\cA}-\pi_k)^2\right]\dfrac{1}{N_v}\sum_{k\in U_v}\sigma^2(\bx_k)=o(1),$$
for a constant $C$ independent of $v$.

We next bound the difference between the feasible and oracle estimators. From their definitions,
$$\widehat{\mu}_{cf}(\widehat m,\pi_{|\cA})-\widehat{\mu}_{\mathrm{ma}}(m,\pi_{|\cA})
=-\frac{1}{N_v}\sum_{k\in U_v}(I_k-\pi_{k|\cA})\frac{\widehat{m}^{(-k)}(\bx_k)-m(\bx_k)}{\pi_{k|\cA}}.$$
Applying Lemma~\ref{lemma:secondMoment} with $d_k=\{\widehat{m}^{(-k)}(\bx_k)-m(\bx_k)\}/\pi_{k|\cA}$ gives
\begin{align*}
&n_v\E_p\left[\left\{\widehat{\mu}_{cf}(\widehat m,\pi_{|\cA})-\widehat{\mu}_{\mathrm{ma}}(m,\pi_{|\cA})\right\}^2\right]\\
&\qquad\leq\frac{C}{N_v}\sum_{k\in U_v}\E_p\left[\left\{\frac{\widehat{m}^{(-k)}(\bx_k)-m(\bx_k)}{\pi_{k|\cA}}\right\}^2\right].
\end{align*}
Averaging over the model and using \ref{CF2c}, we obtain
\begin{align*}
&n_v\E_{mp}\left[\left\{\widehat{\mu}_{cf}(\widehat m,\pi_{|\cA})-\widehat{\mu}_{\mathrm{ma}}(m,\pi_{|\cA})\right\}^2\right]\\
&\qquad\leq\frac{C}{N_v}\sum_{k\in U_v}\E_{mp}\left[\left(\widehat{m}^{(-k)}(\bx_k)-m(\bx_k)\right)^2\right]=o(1).
\end{align*}
This proves (i).

\emph{Proof of (ii).} For the unconditional oracle, the same model calculation gives
$$n_v\E_{mp}\left[\left(\widehat{\mu}_{\mathrm{ma}}(m,\pi)-\mu\right)^2\right]
=\dfrac{n_v}{N_v^2}\sum_{k\in U_v}\dfrac{1-\pi_k}{\pi_k}\sigma^2(\bx_k)=\textrm{GJ}.$$
For the difference between the feasible and oracle estimators, we have
\begin{align*}
&\widehat{\mu}_{cf}(\widehat m,\pi)-\widehat{\mu}_{\mathrm{ma}}(m,\pi)\\
&\quad=-\frac{1}{N_v}\sum_{k\in U_v}(I_k-\pi_{k|\cA})\frac{\widehat{m}^{(-k)}(\bx_k)-m(\bx_k)}{\pi_k}-\frac{1}{N_v}\sum_{k\in U_v}(\pi_{k|\cA}-\pi_k)\frac{\widehat{m}^{(-k)}(\bx_k)-m(\bx_k)}{\pi_k}.
\end{align*}
For the first sum, Lemma~\ref{lemma:secondMoment} and \ref{D2}, followed by averaging over the model, give
\begin{align*}
&n_v\E_{mp}\left[\left\{\frac{1}{N_v}\sum_{k\in U_v}(I_k-\pi_{k|\cA})\frac{\widehat{m}^{(-k)}(\bx_k)-m(\bx_k)}{\pi_k}\right\}^2\right]\\
&\qquad\leq\frac{C}{N_v}\sum_{k\in U_v}\E_{mp}\left[\left(\widehat{m}^{(-k)}(\bx_k)-m(\bx_k)\right)^2\right].
\end{align*}
For the second sum, Cauchy--Schwarz and \ref{D2} give
\begin{align*}
&n_v\left\{\frac{1}{N_v}\sum_{k\in U_v}(\pi_{k|\cA}-\pi_k)\frac{\widehat{m}^{(-k)}(\bx_k)-m(\bx_k)}{\pi_k}\right\}^2\\
&\qquad\leq n_v
\left\{\frac{1}{N_v}\sum_{k\in U_v}\left(\frac{\pi_{k|\cA}-\pi_k}{\pi_k}\right)^2\right\}
\left\{\frac{1}{N_v}\sum_{k\in U_v}\left(\widehat{m}^{(-k)}(\bx_k)-m(\bx_k)\right)^2\right\}\\
&\qquad\leq\frac{n_v}{\lambda^2}\max_{k\in U_v}(\pi_{k|\cA}-\pi_k)^2
\times\frac{1}{N_v}\sum_{k\in U_v}\left(\widehat{m}^{(-k)}(\bx_k)-m(\bx_k)\right)^2.
\end{align*}
Taking expectations in the second bound and combining the two bounds using $(a+b)^2\leq2a^2+2b^2$, we obtain
\begin{align*}
&n_v\E_{mp}\left[\left\{\widehat{\mu}_{cf}(\widehat m,\pi)-\widehat{\mu}_{\mathrm{ma}}(m,\pi)\right\}^2\right]\\
&\quad\leq\frac{C}{N_v}\sum_{k\in U_v}\E_{mp}\left[\left(\widehat{m}^{(-k)}(\bx_k)-m(\bx_k)\right)^2\right]\\
&\qquad+Cn_v\E_{mp}\left[\max_{k\in U_v}(\pi_{k|\cA}-\pi_k)^2\times\frac{1}{N_v}\sum_{k\in U_v}\left(\widehat{m}^{(-k)}(\bx_k)-m(\bx_k)\right)^2\right].
\end{align*}
The first term tends to zero by the joint mean-square convergence above, and the second by the assumption in (ii). This completes the proof.
\end{proof}

\section{Overview of exponential sampling designs}\label{exponetialdesign}

\subsection{Notation and definitions}

Let $U=\{1,\dots,N\}$ be a population from which a sample is to be selected.
A sample $s\subset U$ is simply a subset of $U$. A \emph{support}
$\mathcal{Q}$ is a set of samples. We will mainly work with two supports:
the set of all samples of $U$ (the power set of $U$),
$\mathcal{S} = \{s\subset U\},$
and the set of all samples of fixed size $n$,
$\mathcal{S}_n = \{s\subset U : |s| = n\}.$

A random sample is selected by means of a \emph{sampling design} $p(\cdot)$
defined on a support $\mathcal{Q}$, that is, a function satisfying
$$
p(s) > 0, \quad s \in \mathcal{Q}, \qquad \sum_{s\in\mathcal{Q}} p(s) = 1.
$$
The (first-order) \emph{inclusion probabilities} are obtained from the design by
$$
\pi_k = \sum_{\substack{s\in \mathcal{Q}\\ s \ni k}} p(s), \qquad k\in U,
$$
and the second-order (joint) inclusion probabilities by
$$
\pi_{k\ell} = \sum_{\substack{s\in \mathcal{Q}\\ s \supset \{k,\ell\}}} p(s), \qquad k,\ell\in U.
$$
We write $S$ for the random sample selected according to $p(\cdot)$, so that
$\Pr(S=s) = p(s)$ for $s\in\mathcal{Q}$.

First, we will define exponential schemes, and then we will list  a number of classical sampling schemes,
each of which belongs to the exponential family.

\subsection{Exponential designs}

\subsubsection{Definition}

\begin{definition}
Given a sampling design $p(\cdot)$ on $\mathcal{S}$, its \emph{support} is
the set of samples with nonzero probability,
$$
\mathcal{Q} = \{ s \subset U \mid p(s) > 0 \}.
$$
\end{definition}

\begin{definition}
A support $\mathcal{Q}$ is \emph{symmetric} if, for every $s\in\mathcal{Q}$,
every subset $r\subseteq U$ with $|r| = |s|$ also belongs to $\mathcal{Q}$.
\end{definition}

Both $\mathcal{S}$ and $\mathcal{S}_n$ are symmetric supports.

Among all designs on a support $\mathcal{Q}$ with prescribed inclusion
probabilities $\pi_k$, $k\in U$, the \emph{exponential design} is the one
that maximizes the entropy
$$
I(p) = -\sum_{s\in\mathcal{Q}} p(s)\log p(s).
$$
Maximizing $I(p)$ subject to $\sum_{s\in\mathcal{Q}} p(s) = 1$ and
$\sum_{s\in\mathcal{Q}, s\ni k} p(s) = \pi_k$, $k\in U$, by means of a
Lagrangian function with multipliers $\lambdag = (\lambda_1,\dots,\lambda_N)^\top$
yields a design of the form
\begin{equation}
p_{exp}(s;\mathcal{Q},\lambdag) = \frac{\prod_{k\in s}\omega_k}{\sum_{r\in\mathcal{Q}}\prod_{k\in r}\omega_k},
\qquad \omega_k = \exp(\lambda_k), k\in U. \label{entrmax}
\end{equation}
We use $p_{exp}(s;\mathcal{Q},\lambdag)$ as a shorthand for this exponential
design, built on support $\mathcal{Q}$ with parameter $\lambdag$ (equivalently,
with weights $\omega_k=\exp(\lambda_k)$).

The parameters $\omega_k$, $k\in U$, are determined so that the design
\eqref{entrmax} reproduces the target inclusion probabilities, i.e.\ by
solving the system of equations
\begin{equation}
\sum_{\substack{s\in \mathcal{Q}\\ s \ni k}} p_{exp}(s;\mathcal{Q},\lambdag)
= \sum_{\substack{s\in \mathcal{Q}\\ s \ni k}}\frac{ \prod_{\ell\in s} \omega_\ell }{\sum_{r\in\mathcal{Q}} \prod_{\ell\in r} \omega_\ell }
= \pi_k,\quad k\in U. \label{eq:omegasystem}
\end{equation}
Solving this system is not always straightforward: it has a closed-form
solution in some cases (Bernoulli sampling, Poisson sampling) but not in
others (conditional Poisson sampling).

Throughout, $\1b = (1,\dots,1)^\top \in \R^N$. We now show that several
classical designs are exponential designs, i.e.\ special cases of
\eqref{entrmax}.

\subsubsection{Bernoulli sampling}

Bernoulli sampling is the exponential design $p_{exp}(s,\mathcal{S},\theta\1b)$
with a common parameter $\theta\in\R$ (so $\lambda_k=\theta$ for every
$k\in U$). We obtain
\begin{align*}
p(s)
&= \frac{ \prod_{k\in s} \exp(\theta) }{ \sum_{r\in \mathcal{S}} \prod_{k\in r} \exp(\theta) }
= \frac{ \exp(\theta |s|) }{ \sum_{r\in \mathcal{S}} \exp(\theta |r|) }
= \frac{ \exp(\theta |s|) }{ \sum_{g=0}^N \binom{N}{g} \exp(\theta g) } \\
&= \frac{ \exp(\theta |s|) }{ \{1+\exp(\theta)\}^N }
= \left\{ \frac{\exp(\theta)}{1+\exp(\theta)} \right\}^{|s|} \left\{ \frac{1}{1+\exp(\theta)} \right\}^{N-|s|}
= \pi^{|s|}(1-\pi)^{N-|s|}, \quad s\subset U,
\end{align*}
where the third equality uses the binomial theorem, and where the
inclusion probabilities are $\pi_k = \exp(\theta)/\{1+\exp(\theta)\}$ for all
$k\in U$.

\subsubsection{Simple random sampling}

Simple random sampling without replacement, with fixed sample size $n$, is
the exponential design $p_{exp}(s,\mathcal{S}_n,\theta\1b)$ with common
parameter $\theta\in\R$. For $|s|=n$,
$$
p(s) = \frac{ \prod_{k\in s} \exp(\theta) }{ \sum_{r\in \mathcal{S}_n} \prod_{k\in r} \exp(\theta) } = \frac{
\exp(\theta n) }{ \sum_{r\in \mathcal{S}_n} \exp(\theta n) } = \binom{N}{n}^{-1}, \qquad s\in \mathcal{S}_n.
$$
Since every $s\in\mathcal{S}_n$ has the same size $n$, the factor
$\exp(\theta n)$ cancels between the numerator and the denominator: the
design does not depend on $\theta$ and reduces to the uniform design on
$\mathcal{S}_n$. The resulting inclusion probabilities are $\pi_k = n/N$,
$k\in U$.

\subsubsection{Poisson sampling}

Poisson sampling is the exponential design $p_{exp}(s,\mathcal{S},\lambdag)$,
now with a design-specific parameter $\lambda_k$ for each unit. For this
design, the general system \eqref{eq:omegasystem}, with
$\mathcal{Q}=\mathcal{S}$, has the closed-form solution
$$
\omega_k = \frac{\pi_k}{1-\pi_k}, \qquad k\in U
$$
\citep[see][]{tille2006sampling}. Substituting into \eqref{entrmax} and simplifying the
sum over $\mathcal{S}$ gives
\begin{equation}
p_{ME}(s;\mathcal{S}) = \frac{ \prod_{k\in s} \omega_k }{\sum_{r\in\mathcal{S}} \prod_{k\in r} \omega_k }
= \left\{ \prod_{k\in s} \pi_k \right\}\left\{ \prod_{k\in U\setminus s} (1-\pi_k) \right\},
\label{plandepoisson}
\end{equation}
that is, Poisson sampling selects each unit $k$ independently with
probability $\pi_k$.

\subsubsection{Conditional Poisson sampling}

Conditional Poisson sampling (CPS), also called the maximum entropy design
with fixed sample size, is the exponential design
$p_{exp}(s,\mathcal{S}_n,\lambdag)$. Here the system \eqref{eq:omegasystem},
restricted to $\mathcal{Q}=\mathcal{S}_n$, becomes
$$
\sum_{\substack{s\in \mathcal{S}_n\\ s \ni k}} p(s)
= \sum_{\substack{s\in \mathcal{S}_n\\ s \ni k}}\frac{ \prod_{\ell\in s} \omega_\ell }{\sum_{r\in\mathcal{S}_n} \prod_{\ell\in r} \omega_\ell }
= \pi_k,\quad k\in U,
$$
and the relationship between $\pi_k$ and $\omega_k$ is considerably more
complex than in the Poisson case: this system cannot be solved analytically,
and there is no closed form for $\omega_k$ as a function of $\pi_k$.

Nevertheless, several methods allow $\omega_k$ to be computed from $\pi_k$
without summing over all samples in $\mathcal{S}_n$; a detailed account is
given in \citet{tille2006sampling}. The design is therefore
\begin{equation}
p_{ME}(s;\mathcal{S}_n) = \frac{ \prod_{k\in s} \omega_k }{\sum_{r\in\mathcal{S}_n} \prod_{k\in r} \omega_k }, \label{cpsdesign}
\end{equation}
an expression that cannot be simplified further in general. Once the
$\omega_k$ have been determined, however, several algorithms are available
for drawing a sample from this design  \citep[see again][]{tille2006sampling}.

In practice, one first computes the $\omega_k$ from the target inclusion
probabilities, and then draws the sample, as in the following R code:
\begingroup
\small
\begin{verbatim}
install.packages("sampling")
pik <- c(0.07, 0.17, 0.41, 0.61, 0.83, 0.91)
n <- sum(pik)
pikt <- sampling::UPMEpiktildefrompik(pik)
w <- pikt / (1 - pikt)
sampling::UPMEsfromq(sampling::UPMEqfromw(w, n))
\end{verbatim}
\endgroup

\subsection{Conditional inference}

Conditional inference in survey sampling has been studied by
\citet{rao:85}, \citet{dev:92}, and \citet{til:98,til:99}. To perform
conditional inference, we first derive a \emph{conditional design}; from it
we obtain conditional inclusion probabilities, which are in turn used to
build an estimator.

Let $\mathcal{Q}_1$ and $\mathcal{Q}_2$ be two supports with
$\mathcal{Q}_2\subset\mathcal{Q}_1$, and let $p(s;\mathcal{Q}_1)$ be a
sampling design on $\mathcal{Q}_1$. The conditional design given
$\mathcal{Q}_2$ is
$$
p(s;\mathcal{Q}_1\mid \mathcal{Q}_2) = \frac{p(s;\mathcal{Q}_1)} {\sum_{r\in \mathcal{Q}_2}
p(r;\mathcal{Q}_1)}, \qquad s \in \mathcal{Q}_2.
$$
From this conditional design we obtain the conditional inclusion
probabilities $\pi_k(\mathcal{Q}_2)$ and $\pi_{k\ell}(\mathcal{Q}_2)$.

The support $\mathcal{Q}_2$ may itself depend on the selected sample $S$:
for instance, $\mathcal{Q}_2$ may be the set of samples whose size equals
the size of the observed sample $S$. Given the conditional inclusion
probabilities, we estimate a population total by
$$
\widehat{Y}(\mathcal{Q}_2) = \sum_{k\in S} \frac{y_k}{\pi_k(\mathcal{Q}_2)}.
$$
This estimator is conditionally unbiased under $p(s;\mathcal{Q}_1\mid
\mathcal{Q}_2)$ if and only if $\pi_k(\mathcal{Q}_2) > 0$ for all $k\in U$.

\subsection{Conditional inference on a partition}

Let $U_1,\dots,U_H$ be a partition of $U$ into \emph{folds}, i.e.
$$
\bigcup_{h=1}^H U_h = U \qquad\text{and}\qquad U_h \cap U_i = \emptyset  \text{ for } h\neq i.
$$
For a sample $s$, write $n_h = |s\cap U_h|$, $h=1,\dots,H$, and
$n=\sum_{h=1}^H n_h$.

\begin{theorem}
If $p_{exp}(s,\mathcal{Q},\lambdag)$ is an exponential design on a symmetric
support $\mathcal{Q}$, then the design conditional on $n_1,\dots,n_H$
factorizes as a product of independent exponential designs on
$U_1,\dots,U_H$.
\end{theorem}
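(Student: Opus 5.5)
The plan is to compute the conditional design directly and show it is a product of fold-wise factors. Fix a partition $U_1,\dots,U_H$ (deterministic, or independent of $S$ and conditioned upon) and an attainable vector $(n_1,\dots,n_H)$ with $n=\sum_h n_h$. The conditioning event is
$$
\mathcal{Q}_2=\{s\subseteq U: |s\cap U_h|=n_h,\ h=1,\dots,H\}.
$$
The first step uses symmetry of $\mathcal{Q}$. Since some sample of size $n$ lies in $\mathcal{Q}$ (the vector is attainable), every subset of size $n$ lies in $\mathcal{Q}$. In particular every $s\in\mathcal{Q}_2$ belongs to $\mathcal{Q}$, so $\mathcal{Q}_2\subseteq\mathcal{Q}$ and $\mathcal{Q}_2=\mathcal{S}^{(1)}_{n_1}\times\cdots\times\mathcal{S}^{(H)}_{n_H}$ under the identification $s\leftrightarrow(s\cap U_1,\dots,s\cap U_H)$. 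Here $\mathcal{S}^{(h)}_{n_h}$ denotes all subsets of $U_h$ of size $n_h$. This Cartesian-product structure is exactly what symmetry buys; without it some combinations of fold subsamples could be missing from the support, and the product form would fail.

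The second step is the computation. Using the conditional-design formula just recalled, for $s\in\mathcal{Q}_2$ we write
$$
p(s\mid\mathcal{Q}_2)=\frac{\prod_{k\in s}\omega_k}{\sum_{r\in\mathcal{Q}_2}\prod_{k\in r}\omega_k}.
$$
The normalizing constant of $p_{exp}(\cdot;\mathcal{Q},\lambdag)$ cancels. We then split the numerator as $\prod_{h}\prod_{k\in s\cap U_h}\omega_k$. By the product structure of $\mathcal{Q}_2$, the denominator factorizes as $\prod_h\sum_{r_h\in\mathcal{S}^{(h)}_{n_h}}\prod_{k\in r_h}\omega_k$. Hence
$$
p(s\mid\mathcal{Q}_2)=\prod_{h=1}^H p_{exp}\bigl(s\cap U_h;\mathcal{S}^{(h)}_{n_h},\lambdag_{U_h}\bigr),
$$
where $\lambdag_{U_h}=(\lambda_k)_{k\in U_h}$. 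Each factor is an exponential (conditional Poisson) design on the fixed-size support of fold $U_h$, which is itself symmetric. The factors sum to one separately, so they are the marginal laws of $S\cap U_h$ given $(n_1,\dots,n_H)$. The product form then yields independence across folds.

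The last step is to tidy the degenerate cases. If $n_h=0$ or $n_h=|U_h|$, the factor is a point mass, still trivially exponential. If the partition is random and independent of $S$, we apply the argument conditionally on each realized partition.

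The main obstacle is the support step, namely verifying that $\mathcal{Q}_2\subseteq\mathcal{Q}$ and that it is a full product set. Everything else is algebraic cancellation.
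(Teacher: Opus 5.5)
Your proposal is correct and follows essentially the same route as the paper: symmetry of $\mathcal{Q}$ makes the constrained support a Cartesian product of fixed-size fold supports, after which the numerator and denominator of the conditional design factorize fold by fold. Your explicit check that attainability puts every size-$n$ subset in $\mathcal{Q}$, and your treatment of degenerate folds and random partitions, spell out points the paper leaves implicit.
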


\begin{proof}
Write $\omega_k=\exp(\lambda_k)$, $k\in U$, and $s_h = s\cap U_h$. By
definition, the conditional design is
$$
p_{exp}(s,\mathcal{Q},\lambdag\mid n_1,\dots,n_H) = \frac{\prod_{k\in s} \omega_k}{\sum_{\substack{r \in
\mathcal{Q}\\ |r\cap U_h|=n_h, h=1,\dots,H}} \prod_{k\in r} \omega_k}.
$$
Because $\mathcal{Q}$ is symmetric, the set of samples in $\mathcal{Q}$
satisfying the $H$ size constraints coincides with the Cartesian product of
all subsets of each fold with the prescribed size:
$$
\{r \in \mathcal{Q}: |r\cap U_h|=n_h, h=1,\dots,H\} = \prod_{h=1}^H \{s_h \subset U_h: |s_h| = n_h\};
$$
symmetry of $\mathcal{Q}$ guarantees that every such combination of
per-fold samples indeed belongs to $\mathcal{Q}$. Consequently the
denominator factorizes:
$$
\sum_{\substack{r \in \mathcal{Q}\\ |r\cap U_h|=n_h}} \prod_{k\in r} \omega_k = \sum_{\substack{s_1\subset
U_1,\dots,s_H\subset U_H\\ |s_h|=n_h}} \prod_{h=1}^H \prod_{k\in s_h} \omega_k = \prod_{h=1}^H
\sum_{\substack{s_h \subset U_h\\ |s_h|=n_h}} \prod_{k\in s_h} \omega_k.
$$
Since the numerator also factorizes as $\prod_{k\in s}\omega_k =
\prod_{h=1}^H\prod_{k\in s_h}\omega_k$, we obtain
$$
p_{exp}(s,\mathcal{Q},\lambdag\mid n_1,\dots,n_H) = \prod_{h=1}^H \frac{\prod_{k\in s_h}
\omega_k}{\sum_{\substack{s_h \subset U_h\\ |s_h|=n_h}} \prod_{k\in s_h} \omega_k},
$$
which is a product of $H$ independent exponential designs, one on each
fold $U_h$.
\end{proof}

This result applies in particular to Bernoulli sampling, simple random
sampling, Poisson sampling, and conditional Poisson sampling.

\subsection{Examples of conditional designs}

\subsubsection{Example 1: conditioning on the $n_h$ in a Bernoulli design}

Under Bernoulli sampling, let $s_h = s\cap U_h$ and $n_h = |s_h|$,
$h=1,\dots,H$. Each $n_h$ has a binomial distribution,
$n_h \sim \mathrm{Bin}(N_h,\pi)$.

Since $|s| = \sum_h n_h$ is fixed once $n_1,\dots,n_H$ are fixed, the
factor $\pi^{|s|}(1-\pi)^{N-|s|}$ is the same for every $s$ satisfying the
constraints, so it cancels in the conditional design:
$$
\begin{aligned}
p(s\mid n_1,\dots, n_H) &= \frac{\pi^{|s|}(1-\pi)^{N-|s|}}{\sum_{\substack{r\\ |r_h|=n_h, h=1,\dots H
}}\pi^{|r|}(1-\pi)^{N-|r|}} \\
&= \frac{1}{\big|\{r : |r_h|=n_h, h=1,\dots H\}\big|} = \prod_{h=1}^H
\binom{N_h}{n_h}^{-1},
\end{aligned}
$$
for every $s$ such that $|s_h|=n_h$, $h=1,\dots,H$.

Conditionally on $n_1,\dots,n_H$, the design is therefore simple random
sampling without replacement within each post-stratum, independently across
strata, with conditional inclusion probabilities $n_h/N_h$ for $k\in U_h$.

\subsubsection{Example 2: conditioning on the $n_h$ in simple random sampling}

Under simple random sampling of fixed size $n$, let again $s_h=s\cap U_h$
and $n_h=|s_h|$, $h=1,\dots,H$. Here $(n_1,\dots,n_H)$ has a multivariate
hypergeometric distribution.

Since $p(s) = \binom{N}{n}^{-1}$ does not depend on $s$ (only on its size
$n$, which is fixed), the conditional design is again uniform over samples
satisfying the constraints:
$$
p(s\mid n_1,\dots, n_H) = \frac{\binom{N}{n}^{-1}}{\sum_{\substack{r\\ |r_h|=n_h, h=1,\dots H
}}\binom{N}{n}^{-1}} = \frac{1}{\big|\{r : |r_h|=n_h, h=1,\dots H\}\big|} = \prod_{h=1}^H
\binom{N_h}{n_h}^{-1},
$$
for every $s$ such that $|s_h|=n_h$, $h=1,\dots,H$.

As in Example 1, conditionally on $n_1,\dots,n_H$ the design is simple
random sampling without replacement within each post-stratum, independent
across strata, with conditional inclusion probabilities $n_h/N_h$ for
$k\in U_h$.

\subsubsection{Example 3: conditioning on the sample size in a Poisson design}

Suppose a sample is selected by Poisson sampling, design
\eqref{plandepoisson}. The design conditional on a fixed sample size is
$$
p_{ME}(s; \mathcal{S}\mid \mathcal{S}_n) = \frac{p_{ME}(s; \mathcal{S})} {\sum_{r\in \mathcal{S}_n} p_{ME}(r;
\mathcal{S})} = \frac{ \prod_{k\in s} \omega_k }{\sum_{r\in\mathcal{S}_n} \prod_{k\in r} \omega_k },
$$
which is exactly the CPS design of Section 1.2.5. Formally:
\begin{enumerate}
\item Select a random sample $S$ by Poisson sampling with inclusion probabilities $\pi_k$.
\item Compute $n = |S|$.
\item Compute $\omega_k = \pi_k/(1-\pi_k)$, $k\in U$.
\item Compute the inclusion probabilities $\pi_k(\mathcal{S}_n)$ of the CPS design of size $n$ with parameters $\omega_k$.
\item Use $\pi_k(\mathcal{S}_n)$ in the estimator.
\end{enumerate}

The following R code implements this construction, using the
\textsf{sampling} package of \citet{til:mat:25} and the
\textsf{StratifiedSampling} package of \citet{StratifiedSampling2025}.
\begingroup
\small
\begin{verbatim}
# install.packages("sampling")
# install.packages("StratifiedSampling")
###########################################
rm(list = ls())

cond_incl_prob <- function(pik, n) {
  N <- length(pik)
  if (n == 0) {
    pikn <- rep(0, N)
  } else if (n == N) {
    pikn <- rep(1, N)
  } else {
    TEST <- pik < 1
    n1 <- n - (length(TEST) - sum(TEST))
    w <- pik[TEST] / (1 - pik[TEST])
    pikn <- pik
    pikn[TEST] <- StratifiedSampling::pikfromq(StratifiedSampling::qfromw(w, n1))
  }
  pikn
}

###########################################
strat_from_Poisson <- function(pik, s, STRAT) {
  STRAT <- sampling::cleanstrata(STRAT)
  H <- max(STRAT)
  TEST <- pik < 1
  pikn <- pik
  STRATR <- STRAT[TEST]
  nh <- as.vector(t(s[TEST]) %*% sampling::disjunctive(STRATR))
  for (h in 1:H) {
    pikn[TEST][STRATR == h] <- cond_incl_prob(pik[TEST][STRATR == h], nh[h])
  }
  pikn
}

###########################################
data(MU284, package = "sampling")
n <- 100
pik <- sampling::inclusionprobabilities(MU284$P75, n)

#############################################
SIM <- 1000
EHT <- EC <- rep(0, SIM)
y <- MU284$REV84 / 10000

for (i in 1:SIM) {
  s <- sampling::UPpoisson(pik)
  n <- sum(s)
  pikn <- cond_incl_prob(pik, n)
  EHT[i] <- sum(s * y / pik)
  EC[i] <- sum(s * y / pikn)
}

boxplot(EHT, EC, names = c("Horvitz-Thompson", "Conditional design"))
var(EC) / var(EHT)
\end{verbatim}
\endgroup

\subsubsection{Example 4: conditioning on the $n_h$ in a Poisson design}

Suppose again a sample is selected by Poisson sampling, design
\eqref{plandepoisson}. The design conditional on fixed sample sizes
$n_1,\dots,n_H$ in each stratum is
$$
p_{ME}(s; \mathcal{S}\mid \mathcal{S}_{\mathrm{strat}(n_1,\dots,n_H)}) = \prod_{h=1}^H \frac{ \prod_{k\in
(s\cap U_h)} \omega_k }      {  \sum_{\substack{s_h\subset U_h\\ |s_h|=n_h }} \prod_{k\in s_h} \omega_k },
$$
i.e.\ the maximum entropy stratified design. Formally:
\begin{enumerate}
\item Select a random sample $S$ by Poisson sampling with inclusion probabilities $\pi_k$.
\item Compute $\omega_k = \pi_k/(1-\pi_k)$, $k\in U$.
\item Compute $n_h=|U_h \cap S|$, $h=1,\dots,H$.
\item In each stratum, compute the inclusion probabilities $\pi_k(\mathcal{S}\mid \mathcal{S}_{\mathrm{strat}(n_1,\dots,n_H)})$ of the maximum entropy stratified design of sizes $n_1,\dots,n_H$, using the parameters $\omega_k$.
\item Use $\pi_k(\mathcal{S}\mid \mathcal{S}_{\mathrm{strat}(n_1,\dots,n_H)})$ in the estimator.
\end{enumerate}

Again using the \textsf{sampling} package of \citet{til:mat:25} and the
\textsf{StratifiedSampling} package of \citet{StratifiedSampling2025}:
\begingroup
\small
\begin{verbatim}
############################################
CAT <- cut(y, breaks = quantile(y, probs = seq(0, 1, by = 1/3), na.rm = TRUE),
           include.lowest = TRUE, labels = 1:3)

SIM <- 1000
EHT <- ESTRP <- rep(0, SIM)
y <- MU284$REV84 / 10000
STRAT <- MU284$REG
STRAT <- CAT

for (i in 1:SIM) {
  s <- sampling::UPpoisson(pik)
  pikn <- strat_from_Poisson(pik, s, STRAT)
  EHT[i] <- sum((y / pik)[s == 1])
  ESTRP[i] <- sum((y / pikn)[s == 1])
}

boxplot(EHT, ESTRP, names = c("Horvitz-Thompson", "Conditional design"))
var(ESTRP) / var(EHT)
\end{verbatim}
\endgroup

\subsubsection{Example 5: conditioning on the $n_h$ in a CPS design}

Suppose a sample of fixed size $n$ is selected by CPS, design
\eqref{cpsdesign}. The design conditional on fixed sample sizes
$n_1,\dots,n_H$ in each stratum (necessarily with $n_1+\cdots+n_H=n$) is
$$
p_{ME}(s; \mathcal{S}_n\mid \mathcal{S}_{\mathrm{strat}(n_1,\dots,n_H)}) = \prod_{h=1}^H \frac{ \prod_{k\in
(s\cap U_h)} \omega_k }      {  \sum_{\substack{s_h\subset U_h\\ |s_h|=n_h }} \prod_{k\in s_h} \omega_k },
$$
again the maximum entropy stratified design. Formally:
\begin{enumerate}
\item Select a random sample $S$ by CPS with inclusion probabilities $\pi_k$.
\item Compute $\omega_k$ from $\pi_k$ using one of the methods in \citet{tille2006sampling}.
\item Compute $n_h=|U_h \cap S|$, $h=1,\dots,H$.
\item In each stratum, compute the inclusion probabilities $\pi_k(\mathcal{S}\mid \mathcal{S}_{\mathrm{strat}(n_1,\dots,n_H)})$ of the maximum entropy stratified design of sizes $n_1,\dots,n_H$, using the parameters $\omega_k$.
\item Use $\pi_k(\mathcal{S}\mid \mathcal{S}_{\mathrm{strat}(n_1,\dots,n_H)})$ in the estimator.
\end{enumerate}

Again using the \textsf{sampling} package of \citet{til:mat:25} and the
\textsf{StratifiedSampling} package of \citet{StratifiedSampling2025}:
\begingroup
\small
\begin{verbatim}
##################################################################
pikt <- pik
TEST <- pik < 1
pikt[TEST] <- StratifiedSampling::piktfrompik(pik[TEST])

SIM <- 1000
EHT <- ESTRP <- rep(0, SIM)
y <- MU284$REV84 / 10000
STRAT <- MU284$REG
STRAT <- CAT

for (i in 1:SIM) {
  print(i)
  s <- sampling::UPmaxentropy(pik)
  pikn <- strat_from_Poisson(pikt, s, STRAT)
  EHT[i] <- sum((y / pik)[s == 1])
  ESTRP[i] <- sum((y / pikn)[s == 1])
}

boxplot(EHT, ESTRP, names = c("Horvitz-Thompson", "Conditional stratified"))
var(ESTRP) / var(EHT)
\end{verbatim}
\endgroup

\Black

\section{Proofs of Section \ref{sec:crossfit}}

\subsection{Proof of Theorem \ref{theo:condIndExp}} \label{Prooftheo:condIndExp}

We prove the two implications separately. First, we show that exponential designs factorize after conditioning on fold sample sizes. Second, we show that conditional independence forces the design to be exponential.

\subsubsection*{Step 1: Exponential designs factorize}

Assume (ii), and write $\omega_k(n)=\exp(\lambda_k(n))$ for the weights of the exponential design at each possible sample size $n$, with 
$$
\omega_k(n) =
\frac{ \omega_k }{\left(\sum_{r\in\mathcal{S}_n} \prod_{k\in r} \omega_k\right)^{1/n} }.
$$
Then, exponential designs factorize as
$$
p(s;\mathcal{S}_n) = \prod_{k\in s} \omega_k(n) .
$$
\Black Let $\boldsymbol{d}$ be a realization of the partition $\boldsymbol{\delta}$, and let $\boldsymbol{m}$ satisfy $\P(\boldsymbol{n}=\boldsymbol{m},\boldsymbol{\delta}=\boldsymbol{d})>0$. Set  $E_{\boldsymbol{m}, \boldsymbol{d}} := \left\{ r \in \cP(U) : |r \cap U_j(\boldsymbol{d})| = m_j,  j=1,...,M \right\}$ be the set of compatible samples. For $s \in E_{\boldsymbol{m}, \boldsymbol{d}}$, we first compute the conditional sample probability:
\begin{align*}
&\P\left(S=s\mid\boldsymbol{n}=\boldsymbol{m},\boldsymbol{\delta}=\boldsymbol{d}\right)\\
&\qquad=\frac{\P(S=s,\boldsymbol{n}=\boldsymbol{m},\boldsymbol{\delta}=\boldsymbol{d})}
{\P(\boldsymbol{n}=\boldsymbol{m},\boldsymbol{\delta}=\boldsymbol{d})}\\
&\qquad=\frac{\P(S=s,\boldsymbol{\delta}=\boldsymbol{d})}
{\P(\boldsymbol{n}=\boldsymbol{m},\boldsymbol{\delta}=\boldsymbol{d})}.
\end{align*}
The last equality holds because $S=s$ and $\boldsymbol{\delta}=\boldsymbol{d}$ give $n_j=|s\cap U_j(\boldsymbol{d})|=m_j$ for every $j$. For the denominator, summing over all possible samples gives
\begin{align*}
\P(\boldsymbol{n}=\boldsymbol{m},\boldsymbol{\delta}=\boldsymbol{d})
&=\sum_{r\in\cP(U)}
\P(S=r,\boldsymbol{n}=\boldsymbol{m},\boldsymbol{\delta}=\boldsymbol{d})\\
&=\sum_{r\in E_{\boldsymbol{m},\boldsymbol{d}}}
\P(S=r,\boldsymbol{\delta}=\boldsymbol{d})\\
&=\P(\boldsymbol{\delta}=\boldsymbol{d})
\sum_{r\in E_{\boldsymbol{m},\boldsymbol{d}}}p(r).
\end{align*}
Here the samples outside $E_{\boldsymbol{m},\boldsymbol{d}}$ contribute zero, and the last equality uses $S\indep\boldsymbol{\delta}$. The same independence gives
$$\P(S=s,\boldsymbol{\delta}=\boldsymbol{d})
=p(s)\P(\boldsymbol{\delta}=\boldsymbol{d}).$$
Substituting and cancelling $\P(\boldsymbol{\delta}=\boldsymbol{d})$, we obtain
\begin{align*}
&\P\left(S=s\mid\boldsymbol{n}=\boldsymbol{m},\boldsymbol{\delta}=\boldsymbol{d}\right)\\
&\qquad=\frac{p(s)}{\sum_{r\in E_{\boldsymbol{m},\boldsymbol{d}}}p(r)}\\
&\qquad=\frac{\P(S=s\mid |S|=n)}
{\sum_{r\in E_{\boldsymbol{m},\boldsymbol{d}}}\P(S=r\mid |S|=n)}\\
&\qquad=\frac{\prod_{k\in s}\omega_k(n)}
{\sum_{r\in E_{\boldsymbol{m},\boldsymbol{d}}}\prod_{k\in r}\omega_k(n)}.
\end{align*}
Every compatible sample has size $n$, so the common factor $\P(|S|=n)$ cancels in the second equality. The common normalizing constant of the exponential design cancels in the third.

Since the folds form a partition, each compatible sample is uniquely determined by its intersections with the folds:
$$E_{\boldsymbol{m},\boldsymbol{d}}
=\left\{\bigcup_{j=1}^M r_j:r_j\subseteq U_j(\boldsymbol{d}),
|r_j|=m_j, j=1,\ldots,M\right\}.$$
Thus, the denominator factors as
\begin{align*}
\sum_{r\in E_{\boldsymbol{m},\boldsymbol{d}}}\prod_{k\in r}\omega_k(n)
&=\sum_{\substack{r_1\subseteq U_1(\boldsymbol{d})\\|r_1|=m_1}}
\cdots
\sum_{\substack{r_M\subseteq U_M(\boldsymbol{d})\\|r_M|=m_M}}
\prod_{j=1}^M\prod_{k\in r_j}\omega_k(n)\\
&=\prod_{j=1}^M
\sum_{\substack{r_j\subseteq U_j(\boldsymbol{d})\\|r_j|=m_j}}
\prod_{k\in r_j}\omega_k(n).
\end{align*}
The numerator also factors over the folds, giving
$$\P\left(S=s\mid\boldsymbol{n}=\boldsymbol{m},\boldsymbol{\delta}=\boldsymbol{d}\right)
=\prod_{j=1}^M
\frac{\prod_{k\in s\cap U_j(\boldsymbol{d})}\omega_k(n)}
{\sum_{\substack{r_j\subseteq U_j(\boldsymbol{d})\\|r_j|=m_j}}
\prod_{k\in r_j}\omega_k(n)}.$$
This proves (i).

\subsubsection*{Step 2: Conditional independence given fold sizes holds only for exponential designs}

The following lemma characterizes exponential designs through ratios of sample probabilities.

\begin{lemma}\label{lemma:proof1}
   Suppose that $p$ has symmetric support, let $2\leq n\leq N-2$ be a possible sample size, and set $p_n(s) = \P[S = s  |  |S| = n]$, with $\cS_n = \{ s \in \cP(U) : |s| = n\}$. Then, the following statements are equivalent.
   \begin{enumerate}
       \item[(i)] For every pair of distinct elements $k,l\in U$, the ratio $$ \rho_{kl} = \dfrac{p_n(r \cup \{k\})}{p_n(r \cup \{l\})}$$ does not depend on $r\in \{ s \subset U\backslash \{k,l\} : |s| = n-1\}$.
       \item[(ii)] The sampling design $p_n$ is exponential.
   \end{enumerate}
\end{lemma}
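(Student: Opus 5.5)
The plan is to prove the two implications separately. Throughout, the symmetry of the support is used in one way: since $n$ is a possible sample size, $p_n(s)>0$ for every $s\in\cS_n$. This makes all the ratios $\rho_{kl}$ well defined and positive, and it allows logarithms to be taken at the end.

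\emph{(ii) $\Rightarrow$ (i).} If $p_n(s)=\prod_{k\in s}\omega_k/\sum_{r'\in\cS_n}\prod_{k\in r'}\omega_k$, then for every $r\subset U\setminus\{k,l\}$ with $|r|=n-1$ the normalizing constant and the common factor $\prod_{i\in r}\omega_i$ cancel. This gives $\rho_{kl}=\omega_k/\omega_l$, which does not depend on $r$.

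\emph{(i) $\Rightarrow$ (ii).} I would construct the weights explicitly from the ratios. The first step is a cocycle identity: for distinct $k,l,m$,
$$\rho_{kl}\,\rho_{lm}=\rho_{km}.$$
To prove it, choose a single $r\subset U\setminus\{k,l,m\}$ with $|r|=n-1$. Such an $r$ exists because $N-3\geq n-1$, which is exactly where $n\leq N-2$ is used. Evaluating all three ratios at this same $r$ (allowed by (i)) makes the product telescope: $\frac{p_n(r\cup\{k\})}{p_n(r\cup\{l\})}\cdot\frac{p_n(r\cup\{l\})}{p_n(r\cup\{m\})}=\frac{p_n(r\cup\{k\})}{p_n(r\cup\{m\})}$. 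Setting $\rho_{kk}=1$ and noting $\rho_{lk}=\rho_{kl}^{-1}$, the identity then extends to all triples. Next, fix the reference unit $1$ and define $\omega_1=1$ and $\omega_k=\rho_{k1}$. By the cocycle identity, $\rho_{kl}=\rho_{k1}\rho_{1l}=\omega_k/\omega_l$ for all $k,l$.

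The second step is to show that $q(s):=p_n(s)/\prod_{k\in s}\omega_k$ is constant on $\cS_n$. If $s=r\cup\{l\}$ and $s'=r\cup\{k\}$ differ by a single swap, then
$$\frac{q(s')}{q(s)}=\rho_{kl}\,\frac{\omega_l}{\omega_k}=1.$$
Any two samples in $\cS_n$ are joined by a finite chain of single swaps, since one can exchange the elements of $s\setminus s'$ for those of $s'\setminus s$ one at a time; that is, the Johnson graph is connected. Hence $q\equiv C>0$, and normalization forces $C=\bigl(\sum_{r'\in\cS_n}\prod_{k\in r'}\omega_k\bigr)^{-1}$. With $\lambda_k=\log\omega_k$, this is exactly the exponential design on $\cS_n$.

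I expect the main obstacle to be the consistency step, that is, ensuring the pairwise ratios come from a single weight vector. This is where the range condition on $n$ (a common $r$ avoiding three units) and the positivity coming from the symmetric support are genuinely needed. The remaining steps are bookkeeping.
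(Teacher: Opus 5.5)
Your proposal is correct and follows essentially the same route as the paper: the cocycle identity $\rho_{km}=\rho_{kl}\rho_{lm}$ obtained from a common $r$ avoiding three units (which is where $n\leq N-2$ enters), weights $\omega_k=\rho_{kk_0}$ defined relative to a reference unit, and a chain of single-element exchanges showing that $p_n(s)/\prod_{k\in s}\omega_k$ is constant on $\cS_n$. Your explicit remark that symmetric support gives $p_n>0$ on all of $\cS_n$, so that every ratio is well defined, is a detail the paper leaves implicit.
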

\begin{proof}
    \emph{Proof of $(ii) \Rightarrow (i)$.} Assume that the design $p_n$ is exponential. Then, for some positive weights $\omega_k(n)$, $$p_n(s)\propto \prod_{k\in s} \omega_k(n).$$ Hence, $$\rho_{kl} = \dfrac{ \omega_k(n)\prod_{i\in r} \omega_i(n)}{\omega_l(n)\prod_{i\in r} \omega_i(n) }= \dfrac{\omega_k(n)}{\omega_l(n)},$$ from which the implication follows.

        \emph{Proof of $(i) \Rightarrow (ii)$.} Assume (i). For distinct $k,l,m\in U$, choose $r \subseteq U \backslash \{k,l,m\}$ with $|r|=n-1$; this is possible since $n\leq N-2$. Then, $$ \rho_{km} = \dfrac{p_n(r \cup \{k\})}{p_n(r \cup \{m\})} = \dfrac{p_n(r \cup \{k\})}{p_n(r \cup \{l\})} \times \dfrac{p_n(r \cup \{l\})}{p_n(r \cup \{m\})} = \rho_{kl} \rho_{lm}. $$ We now construct the weights. For an arbitrary $k_0 \in U$, let $\omega_{k_0}(n) = 1$, and $\omega_k(n) = \rho_{kk_0}$ for $k \in U \backslash \{k_0\}$. For distinct $k,l\in U\backslash\{k_0\}$, the identity above gives $\omega_k(n) = \rho_{kl}\rho_{lk_0} =\rho_{kl} \omega_l(n)$, and hence $\rho_{kl} = \omega_k(n)/\omega_l(n)$. The same equality is immediate if $k$ or $l$ equals $k_0$.

        Now, let $s, t \in \cS_n$. Because both are of the same size, the sets $s \backslash t$ and $t \backslash s$ contain the same number of elements $L := n - |s\cap t|$; write $s \backslash t = \{a_1, ..., a_L\}$ and $t \backslash s = \{b_1, ..., b_L\}$. Starting from $t_0=t$, set $t_q=(t_{q-1}\backslash\{b_q\})\cup\{a_q\}$, so that $t_L=s$. At each exchange, the common set $t_{q-1}\backslash\{b_q\}$ has size $n-1$ and contains neither $a_q$ nor $b_q$. Hence,
        $$\frac{p_n(t_q)}{p_n(t_{q-1})}
        =\frac{p_n((t_{q-1}\backslash\{b_q\})\cup\{a_q\})}
        {p_n((t_{q-1}\backslash\{b_q\})\cup\{b_q\})}
        =\rho_{a_qb_q}=\frac{\omega_{a_q}(n)}{\omega_{b_q}(n)}.$$
        Multiplying these ratios gives
        $$\dfrac{p_n(s)}{p_n(t)} = \dfrac{p_n(t_1)}{p_n(t_0)}\dfrac{p_n(t_2)}{p_n(t_1)} ...  \dfrac{p_n(t_L)}{p_n(t_{L-1})} = \prod_{k=1}^L \dfrac{\omega_{a_k}(n)}{\omega_{b_k}(n)}= \dfrac{\prod_{k\in s} \omega_k(n)}{\prod_{k\in t} \omega_k(n)}.$$
        Thus, $p_n(s)/\prod_{k\in s} \omega_k(n)$ is constant over $\cS_n$. Since $\sum_{s\in\cS_n}p_n(s)=1$, we get
        $$p_n(s) = \dfrac{\prod_{k\in s} \omega_k(n)}{\sum_{t \in \cS_n} \prod_{k\in t} \omega_k(n)}.$$
        Since $\omega_k(n)>0$, setting $\lambda_k(n)=\log \omega_k(n)$ shows that the design is exponential.
\end{proof}

\paragraph*{Proof of Step 2.} Assume (i). Fix a possible sample size $n$ and write $p_n(s)=\P(S=s\mid |S|=n)$. If $n=0$ or $n=N$, then $p_n$ is a point mass and is therefore exponential. If $n=1$, take $\omega_k(1)=p_1(\{k\})$. If $n=N-1$, take $\omega_k(N-1)=1/p_{N-1}(U\backslash\{k\})$. These choices give the exponential form directly. Thus, suppose that $2\leq n\leq N-2$.

Let $k,l \in U$ be distinct and let $r \in \{ s \subset U\backslash \{k,l\} : |s| = n-1\}$. We show that the ratio $$ \rho_{kl}(r)  = \dfrac{p_n(r \cup \{k\})}{p_n(r \cup \{l\})}$$ does not depend on $r$. We first compare samples that differ by one element, then use a sequence of such exchanges.

\emph{Step 2. a.} Let $r' \in \{ s \subset U\backslash \{k,l\} : |s| = n-1\}$ differ from $r$ by one unit, so that $r'=(r\backslash\{a\})\cup\{b\}$ for some elements $a,b$. The elements $k,l,a,b$ are distinct. Since $N_1,N_2\geq 2$ and the random partition has full support, we can choose a realization $\boldsymbol{d}$ such that $k,l \in U_1(\boldsymbol{d})$ and $a,b \in U_2(\boldsymbol{d})$. Consider the four samples $s_k = r \cup \{k\}$, $s_l = r \cup \{l\}$, $s_k' = r' \cup \{k\}$, and $s_l' = r' \cup \{l\}$. They have the same fold-count vector, which we denote by $\boldsymbol{m}$. Moreover, $\P(\boldsymbol{n}=\boldsymbol{m},\boldsymbol{\delta}=\boldsymbol{d})>0$ because symmetry gives $p_n$ full support on $\cS_n$, the distribution of $\boldsymbol{\delta}$ has full support, and $\boldsymbol{\delta}\indep S$.

For $s\in E_{\boldsymbol{m},\boldsymbol{d}}$, set $$Q(s) := \P \left( S = s  \big\rvert  \boldsymbol{n} = \boldsymbol{m}, \boldsymbol{\delta} = \boldsymbol{d}\right).$$ We compare $Q(s_k)/Q(s_l)$ and $Q(s_k')/Q(s_l')$. Conditional independence between fold one and its complement gives
\begin{align*}
Q(s)
&=\P\left(S\cap U_1(\boldsymbol{d})=s\cap U_1(\boldsymbol{d})
\mid\boldsymbol{n}=\boldsymbol{m},\boldsymbol{\delta}=\boldsymbol{d}\right)\times\P\left(S\backslash U_1(\boldsymbol{d})=s\backslash U_1(\boldsymbol{d})
\mid\boldsymbol{n}=\boldsymbol{m},\boldsymbol{\delta}=\boldsymbol{d}\right).
\end{align*}
In each ratio, the two samples agree outside $U_1(\boldsymbol{d})$, so the second factor cancels. Since $r'\cap U_1(\boldsymbol{d})=r\cap U_1(\boldsymbol{d})$, the remaining factors give
\begin{align*}
\frac{Q(s_k)}{Q(s_l)}
&=\frac{\P\left(S\cap U_1(\boldsymbol{d})=(r\cap U_1(\boldsymbol{d}))\cup\{k\}
\mid\boldsymbol{n}=\boldsymbol{m},\boldsymbol{\delta}=\boldsymbol{d}\right)}
{\P\left(S\cap U_1(\boldsymbol{d})=(r\cap U_1(\boldsymbol{d}))\cup\{l\}
\mid\boldsymbol{n}=\boldsymbol{m},\boldsymbol{\delta}=\boldsymbol{d}\right)}\\
&=\frac{Q(s_k')}{Q(s_l')}.
\end{align*}
Now, on $E_{\boldsymbol{m}, \boldsymbol{d}}$,  
\begin{align*}
    Q(s) &= \dfrac{p(s) \P \left[ \boldsymbol{\delta} = \boldsymbol{d}  \big\rvert S = s\right]}{\sum_{r \in E_{\boldsymbol{m}, \boldsymbol{d}}} p(r) \P \left[ \boldsymbol{\delta} = \boldsymbol{d}  \big\rvert S = r\right]}\\
    &= \dfrac{p(s) }{\sum_{r \in E_{\boldsymbol{m}, \boldsymbol{d}}} p(r) } \qquad \text{(Using that $\boldsymbol{\delta} \indep S$)}\\
    &= \dfrac{p_n(s) }{\sum_{r \in E_{\boldsymbol{m}, \boldsymbol{d}}} p_n(r) }. \qquad \text{(Since $p(s) = p_n(s) \P(|S|=n)$ for $s \in  E_{\boldsymbol{m}, \boldsymbol{d}}$)}
\end{align*}
Therefore, $\rho_{kl}(r) = \rho_{kl}(r')$ when $r$ and $r'$ differ by one element.

\emph{Step 2. b.} Finally, let $r, r'\subset U\backslash \{k,l\}$ have size $n-1$, and set $L=n-1-|r\cap r'|$. We can construct $r'$ from $r$ by a sequence of $L$ one-element swaps, which defines a sequence $r_0 = r$, $r_1$, ..., $r_L = r'$ where $$\rho_{kl}(r)=\rho_{kl}(r_0)=\rho_{kl}(r_1)=...=\rho_{kl}(r_L)=\rho_{kl}(r').$$ This follows by applying Step 2.a. to each one-element swap. Thus, $\rho_{kl}(r)$ does not depend on $r$. Lemma \ref{lemma:proof1} shows that $p_n$ is exponential. Since $n$ was arbitrary, this proves (ii).

\subsection{Proof of Theorem \ref{Theo:eqAwareAgno}} \label{ProofTheo:eqAwareAgno}

Fix a partition $\boldsymbol{d}$ such that $\P(\boldsymbol{n}(S,\boldsymbol{d})=\boldsymbol{n})>0$ and a sample $s\in\cP(U)$. We show that the two schemes give the same conditional sample probabilities.

For the design-aware scheme, the numbers of ways to partition the sampled and nonsampled units are, respectively, $$\dfrac{n!}{n_1!n_2! \ldots n_M!} \qquad \text{and} \qquad \dfrac{(N-n)!}{(N_1-n_1)!(N_2-n_2)! \ldots (N_M-n_M)!}.$$ Hence, the total number of possible pairs of partitions in Algorithm \ref{alg:design-aware} is
$$K_{\rm aware} = \dfrac{n!}{n_1!n_2! \ldots n_M!}\dfrac{(N-n)!}{(N_1-n_1)!(N_2-n_2)! \ldots (N_M-n_M)!}.$$
Given $S=s$, Algorithm \ref{alg:design-aware} can produce $\boldsymbol{d}$ if and only if $\boldsymbol{n}(s,\boldsymbol{d})=\boldsymbol{n}$. In that case, the sampled and nonsampled parts are uniquely determined by $S_j=s\cap U_j(\boldsymbol{d})$ and $R_j=U_j(\boldsymbol{d})\backslash s$, for $j=1,\ldots,M$. Since all pairs of partitions are equally likely,
$$\P(S=s,\boldsymbol{\delta}_{\rm aware}=\boldsymbol{d})=\dfrac{p(s)\mathds{1}_{\boldsymbol{n}(s,\boldsymbol{d})=\boldsymbol{n}}}{K_{\rm aware}}.$$
Summing over all samples gives
$$\P(\boldsymbol{\delta}_{\rm aware}=\boldsymbol{d})=\dfrac{1}{K_{\rm aware}}\sum_{r\in\cP(U)}p(r)\mathds{1}_{\boldsymbol{n}(r,\boldsymbol{d})=\boldsymbol{n}}.$$
Therefore,
$$\P \left( S = s  \big\rvert  \boldsymbol{\delta}_{\rm aware} = \boldsymbol{d}\right) = \dfrac{p(s)\mathds{1}_{\boldsymbol{n}(s,\boldsymbol{d})=\boldsymbol{n}}}{\sum_{r \in \cP(U)}p(r)\mathds{1}_{\boldsymbol{n}(r,\boldsymbol{d})=\boldsymbol{n}}}.$$

For the uniform scheme, we use the independence of $\boldsymbol{\delta}_{\rm unif}$ and $S$ to obtain
\begin{align*}
&\P \left( S=s  \big\rvert  \boldsymbol{\delta}_{\rm unif} = \boldsymbol{d}, \boldsymbol{n}(S, \boldsymbol{d}) = \boldsymbol{n}\right)\\
&\qquad=\dfrac{\P \left( S=s, \boldsymbol{n}(S, \boldsymbol{d}) = \boldsymbol{n}  \big\rvert  \boldsymbol{\delta}_{\rm unif} = \boldsymbol{d}\right)}{\P \left( \boldsymbol{n}(S, \boldsymbol{d}) = \boldsymbol{n}  \big\rvert  \boldsymbol{\delta}_{\rm unif} = \boldsymbol{d} \right)}\\
&\qquad=\dfrac{\P(S=s\mid\boldsymbol{\delta}_{\rm unif}=\boldsymbol{d})\mathds{1}_{\boldsymbol{n}(s,\boldsymbol{d})=\boldsymbol{n}}}
{\sum_{r\in\cP(U)}\P(S=r\mid\boldsymbol{\delta}_{\rm unif}=\boldsymbol{d})\mathds{1}_{\boldsymbol{n}(r,\boldsymbol{d})=\boldsymbol{n}}}\\
&\qquad=\dfrac{p(s)\mathds{1}_{\boldsymbol{n}(s,\boldsymbol{d})=\boldsymbol{n}}}{\sum_{r\in\cP(U)}p(r)\mathds{1}_{\boldsymbol{n}(r,\boldsymbol{d})=\boldsymbol{n}}}\\
&\qquad=\P \left( S = s  \big\rvert  \boldsymbol{\delta}_{\rm aware} = \boldsymbol{d}\right).
\end{align*}
Thus, the two conditional laws are equal. It follows that one law factorizes across folds if and only if the other does. Applying this result to every target vector gives the stated equivalence between the two conditional independence properties. The equality of the first- and second-order conditional inclusion probabilities follows by summing the common probabilities over samples containing $k$ and samples containing both $k$ and $l$, respectively.

\subsection{Proof of Proposition \ref{propAssumptions}} \label{proofpropAssumptions}

By Theorem~\ref{Theo:eqAwareAgno}, under SRSWOR, all samples with the prescribed fold counts have the same conditional probability given $\cA$. There are $\prod_{j=1}^M\binom{N_{j,v}}{n_{j,v}}$ such samples, since the units in each fold can be chosen separately. Thus, for every compatible $s$,
$$
\P(S_v=s\mid\cA) =\frac{1}{\prod_{j=1}^M\binom{N_{j,v}}{n_{j,v}}}
=\prod_{j=1}^M\frac{1}{\binom{N_{j,v}}{n_{j,v}}}.
$$
Each factor is the probability of an SRSWOR sample in its fold. Hence, the fold samples are conditionally independent, the scheme is honest, and
$$
\pi_{k|\cA}=\frac{n_{j,v}}{N_{j,v}}, \qquad k\in U_{j,v}.
$$
For distinct $k,l\in U_{j,v}$,
$$
\pi_{kl|\cA}=\frac{n_{j,v}(n_{j,v}-1)}{N_{j,v}(N_{j,v}-1)}, \qquad
\Delta_{kl|\cA}=-\frac{(n_{j,v}/N_{j,v})(1-n_{j,v}/N_{j,v})}{N_{j,v}-1}.
$$
For units in different folds, $\pi_{kl|\cA}=\pi_{k|\cA}\pi_{l|\cA}$ and $\Delta_{kl|\cA}=0$.

The fold-size condition gives \ref{CF1}. By \ref{D1} and the allocation condition, $n_{j,v}/N_{j,v}\xrightarrow[v \to \infty]{}\pi_\star>0$, uniformly over the fixed number of folds. Since $n_{j,v}\geq1$ for every $j,v$, these ratios are bounded below by a positive constant over all $j,v$. This proves \ref{CF2b}, \ref{CF2c}, and \ref{CF6}. The second-order probabilities for distinct units also converge uniformly to $\pi_\star^2$, which gives \ref{CF2a}.

For \ref{CF3}, use $\pi_k=n_v/N_v$ and the allocation condition to obtain
$$
\E_p\left[\max_{k\in U_v}(\pi_{k|\cA}-\pi_k)^2\right] =\max_{1\leq j\leq
M}\left(\frac{n_{j,v}}{N_{j,v}}-\frac{n_v}{N_v}\right)^2 =\mathcal{O}(n_v^{-1}).
$$
For \ref{CF4}, the covariance formula gives, for all sufficiently large $v$,
$$n_v\max_{k\neq l\in U_v}|\Delta_{kl|\cA}|
\leq\frac{n_v}{\min_j N_{j,v}-1}=\mathcal{O}(1),$$
since each $N_{j,v}$ is of order $N_v$ and $n_v\leq N_v$. The constants in both bounds can be enlarged to cover the finitely many initial populations.

It remains to verify \ref{CF5}. For $1\leq r\leq4$ distinct units $k_1,\ldots,k_r$ in the same fold, and all sufficiently large $v$,
$$
\E_p\left[\prod_{t=1}^r I_{k_t}\Bigm|\cA\right] =\prod_{t=0}^{r-1}\frac{n_{j,v}-t}{N_{j,v}-t}
=\left(\frac{n_{j,v}}{N_{j,v}}\right)^r+\mathcal{O}(N_{j,v}^{-1}).
$$
For the last bound, all factors are in $[0,1]$, so replacing them one at a time gives
$$
\left|\prod_{t=0}^{r-1}\frac{n_{j,v}-t}{N_{j,v}-t} -\left(\frac{n_{j,v}}{N_{j,v}}\right)^r\right|
\leq\sum_{t=0}^{r-1}\left|\frac{n_{j,v}-t}{N_{j,v}-t}-\frac{n_{j,v}}{N_{j,v}}\right|
\leq\sum_{t=0}^{r-1}\frac{t}{N_{j,v}-t} =\mathcal{O}(N_{j,v}^{-1}).
$$
By independence across folds, both $\E_p[I_iI_jI_kI_l\mid\cA]$ and $\pi_{ij|\cA}\pi_{kl|\cA}$ differ from $\pi_{i|\cA}\pi_{j|\cA}\pi_{k|\cA}\pi_{l|\cA}$ by $\mathcal{O}(N_v^{-1})$, since each $N_{j,v}$ is of order $N_v$. Subtracting gives, uniformly over four distinct units,
$$
\E_p\left[(I_iI_j-\pi_{ij|\cA})(I_kI_l-\pi_{kl|\cA})\Bigm|\cA\right] =\mathcal{O}(N_v^{-1})=o(1).
$$
This proves \ref{CF5} and completes the proof.

\section{Proofs of Section \ref{sec:cdtMA}}

We first give a second-moment bound for centered sums within the folds.

\begin{lemma}\label{lemma:secondMoment}
Assume \textnormal{(Ind-$\cA$)} and \ref{CF4}. Let $(d_k)_{k\in U_v}$ be square-integrable random variables such that, for each $j$, $(d_k)_{k\in U_{j,v}}$ is measurable with respect to $\cA$ and $\{I_q\}_{q\in U_v\backslash U_{j,v}}$. Then,
$$
n_v\E_p\left[\left\{\frac{1}{N_v}\sum_{k\in U_v}(I_k-\pi_{k|\cA})d_k\right\}^2\right]
\leq\frac{C}{N_v}\sum_{k\in U_v}\E_p(d_k^2),
$$
where $C$ depends only on $M$ and $C_{\Delta 2}$.
\end{lemma}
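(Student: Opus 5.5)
We will prove Lemma~\ref{lemma:secondMoment} by first conditioning on $\cA$ and then taking expectations. Write $T:=N_v^{-1}\sum_{k\in U_v}(I_k-\pi_{k|\cA})d_k$. The first step is to split the sum by folds, $T=\sum_{j=1}^M T_j$ with $T_j:=N_v^{-1}\sum_{k\in U_{j,v}}(I_k-\pi_{k|\cA})d_k$. By $(a_1+\cdots+a_M)^2\le M\sum_j a_j^2$, it then suffices to show that, for each $j$,
$$
n_v\E_p[T_j^2]\le \frac{C'}{N_v}\sum_{k\in U_{j,v}}\E_p(d_k^2),
$$
where $C'$ depends only on $C_{\Delta2}$. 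Summing over $j$ gives the statement with $C=MC'$.

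Fix $j$ and set $\cG_j:=\sigma(\cA,\{I_q\}_{q\in U_v\setminus U_{j,v}})$. By hypothesis, each $d_k$ with $k\in U_{j,v}$ is $\cG_j$-measurable. Under (Ind-$\cA$), the fold-$j$ indicators are conditionally independent of the out-of-fold indicators given $\cA$. Hence, for $k,l\in U_{j,v}$,
$$
\E_p[I_k\mid\cG_j]=\pi_{k|\cA},\qquad \E_p[I_kI_l\mid\cG_j]=\pi_{kl|\cA}.
$$
This is the key step. It requires the standard fact that if $X\indep Y\mid\cA$, then the conditional law of $X$ given $(\cA,Y)$ equals its law given $\cA$, applied to $X=\{I_k\}_{k\in U_{j,v}}$ and $Y=\{I_q\}_{q\notin U_{j,v}}$. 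It follows that
$$
\E_p\left[(I_k-\pi_{k|\cA})(I_l-\pi_{l|\cA})\mid\cG_j\right]=\Delta_{kl|\cA},
$$
with $\Delta_{kk|\cA}=\pi_{k|\cA}(1-\pi_{k|\cA})\le 1/4$. Therefore
$$
\E_p[T_j^2\mid\cG_j]=\frac{1}{N_v^2}\sum_{k,l\in U_{j,v}}\Delta_{kl|\cA}\,d_kd_l .
$$

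Next we bound this quadratic form. The diagonal contributes at most $\tfrac14 N_v^{-2}\sum_k d_k^2$. For the off-diagonal part, \ref{CF4} gives $|\Delta_{kl|\cA}|\le C_{\Delta2}/n_v$ almost surely. Combining this with $|d_kd_l|\le (d_k^2+d_l^2)/2$ bounds it by
$$
\frac{C_{\Delta2}}{n_vN_v^2}\,N_{j,v}\sum_{k\in U_{j,v}}d_k^2\le \frac{C_{\Delta2}}{n_vN_v}\sum_{k\in U_{j,v}} d_k^2,
$$
using $N_{j,v}\le N_v$. Multiplying by $n_v$ and using $n_v\le N_v$ for the diagonal term, we obtain
$$
n_v\E_p[T_j^2\mid\cG_j]\le \left(\tfrac14+C_{\Delta2}\right)N_v^{-1}\sum_{k\in U_{j,v}}d_k^2 .
$$
Taking expectations gives the claim, with all terms finite by square integrability of the $d_k$.

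The main obstacle is the conditional-moment identity. One must check that conditioning on the out-of-fold indicators, on which the $d_k$ depend (they are built from predictions trained outside fold $j$), does not alter the fold-$j$ first- and second-order inclusion probabilities. This is exactly what (Ind-$\cA$) together with $\sigma(\boldsymbol\delta)\subseteq\sigma(\cA)$ delivers: the fold membership is $\cA$-measurable, so the index set $U_{j,v}$ is fixed given $\cG_j$. The remaining steps are routine bookkeeping.
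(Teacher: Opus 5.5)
Your proposal is correct and follows essentially the same route as the paper: condition on $\cA$ and the out-of-fold indicators, use \textnormal{(Ind-$\cA$)} to reduce each fold's conditional second moment to $\sum_{k,l\in U_{j,v}}\Delta_{kl|\cA}d_kd_l$, bound this using $\Delta_{kk|\cA}\leq 1/4$ and \ref{CF4}, and combine the folds via $(\sum_j a_j)^2\leq M\sum_j a_j^2$. The only cosmetic difference is in the off-diagonal bound: you use $|d_kd_l|\leq(d_k^2+d_l^2)/2$, whereas the paper uses $(\sum_k|d_k|)^2\leq N_{j,v}\sum_k d_k^2$; both give the same constant $M(1/4+C_{\Delta 2})$.
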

\begin{proof}
Conditional on $\cA$ and the sampling indicators outside $U_{j,v}$, the coefficients $d_k$, $k\in U_{j,v}$, are fixed. By \textnormal{(Ind-$\cA$)},
\begin{align*}
&\E_p\left[\left\{\sum_{k\in U_{j,v}}(I_k-\pi_{k|\cA})d_k\right\}^2
\Bigm|\cA,\{I_q\}_{q\in U_v\backslash U_{j,v}}\right]\\
&\quad=\sum_{k,l\in U_{j,v}}d_kd_l
\E_p\left[(I_k-\pi_{k|\cA})(I_l-\pi_{l|\cA})
\Bigm|\cA,\{I_q\}_{q\in U_v\backslash U_{j,v}}\right]\\
&\quad=\sum_{k,l\in U_{j,v}}\Delta_{kl|\cA}d_kd_l\\
&\quad\leq\frac{1}{4}\sum_{k\in U_{j,v}}d_k^2
+\frac{C_{\Delta 2}}{n_v}\left(\sum_{k\in U_{j,v}}|d_k|\right)^2\\
&\quad\leq\left(\frac{1}{4}+\frac{C_{\Delta 2}N_{j,v}}{n_v}\right)\sum_{k\in U_{j,v}}d_k^2,
\end{align*}
where we used $\Delta_{kk|\cA}\leq1/4$, \ref{CF4}, and the Cauchy--Schwarz inequality. For the sum over all folds, Cauchy--Schwarz and the preceding bound give
\begin{align*}
&n_v\E_p\left[\left\{\frac{1}{N_v}\sum_{k\in U_v}(I_k-\pi_{k|\cA})d_k\right\}^2\right]\\
&\quad\leq\frac{n_vM}{N_v^2}\sum_{j=1}^M
\E_p\left[\left\{\sum_{k\in U_{j,v}}(I_k-\pi_{k|\cA})d_k\right\}^2\right]\\
&\quad\leq\frac{M}{N_v^2}\sum_{j=1}^M
\E_p\left[\left(\frac{n_v}{4}+C_{\Delta 2}N_{j,v}\right)
\sum_{k\in U_{j,v}}d_k^2\right]\\
&\quad\leq\frac{C}{N_v}\sum_{k\in U_v}\E_p(d_k^2),
\end{align*}
where we used $n_v\leq N_v$ and $N_{j,v}\leq N_v$.
\end{proof}

\subsection{Proof of Proposition \ref{prop:unb}} \label{Proofprop:unb}

By the law of total expectation, it is enough to show that $\E_p[\widehat{\mu}_{cf}(\widehat{m},\pi_{|\cA})\mid\cA]=\mu$. Since $\widehat{m}_j$ is fitted using only the observations outside $U_j$, it is measurable with respect to $\cA$ and $\{I_l\}_{l\in U\backslash U_j}$. Moreover, by \textnormal{(Ind-$\cA$)},
$$
\E_p\left[I_k\mid\cA,\{I_l\}_{l\in U\backslash U_j}\right]
=\E_p[I_k\mid\cA]
=\pi_{k|\cA},
\qquad k\in U_j.
$$
Consequently, for $k\in U_j$,
\begin{align*}
&\E_p\left[\widehat{m}_j(\bx_k)
+I_k\frac{y_k-\widehat{m}_j(\bx_k)}{\pi_{k|\cA}}
\Bigm|\cA,\{I_l\}_{l\in U\backslash U_j}\right]\\
&\quad=\widehat{m}_j(\bx_k)
+\pi_{k|\cA}\frac{y_k-\widehat{m}_j(\bx_k)}{\pi_{k|\cA}}
=y_k.
\end{align*}
Therefore, by the tower property,
\begin{align*}
\E_p\left[\widehat{\mu}_{cf}(\widehat{m},\pi_{|\cA})\mid\cA\right]
&=\frac{1}{N}\sum_{j=1}^M\sum_{k\in U_j}
\E_p\left[\widehat{m}_j(\bx_k)+I_k\frac{y_k-\widehat{m}_j(\bx_k)}{\pi_{k|\cA}}\Bigm|\cA\right]\\
&=\frac{1}{N}\sum_{j=1}^M\sum_{k\in U_j}y_k
=\mu.
\end{align*}

\subsection{Proof of Theorem \ref{th:equivalent1}} \label{Proofth:equivalent1}

Let $\cE_v:=\{\min_{k\in U_v}\pi_{k|\cA}\geq\lambda_{cf}\}$ and let $\epsilon>0$. By Markov's inequality,
\begin{align*}
&\P\left(\sqrt{n_v}\left\rvert\widehat{\mu}_{cf}(\widehat{m},\pi_{|\cA})-\widehat{\mu}_{\mathrm{ma}}(\widetilde{m},\pi_{|\cA})\right\rvert>\epsilon\right)\\
&\quad\leq \P(\cE_v^c)+\frac{n_v}{\epsilon^2}\E_p\left[\left\{\widehat{\mu}_{cf}(\widehat{m},\pi_{|\cA})-\widehat{\mu}_{\mathrm{ma}}(\widetilde{m},\pi_{|\cA})\right\}^2\mathds{1}_{\cE_v}\right].
\end{align*}
Assumption~\ref{CF2a} gives $\P(\cE_v^c)\xrightarrow[v \to \infty]{} 0$. It remains to control the expectation.

For $k\in U_v$, set $\rho_k:=\widehat{m}^{(-k)}(\bx_k)-\widetilde{m}(\bx_k)$. On $\cE_v$,
$$
\widehat{\mu}_{cf}(\widehat{m},\pi_{|\cA})-\widehat{\mu}_{\mathrm{ma}}(\widetilde{m},\pi_{|\cA})
=-\frac{1}{N_v}\sum_{k\in U_v}(I_k-\pi_{k|\cA})\frac{\rho_k}{\pi_{k|\cA}}.
$$
Apply Lemma~\ref{lemma:secondMoment} with $d_k=\rho_k/\pi_{k|\cA}$ on $\cE_v$ and $d_k=0$ otherwise. The required measurability follows because $\rho_k$ uses only observations outside its fold, while $\cE_v$ is determined by $\cA$. Since $d_k^2\leq\rho_k^2/\lambda_{cf}^2$, the lemma gives
$$
n_v\E_p\left[\left\{\widehat{\mu}_{cf}(\widehat{m},\pi_{|\cA})-\widehat{\mu}_{\mathrm{ma}}(\widetilde{m},\pi_{|\cA})\right\}^2\mathds{1}_{\cE_v}\right]
\leq\frac{C}{N_v}\sum_{k\in U_v}\E_p(\rho_k^2),
$$
where $C$ is independent of $v$. The right-hand side converges to zero by \ref{M1}. This proves the result.

\subsection{Proof of Theorem \ref{theo:consCond}} \label{Prooftheo:consCond}

Let $\cE_v:=\{\min_{k\in U_v}\pi_{k|\cA}\geq\lambda_{cf}\}$. On $\cE_v$,
$$\widehat{\mu}_{\mathrm{ma}}(\widetilde{m},\pi_{|\cA})-\mu
=\frac{1}{N_v}\sum_{k\in U_v}\widetilde e_k
\left(\frac{I_k}{\pi_{k|\cA}}-1\right).$$
Its conditional mean is
$$\E_p\left[\widehat{\mu}_{\mathrm{ma}}(\widetilde{m},\pi_{|\cA})-\mu\mid\cA\right]
=\frac{1}{N_v}\sum_{k\in U_v}\widetilde e_k
\left(\frac{\pi_{k|\cA}}{\pi_{k|\cA}}-1\right)=0.$$
Therefore,
\begin{align*}
n_v \E_p \left[ \left(  \widehat{\mu}_{\mathrm{ma}}(\widetilde{m},\pi_{|\cA}) - \mu\right)^2\big \rvert \cA \right] &= n_v \V_p  \left(  \widehat{\mu}_{\mathrm{ma}}(\widetilde{m},\pi_{|\cA}) \big \rvert \cA \right) \\
&= n_v \left\{\dfrac{1}{N_v^2}\sum_{k\in U_v}\Delta_{kk|\cA}   \dfrac{\widetilde{e}_k^{2}}{\pi_{k\rvert \cA}^2} + \dfrac{1}{N_v^2}\sum_{k\in U_v} \sum_{\substack{l\in U_v \\ l \neq k}}\Delta_{kl|\cA} \dfrac{\widetilde{e}_k}{\pi_{k\rvert \cA}}\dfrac{\widetilde{e}_l}{\pi_{l\rvert \cA}} \right\}\\
&\leq 
\dfrac{1}{N_v \lambda_{cf}^2}\sum_{k\in U_v}  \widetilde{e}_k^{2} \left\{ \dfrac{n_v}{N_v} + C_{\Delta 2}\right\}.
\end{align*}
Here we used \ref{CF4} and
$$
\sum_{k\in U_v}\sum_{\substack{l\in U_v\\l\neq k}}|\widetilde{e}_k\widetilde{e}_l|
\leq\left(\sum_{k\in U_v}|\widetilde{e}_k|\right)^2
\leq N_v\sum_{k\in U_v}\widetilde{e}_k^2.
$$
Multiplying by $\mathds{1}_{\cE_v}$, taking expectations, and using \ref{M2a}, we obtain, for all sufficiently large $v$,
$$
n_v\E_p\left[\left\{\widehat{\mu}_{\mathrm{ma}}(\widetilde{m},\pi_{|\cA})-\mu\right\}^2\mathds{1}_{\cE_v}\right]\leq C,
$$
for some constant $C$.
Now, Markov's inequality gives, for every $K>0$,
\begin{align*}
&\P\left(\left\rvert\sqrt{n_v}\left\{\widehat{\mu}_{\mathrm{ma}}(\widetilde{m},\pi_{|\cA})-\mu\right\}\right\rvert>K\right)\\
&\quad\leq \P(\cE_v^c)+\frac{n_v}{K^2}\E_p\left[\left\{\widehat{\mu}_{\mathrm{ma}}(\widetilde{m},\pi_{|\cA})-\mu\right\}^2\mathds{1}_{\cE_v}\right]
\leq \P(\cE_v^c)+\frac{C}{K^2}.
\end{align*}
Assumption~\ref{CF2a} gives $\P(\cE_v^c)\xrightarrow[v \to \infty]{}0$. The result follows by choosing $K$ large enough.

\subsection{Proof of Theorem \ref{theo:consistencyCondVar}} \label{Prooftheo:consistencyCondVar}

Write
\begin{align*}
&\left\rvert \widehat{V}_{cf}(\widehat{m},\pi_{|\cA})-\V_p\left\{\widehat{\mu}_{\mathrm{ma}}(\widetilde{m},\pi_{|\cA})\right\}\right\rvert\\
&\quad\leq \left\rvert \widehat{V}_{cf}(\widehat{m},\pi_{|\cA})-\V_p\left\{\widehat{\mu}_{\mathrm{ma}}(\widetilde{m},\pi_{|\cA})\mid\cA\right\}\right\rvert + \left\rvert \V_p\left\{\widehat{\mu}_{\mathrm{ma}}(\widetilde{m},\pi_{|\cA})\mid\cA\right\}-\V_p\left\{\widehat{\mu}_{\mathrm{ma}}(\widetilde{m},\pi_{|\cA})\right\}\right\rvert\\
&\quad=:T_{1v}+T_{2v}.
\end{align*}
We first consider $T_{1v}$. Let
$$
\cE_v:=\left\{\min_{k\in U_v}\pi_{k|\cA}\geq\lambda_{cf}, 
\min_{\substack{k,l\in U_v\\k\neq l}}\pi_{kl|\cA}\geq\lambda_{cf}^\star\right\}.
$$
By \ref{CF2a}, $\P(\cE_v^c)\xrightarrow[v \to \infty]{}0$. On $\cE_v$, set $c_{kl|\cA}:=\Delta_{kl|\cA}/(\pi_{k|\cA}\pi_{l|\cA})$. We have
\begin{align*}
T_{1v}&\leq \left\rvert\dfrac{1}{N_v^2}\sum_{k\in U_v}\sum_{l\in U_v}c_{kl|\cA}\widetilde{e}_k\widetilde{e}_l\left(\dfrac{I_kI_l}{\pi_{kl|\cA}}-1\right)\right\rvert\\
&\quad+\left\rvert\dfrac{1}{N_v^2}\sum_{k\in S_v}\sum_{l\in S_v}\dfrac{c_{kl|\cA}}{\pi_{kl|\cA}}\left(\widehat{e}_{(-k)}\widehat{e}_{(-l)}-\widetilde{e}_k\widetilde{e}_l\right)\right\rvert\\
&=:A_v+B_v.
\end{align*}

For $A_v$, we condition on $\cA$ and use the argument in the proof of Theorem~3 of \citet{breidt2000local}, with the conditional inclusion probabilities in place of the unconditional ones. On $\cE_v$, the required lower bounds follow from \ref{CF2a}, while \ref{CF4}, \ref{CF5}, and \ref{M2b} give the remaining bounds. Hence, $n_vA_v=o_\P(1)$ on $\cE_v$.

We next control $B_v$. Set $d_k:=\widehat{e}_{(-k)}-\widetilde{e}_k=\widetilde m(\bx_k)-\widehat m^{(-k)}(\bx_k)$. For $k,l\in S_v$,
$$
\widehat{e}_{(-k)}\widehat{e}_{(-l)}-\widetilde{e}_k\widetilde{e}_l
=d_k\widetilde{e}_l+d_l\widetilde{e}_k+d_kd_l.
$$
On $\cE_v$, the diagonal and off-diagonal coefficients satisfy
$$
\frac{|c_{kk|\cA}|}{\pi_{kk|\cA}}
=\frac{1-\pi_{k|\cA}}{\pi_{k|\cA}^2}
\leq\frac{1}{\lambda_{cf}^2},
\qquad
\max_{k\neq l\in U_v}\frac{|c_{kl|\cA}|}{\pi_{kl|\cA}}
\leq\frac{C_{\Delta 2}}{n_v\lambda_{cf}^2\lambda_{cf}^\star},
$$
by \ref{CF4}. Taking absolute values and bounding the sample sums by population sums gives
\begin{align*}
n_vB_v
&\leq\frac{n_v}{N_v^2\lambda_{cf}^2}
\sum_{k\in U_v}\left(2|d_k\widetilde e_k|+d_k^2\right)\\
&\quad+\frac{C_{\Delta 2}}{N_v^2\lambda_{cf}^2\lambda_{cf}^\star}
\left[
2\left(\sum_{k\in U_v}|d_k|\right)
\left(\sum_{k\in U_v}|\widetilde e_k|\right)
+\left(\sum_{k\in U_v}|d_k|\right)^2
\right].
\end{align*}
Cauchy--Schwarz gives
\begin{align*}
\frac{1}{N_v}\sum_{k\in U_v}|d_k\widetilde e_k|
&\leq\left\{\left(\frac{1}{N_v}\sum_{k\in U_v}d_k^2\right)
\left(\frac{1}{N_v}\sum_{k\in U_v}\widetilde e_k^2\right)\right\}^{1/2},\\
\frac{1}{N_v^2}\left(\sum_{k\in U_v}|d_k|\right)^2
&\leq\frac{1}{N_v}\sum_{k\in U_v}d_k^2,\\
\frac{1}{N_v^2}\left(\sum_{k\in U_v}|d_k|\right)
\left(\sum_{k\in U_v}|\widetilde e_k|\right)
&\leq\left\{\left(\frac{1}{N_v}\sum_{k\in U_v}d_k^2\right)
\left(\frac{1}{N_v}\sum_{k\in U_v}\widetilde e_k^2\right)\right\}^{1/2}.
\end{align*}
Together with $n_v\leq N_v$, these bounds give
$$
n_vB_v\leq C_B\left[2\left\{\left(\frac{1}{N_v}\sum_{k\in U_v}d_k^2\right)\left(\frac{1}{N_v}\sum_{k\in U_v}\widetilde{e}_k^2\right)\right\}^{1/2}+\frac{1}{N_v}\sum_{k\in U_v}d_k^2\right],
$$
where $C_B$ is a constant independent of $v$.
By \ref{M1} and Markov's inequality, $N_v^{-1}\sum_k d_k^2=o_\P(1)$. Moreover, by the Cauchy--Schwarz inequality and \ref{M2b},
$$
\left(\frac{1}{N_v}\sum_{k\in U_v}\widetilde{e}_k^2\right)^2
\leq\frac{1}{N_v}\sum_{k\in U_v}\widetilde{e}_k^4=\mathcal{O}(1).
$$
Hence, $n_vB_v=o_\P(1)$ on $\cE_v$. Since $\P(\cE_v^c)\xrightarrow[v \to \infty]{}0$, we obtain $n_vT_{1v}=o_\P(1)$.

For $T_{2v}$, set
$$
Z_v:=n_v\V_p\left\{\widehat{\mu}_{\mathrm{ma}}(\widetilde{m},\pi_{|\cA})\mid\cA\right\}.
$$
The law of total variance gives
$$
n_vT_{2v}
\leq\left\rvert Z_v-\E_p(Z_v)\right\rvert
+n_v\V_p\left\{\E_p\left[\widehat{\mu}_{\mathrm{ma}}(\widetilde{m},\pi_{|\cA})\mid\cA\right]\right\}.
$$
We first show that $Z_v-\E_p(Z_v)=o_\P(1)$. Since $Z_v\xrightarrow[v \to \infty]{\P}c$ by assumption, it is enough to verify uniform integrability. In the sums below, terms for which a first-order conditional inclusion probability is zero are omitted. By \ref{CF4},
\begin{align*}
Z_v
&\leq \frac{n_v}{N_v^2}\sum_{k\in U_v}\frac{\widetilde{e}_k^2}{\pi_+}
+\frac{C_{\Delta 2}}{N_v^2\pi_+^2}\sum_{k\in U_v}\sum_{\substack{l\in U_v\\l\neq k}}|\widetilde{e}_k\widetilde{e}_l|\\
&\leq \left\{\frac{n_v}{N_v\pi_+}+\frac{C_{\Delta 2}}{\pi_+^2}\right\}\frac{1}{N_v}\sum_{k\in U_v}\widetilde{e}_k^2.
\end{align*}
By \ref{M2b},
$$
C_e:=\sup_{v\in\N}\frac{1}{N_v}\sum_{k\in U_v}\widetilde{e}_k^2<\infty.
$$
Since $n_v/N_v\leq1$ and $\pi_+\leq1$,
$$
Z_v\leq\frac{C_e(1+C_{\Delta 2})}{\pi_+^2}.
$$
It follows from \ref{CF6} that
$$
\sup_{v\in\N}\E_p\left(Z_v^{1+\delta/2}\right)
\leq\{C_e(1+C_{\Delta 2})\}^{1+\delta/2}
\sup_{v\in\N}\E_p\left(\frac{1}{\pi_+^{2+\delta}}\right)<\infty.
$$
Thus, $(Z_v)_{v\in\N}$ is uniformly integrable, so $\E_p(Z_v)\xrightarrow[v \to \infty]{} c$, and therefore
$$
Z_v-\E_p(Z_v)=o_\P(1).
$$
It remains to bound the variance of the conditional mean. Let $\cF_v:=\{\min_{k\in U_v}\pi_{k|\cA}>0\}$. On $\cF_v$, the oracle estimator is conditionally unbiased. In general, however,
\begin{align*}
&\E_p\left[\widehat{\mu}_{\mathrm{ma}}(\widetilde{m},\pi_{|\cA})\mid\cA\right]-\mu\\
&\quad=\frac{1}{N_v}\left\{
\sum_{k\in U_v}\widetilde m(\bx_k)
+\sum_{\substack{k\in U_v\\\pi_{k|\cA}>0}}\widetilde e_k
-\sum_{k\in U_v}y_k\right\}\\
&\quad=-\frac{1}{N_v}\sum_{\substack{k\in U_v\\\pi_{k|\cA}=0}}\widetilde{e}_k.
\end{align*}
The Cauchy--Schwarz inequality gives
\begin{align*}
&\left\rvert\E_p\left[\widehat{\mu}_{\mathrm{ma}}(\widetilde{m},\pi_{|\cA})\mid\cA\right]-\mu\right\rvert\\
&\quad\leq\left\{\left(\frac{1}{N_v}\sum_{k\in U_v}\mathds{1}_{\{\pi_{k|\cA}=0\}}\right)
\left(\frac{1}{N_v}\sum_{k\in U_v}\widetilde e_k^2\right)\right\}^{1/2}\\
&\quad\leq C_e^{1/2}\mathds{1}_{\cF_v^c}.
\end{align*}
The variance is at most the mean squared deviation from $\mu$. Hence, by \ref{CF2b},
\begin{align*}
&\V_p\left\{\E_p\left[\widehat{\mu}_{\mathrm{ma}}(\widetilde{m},\pi_{|\cA})\mid\cA\right]\right\}\\
&\quad\leq\E_p\left[\left\{\E_p\left[
\widehat{\mu}_{\mathrm{ma}}(\widetilde{m},\pi_{|\cA})\mid\cA\right]-\mu\right\}^2\right]\\
&\quad\leq C_e\P(\cF_v^c)=o(n_v^{-1}).
\end{align*}
These two bounds give $n_vT_{2v}=o_\P(1)$.
Combining the bounds for $T_{1v}$ and $T_{2v}$ proves the result.

\subsection{Proof of Corollary \ref{coro1}} \label{proofcoro1}

The proof of Theorem~\ref{theo:consistencyCondVar} gives
$$
n_v\V_p\left\{\widehat{\mu}_{\mathrm{ma}}(\widetilde{m},\pi_{|\cA})\right\}\xrightarrow[v \to \infty]{} c>0.
$$
Together with Theorem~\ref{theo:consistencyCondVar}, this implies
$$
\frac{\widehat{V}_{cf}(\widehat{m},\pi_{|\cA})}{\V_p\left\{\widehat{\mu}_{\mathrm{ma}}(\widetilde{m},\pi_{|\cA})\right\}}
\xrightarrow[v \to \infty]{\P}1.
$$
In particular, $\P\{\widehat{V}_{cf}(\widehat{m},\pi_{|\cA})>0\}\xrightarrow[v \to \infty]{}1$.
Moreover, the oracle standard deviation is of order $n_v^{-1/2}$, so Theorem~\ref{th:equivalent1} gives
$$
\frac{\widehat{\mu}_{cf}(\widehat{m},\pi_{|\cA})-\widehat{\mu}_{\mathrm{ma}}(\widetilde{m},\pi_{|\cA})}{\sqrt{\V_p\left\{\widehat{\mu}_{\mathrm{ma}}(\widetilde{m},\pi_{|\cA})\right\}}}
=o_\P(1).
$$
On the event that the estimated variance is positive,
\begin{align*}
&\frac{\widehat{\mu}_{cf}(\widehat m,\pi_{|\cA})-\mu}
{\sqrt{\widehat V_{cf}(\widehat m,\pi_{|\cA})}}\\
&\quad=\left[
\frac{\widehat{\mu}_{\mathrm{ma}}(\widetilde m,\pi_{|\cA})-\mu}
{\sqrt{\V_p\{\widehat{\mu}_{\mathrm{ma}}(\widetilde m,\pi_{|\cA})\}}}
+o_\P(1)\right]
\left[
\frac{\V_p\{\widehat{\mu}_{\mathrm{ma}}(\widetilde m,\pi_{|\cA})\}}
{\widehat V_{cf}(\widehat m,\pi_{|\cA})}
\right]^{1/2}.
\end{align*}
The result follows from the assumed central limit theorem and Slutsky's theorem.

\section{Proofs of Section \ref{Sec:uncond}}

\subsection{Proof of Theorem \ref{th:equivalent2}} \label{Proofth:equivalent2}

We first split the difference between the two estimators into two terms:
\begin{align*}
&\sqrt{n_v}\left\{\widehat{\mu}_{cf}(\widehat{m},\pi)-\widehat{\mu}_{\mathrm{ma}}(\widetilde{m},\pi)\right\}\\
&\quad=\frac{\sqrt{n_v}}{N_v}\sum_{j=1}^M\sum_{k\in U_{j,v}}\frac{I_k-\pi_{k|\cA}}{\pi_k}\left\{\widetilde{m}(\bx_k)-\widehat{m}^{(-k)}(\bx_k)\right\}\\
&\qquad+\frac{\sqrt{n_v}}{N_v}\sum_{j=1}^M\sum_{k\in U_{j,v}}\left(\frac{\pi_{k|\cA}}{\pi_k}-1\right)\left\{\widetilde{m}(\bx_k)-\widehat{m}^{(-k)}(\bx_k)\right\}\\
&\quad:=\sum_{j=1}^M\left\{\sqrt{n_v}T_{j,v}^{(1)}+\sqrt{n_v}T_{j,v}^{(2)}\right\}.
\end{align*}
We first control the centered term $\sqrt{n_v}\sum_{j=1}^M T_{j,v}^{(1)}$. For $k\in U_v$, set $d_k:=\widetilde{m}(\bx_k)-\widehat{m}^{(-k)}(\bx_k)$. Apply Lemma~\ref{lemma:secondMoment} to the coefficients $d_k/\pi_k$, which depend only on the observations outside the fold containing $k$ and on $\cA$. By \ref{D2} and \ref{M1},
$$
n_v\E_p\left[\left\{\sum_{j=1}^M T_{j,v}^{(1)}\right\}^2\right] \leq\frac{C}{N_v}\sum_{k\in
U_v}\E_p(d_k^2)=o(1),
$$
where $C$ is independent of $v$. Thus, Markov's inequality gives $\sqrt{n_v}\sum_{j=1}^M T_{j,v}^{(1)}=o_\P(1)$.

For the second term, the Cauchy--Schwarz inequality gives
\begin{align*}
n_v\left|T_{j,v}^{(2)}\right|^2
&\leq\frac{n_v}{N_v^2\lambda^2}\left(\sum_{k\in U_{j,v}}|\pi_{k|\cA}-\pi_k||d_k|\right)^2\\
&\leq\frac{n_vN_{j,v}}{N_v^2\lambda^2}
\max_{k\in U_v}(\pi_{k|\cA}-\pi_k)^2\sum_{k\in U_{j,v}}d_k^2\\
&\leq\frac{n_v}{\lambda^2}\max_{k\in U_v}(\pi_{k|\cA}-\pi_k)^2
\frac{1}{N_v}\sum_{k\in U_v}d_k^2.
\end{align*}
For every $\epsilon,K>0$, the last bound and Markov's inequality give
\begin{align*}
\P\left(\sqrt{n_v}|T_{j,v}^{(2)}|>\epsilon\right)
&\leq\P\left(n_v\max_{k\in U_v}(\pi_{k|\cA}-\pi_k)^2>K\right)\\
&\quad+\P\left(\frac{1}{N_v}\sum_{k\in U_v}d_k^2>\frac{\epsilon^2\lambda^2}{K}\right)\\
&\leq\frac{C_1}{K}
+\frac{K}{\epsilon^2\lambda^2N_v}\sum_{k\in U_v}\E_p(d_k^2),
\end{align*}
where the last line uses \ref{CF3}. By \ref{M1}, letting $v\to\infty$ and then $K\to\infty$ proves that $\sqrt{n_v}T_{j,v}^{(2)}=o_\P(1)$. Since $M$ is fixed, the result follows.

\subsection{Proof of Theorem \ref{theo:consUnCond}} \label{Prooftheo:consUnCond}

Since $\widetilde{m}$ is deterministic,
$$
\widehat{\mu}_{\mathrm{ma}}(\widetilde{m},\pi)-\mu =\frac{1}{N_v}\sum_{k\in
U_v}\widetilde{e}_k\left(\frac{I_k}{\pi_k}-1\right),
$$
and this quantity has mean zero. Therefore, by \ref{D2} and \ref{D3},
\begin{align*}
&\E_p\left[\left\{\widehat{\mu}_{\mathrm{ma}}(\widetilde{m},\pi)-\mu\right\}^2\right]\\
&\quad=\frac{1}{N_v^2}\sum_{k\in U_v}\frac{1-\pi_k}{\pi_k}\widetilde{e}_k^2
+\frac{1}{N_v^2}\sum_{k\in U_v}\sum_{\substack{l\in U_v\\l\neq k}}\frac{\Delta_{kl}}{\pi_k\pi_l}\widetilde{e}_k\widetilde{e}_l\\
&\quad\leq\left\{\frac{1-\lambda}{N_v\lambda}+\frac{C}{n_v\lambda^2}\right\}
\frac{1}{N_v}\sum_{k\in U_v}\widetilde{e}_k^2,
\end{align*}
where we used
$$
\sum_{k\in U_v}\sum_{\substack{l\in U_v\\l\neq k}}|\widetilde{e}_k\widetilde{e}_l| \leq\left(\sum_{k\in
U_v}|\widetilde{e}_k|\right)^2 \leq N_v\sum_{k\in U_v}\widetilde{e}_k^2.
$$
By \ref{D1}, $n_v$ and $N_v$ have the same order, while \ref{M2a} bounds the average squared residuals. Thus, the second moment is $\mathcal{O}(N_v^{-1})$, and the result follows from Chebyshev's inequality.

\subsection{Proof of Theorem \ref{theoVar}} \label{ProoftheoVar}

The variance estimator based on the fixed residuals is
$$
\widetilde{V}(\widetilde{m},\pi) =\frac{1}{N_v^2}\sum_{k\in S_v}\sum_{l\in S_v} \frac{\Delta_{kl}}{\pi_{kl}}
\frac{\widetilde{e}_k}{\pi_k} \frac{\widetilde{e}_l}{\pi_l}.
$$
We split the variance estimation error as follows:
\begin{align*}
&n_v\left\rvert\widehat{V}_{cf}(\widehat{m},\pi)-\V_p \left\{\widehat{\mu}_{\mathrm{ma}}(\widetilde{m},\pi)\right\}\right\rvert\\
&\quad\leq n_v\left\rvert\widehat{V}_{cf}(\widehat{m},\pi)-\widetilde{V}(\widetilde{m},\pi)\right\rvert
+n_v\left\rvert\widetilde{V}(\widetilde{m},\pi)-\V_p \left\{\widehat{\mu}_{\mathrm{ma}}(\widetilde{m},\pi)\right\}\right\rvert\\
&\quad=:A_v+B_v.
\end{align*}

We first control $A_v$. Set $d_k:=\widehat{e}_{(-k)}-\widetilde{e}_k=\widetilde m(\bx_k)-\widehat m^{(-k)}(\bx_k)$. By \ref{D2} and \ref{D3},
$$
\frac{|\Delta_{kk}|}{\pi_k^2\pi_{kk}} =\frac{1-\pi_k}{\pi_k^2} \leq\frac{1}{\lambda^2}, \qquad \max_{k\neq
l\in U_v}\frac{|\Delta_{kl}|}{\pi_k\pi_l\pi_{kl}} \leq\frac{C}{n_v\lambda^2\lambda^\star}.
$$
We have
$$
\widehat{e}_{(-k)}\widehat{e}_{(-l)}-\widetilde{e}_k\widetilde{e}_l
=d_kd_l+d_k\widetilde{e}_l+d_l\widetilde{e}_k.
$$
Taking absolute values and bounding the sample sums by population sums gives
\begin{align*}
A_v
&\leq\frac{n_v}{N_v^2\lambda^2}
\sum_{k\in U_v}\left(d_k^2+2|d_k\widetilde e_k|\right)\\
&\quad+\frac{C}{N_v^2\lambda^2\lambda^\star}
\left[
\left(\sum_{k\in U_v}|d_k|\right)^2
+2\left(\sum_{k\in U_v}|d_k|\right)
\left(\sum_{k\in U_v}|\widetilde e_k|\right)
\right].
\end{align*}
Cauchy--Schwarz gives
\begin{align*}
\frac{1}{N_v}\sum_{k\in U_v}|d_k\widetilde e_k|
&\leq\left\{\left(\frac{1}{N_v}\sum_{k\in U_v}d_k^2\right)
\left(\frac{1}{N_v}\sum_{k\in U_v}\widetilde e_k^2\right)\right\}^{1/2},\\
\frac{1}{N_v^2}\left(\sum_{k\in U_v}|d_k|\right)^2
&\leq\frac{1}{N_v}\sum_{k\in U_v}d_k^2,\\
\frac{1}{N_v^2}\left(\sum_{k\in U_v}|d_k|\right)
\left(\sum_{k\in U_v}|\widetilde e_k|\right)
&\leq\left\{\left(\frac{1}{N_v}\sum_{k\in U_v}d_k^2\right)
\left(\frac{1}{N_v}\sum_{k\in U_v}\widetilde e_k^2\right)\right\}^{1/2}.
\end{align*}
Together with $n_v\leq N_v$, these bounds give
$$
A_v\leq C_A\left[ \frac{1}{N_v}\sum_{k\in U_v}d_k^2 +2\left\{\left(\frac{1}{N_v}\sum_{k\in U_v}d_k^2\right)
\left(\frac{1}{N_v}\sum_{k\in U_v}\widetilde{e}_k^2\right)\right\}^{1/2} \right],
$$
where $C_A$ is a constant independent of $v$.
By \ref{M1} and Markov's inequality, $N_v^{-1}\sum_{k\in U_v}d_k^2=o_\P(1)$. Moreover, \ref{M2b} and the Cauchy--Schwarz inequality imply that $N_v^{-1}\sum_{k\in U_v}\widetilde{e}_k^2$ is bounded. Hence, $A_v=o_\P(1)$.

It remains to control $B_v$. By \ref{D2},
$$
\begin{aligned}
\E_p\left\{\widetilde{V}(\widetilde{m},\pi)\right\}
&=\frac{1}{N_v^2}\sum_{k,l\in U_v}\frac{\E_p(I_kI_l)}{\pi_{kl}}
\frac{\Delta_{kl}}{\pi_k\pi_l}\widetilde e_k\widetilde e_l\\
&=\frac{1}{N_v^2}\sum_{k,l\in U_v}
\frac{\Delta_{kl}}{\pi_k\pi_l}\widetilde e_k\widetilde e_l\\
&=\V_p \left\{\widehat{\mu}_{\mathrm{ma}}(\widetilde{m},\pi)\right\}.
\end{aligned}
$$
The argument used in the proof of Theorem~3 of \citet{breidt2000local} applies to the fixed residuals $(\widetilde{e}_k)_{k\in U_v}$. Under \ref{D1}--\ref{D4} and \ref{M2b}, it gives $B_v=o_\P(1)$.

\subsection{Proof of Corollary \ref{coro2}} \label{proofcoro2}

We first compare the estimated and oracle variances. By \eqref{eq:varOrder}, the oracle variance is bounded below by a positive constant times $n_v^{-1}$. Therefore, Theorem~\ref{theoVar} gives
$$
\frac{\widehat{V}_{cf}(\widehat{m},\pi)}{\V_p \left\{\widehat{\mu}_{\mathrm{ma}}(\widetilde{m},\pi)\right\}}
\xrightarrow[v \to \infty]{\P}1.
$$
In particular, $\P\{\widehat{V}_{cf}(\widehat{m},\pi)>0\}\xrightarrow[v \to \infty]{}1$. For the difference between the two numerators, Theorem~\ref{th:equivalent2} and \eqref{eq:varOrder} give
$$
\frac{\widehat{\mu}_{cf}(\widehat{m},\pi)-\widehat{\mu}_{\mathrm{ma}}(\widetilde{m},\pi)}{\sqrt{\V_p \left\{\widehat{\mu}_{\mathrm{ma}}(\widetilde{m},\pi)\right\}}}
=o_\P(1).
$$
On the event that the estimated variance is positive,
\begin{align*}
&\frac{\widehat{\mu}_{cf}(\widehat m,\pi)-\mu}
{\sqrt{\widehat V_{cf}(\widehat m,\pi)}}\\
&\quad=\left[
\frac{\widehat{\mu}_{\mathrm{ma}}(\widetilde m,\pi)-\mu}
{\sqrt{\V_p\{\widehat{\mu}_{\mathrm{ma}}(\widetilde m,\pi)\}}}
+o_\P(1)\right]
\left[
\frac{\V_p\{\widehat{\mu}_{\mathrm{ma}}(\widetilde m,\pi)\}}
{\widehat V_{cf}(\widehat m,\pi)}
\right]^{1/2}.
\end{align*}
The result follows from the assumed central limit theorem and Slutsky's theorem.

\section{Proofs of Section \ref{sec:6}}

\subsection{Proof of Theorem \ref{theo:eqq}} \label{Prooftheo:eqq}

We first compare the oracle estimators; the first-order equivalences will then give the result. By definition,
$$
B_\cA=\dfrac{1}{N_v}\sum_{k\in U_v}\left(\dfrac{\pi_{k|\cA}}{\pi_k}-1\right)\widetilde{e}_k.
$$
Let $\cE_v:=\{\min_{k\in U_v}\pi_{k|\cA}\geq\lambda_{cf}\}$. On $\cE_v$, centering the sampling indicators at their conditional means gives
\begin{align*}
&\widehat{\mu}_{\mathrm{ma}}(\widetilde{m},\pi)-\widehat{\mu}_{\mathrm{ma}}(\widetilde{m},\pi_{|\cA})-B_\cA\\
&\quad=\dfrac{1}{N_v}\sum_{k\in U_v}\widetilde{e}_k(I_k-\pi_{k|\cA})
\left(\dfrac{1}{\pi_k}-\dfrac{1}{\pi_{k|\cA}}\right).
\end{align*}
Its conditional mean is zero, since $\E_p(I_k-\pi_{k|\cA}\mid\cA)=0$. To bound its conditional second moment, first note that, by \ref{D2},
$$
\left|\dfrac{1}{\pi_k}-\dfrac{1}{\pi_{k|\cA}}\right|
\leq\dfrac{|\pi_{k|\cA}-\pi_k|}{\lambda\lambda_{cf}}.
$$
Separating the diagonal and off-diagonal terms, and using \ref{CF4} and the Cauchy--Schwarz inequality, we obtain on $\cE_v$
\begin{align*}
&n_v\E_p\left[\left\{\widehat{\mu}_{\mathrm{ma}}(\widetilde{m},\pi)-\widehat{\mu}_{\mathrm{ma}}(\widetilde{m},\pi_{|\cA})-B_\cA\right\}^2\Bigm|\cA\right]\\
&\quad=\frac{n_v}{N_v^2}\sum_{k,l\in U_v}\Delta_{kl|\cA}\widetilde e_k\widetilde e_l
\left(\frac{1}{\pi_k}-\frac{1}{\pi_{k|\cA}}\right)
\left(\frac{1}{\pi_l}-\frac{1}{\pi_{l|\cA}}\right)\\
&\quad\leq
\frac{\max_{k\in U_v}(\pi_{k|\cA}-\pi_k)^2}{\lambda^2\lambda_{cf}^2}
\left[
\frac{n_v}{4N_v^2}\sum_{k\in U_v}\widetilde e_k^2
+\frac{C_{\Delta 2}}{N_v^2}\sum_{k\in U_v}\sum_{\substack{l\in U_v\\l\neq k}}|\widetilde e_k\widetilde e_l|
\right]\\
&\quad\leq C\left(\dfrac{1}{N_v}\sum_{k\in U_v}\widetilde{e}_k^2\right)
\max_{k\in U_v}(\pi_{k|\cA}-\pi_k)^2,
\end{align*}
where $C$ is a constant independent of $v$. Here we used $n_v\leq N_v$ and
$$
\sum_{k\in U_v}\sum_{\substack{l\in U_v\\l\neq k}}|\widetilde{e}_k\widetilde{e}_l|
\leq\left(\sum_{k\in U_v}|\widetilde{e}_k|\right)^2
\leq N_v\sum_{k\in U_v}\widetilde{e}_k^2.
$$
Multiplying by $\mathds{1}_{\cE_v}$ and taking expectations, \ref{CF3} and \ref{M2a} give
\begin{align*}
&n_v\E_p\left[\left\{\widehat{\mu}_{\mathrm{ma}}(\widetilde{m},\pi)-\widehat{\mu}_{\mathrm{ma}}(\widetilde{m},\pi_{|\cA})-B_\cA\right\}^2\mathds{1}_{\cE_v}\right]\\
&\quad\leq C\left(\frac{1}{N_v}\sum_{k\in U_v}\widetilde e_k^2\right)
\E_p\left[\max_{k\in U_v}(\pi_{k|\cA}-\pi_k)^2\right]
=\mathcal{O}(n_v^{-1})=o(1).
\end{align*}
Since $\P(\cE_v^c)\xrightarrow[v \to \infty]{}0$ by \ref{CF2a}, Markov's inequality gives
$$
\sqrt{n_v}\left\{\widehat{\mu}_{\mathrm{ma}}(\widetilde{m},\pi)-\widehat{\mu}_{\mathrm{ma}}(\widetilde{m},\pi_{|\cA})-B_\cA\right\}=o_\P(1).
$$
The result follows from Theorems~\ref{th:equivalent1} and \ref{th:equivalent2}.

\subsection{Proof of Theorem \ref{Thm:reminder}} \label{ProofThm:reminder}

We first compare the two oracle variances by conditioning on $\cA$. By \ref{D2} and \ref{CF2c},
$$
\E_p\left[\widehat{\mu}_{\mathrm{ma}}(\widetilde{m},\pi)\mid\cA\right]=\mu+B_\cA,
\qquad
\E_p\left[\widehat{\mu}_{\mathrm{ma}}(\widetilde{m},\pi_{|\cA})\mid\cA\right]=\mu.
$$
The tower property gives
$$\E_p(B_\cA)=\frac{1}{N_v}\sum_{k\in U_v}\widetilde e_k
\left\{\frac{\E_p[\E_p(I_k\mid\cA)]}{\pi_k}-1\right\}
=\frac{1}{N_v}\sum_{k\in U_v}\widetilde e_k
\left(\frac{\pi_k}{\pi_k}-1\right)=0.$$
The law of total variance therefore gives
\begin{equation}\label{eq:condVar1}
\V_p\left\{\widehat{\mu}_{\mathrm{ma}}(\widetilde{m},\pi)\right\}
=\E_p\left[\dfrac{1}{N_v^2}\sum_{k,l\in U_v}\Delta_{kl|\cA}
\dfrac{\widetilde{e}_k\widetilde{e}_l}{\pi_k\pi_l}\right]+\E_p(B_\cA^2)
\end{equation}
and
\begin{equation}\label{eq:condVar2}
\V_p\left\{\widehat{\mu}_{\mathrm{ma}}(\widetilde{m},\pi_{|\cA})\right\}
=\E_p\left[\dfrac{1}{N_v^2}\sum_{k,l\in U_v}\Delta_{kl|\cA}
\dfrac{\widetilde{e}_k\widetilde{e}_l}{\pi_{k|\cA}\pi_{l|\cA}}\right].
\end{equation}
Subtracting yields
\begin{equation}\label{dec1}
\V_p\left\{\widehat{\mu}_{\mathrm{ma}}(\widetilde{m},\pi)\right\}
-\V_p\left\{\widehat{\mu}_{\mathrm{ma}}(\widetilde{m},\pi_{|\cA})\right\}
=\E_p(B_\cA^2)+R_v,
\end{equation}
where
$$
R_v:=\E_p\left[\dfrac{1}{N_v^2}\sum_{k,l\in U_v}\Delta_{kl|\cA}\widetilde{e}_k\widetilde{e}_l
\left(\dfrac{1}{\pi_k\pi_l}-\dfrac{1}{\pi_{k|\cA}\pi_{l|\cA}}\right)\right].
$$

We next bound $R_v$. Write
$$
\pi_{k|\cA}\pi_{l|\cA}-\pi_k\pi_l
=(\pi_{k|\cA}-\pi_k)\pi_{l|\cA}
+\pi_k(\pi_{l|\cA}-\pi_l).
$$
Since all inclusion probabilities are at most one, \ref{D2} and \ref{CF2c} give
$$
\left|\dfrac{1}{\pi_k\pi_l}-\dfrac{1}{\pi_{k|\cA}\pi_{l|\cA}}\right|
\leq\dfrac{|\pi_{k|\cA}-\pi_k|+|\pi_{l|\cA}-\pi_l|}{\lambda^2\lambda_c^2}.
$$
Using $\Delta_{kk|\cA}\leq1/4$, \ref{CF4}, and the Cauchy--Schwarz inequality as in the preceding proof, we get
\begin{align*}
n_v|R_v|
&\leq\frac{2}{\lambda^2\lambda_c^2}
\E_p\left[\max_{k\in U_v}|\pi_{k|\cA}-\pi_k|\right]\times\left[
\frac{n_v}{4N_v^2}\sum_{k\in U_v}\widetilde e_k^2
+\frac{C_{\Delta 2}}{N_v^2}\sum_{k\in U_v}\sum_{\substack{l\in U_v\\l\neq k}}|\widetilde e_k\widetilde e_l|
\right]\\
&\leq C\left(\dfrac{1}{N_v}\sum_{k\in U_v}\widetilde{e}_k^2\right)
\E_p\left[\max_{k\in U_v}|\pi_{k|\cA}-\pi_k|\right],
\end{align*}
for a constant $C$ independent of $v$. By \ref{CF3},
$$
\E_p\left[\max_{k\in U_v}|\pi_{k|\cA}-\pi_k|\right]
\leq\left\{\E_p\left[\max_{k\in U_v}(\pi_{k|\cA}-\pi_k)^2\right]\right\}^{1/2}
=\mathcal{O}(n_v^{-1/2}).
$$
Thus, \ref{M2a} gives $n_vR_v=o(1)$. The leading term $n_v\E_p(B_\cA^2)$ is bounded, since
$$
n_v\E_p(B_\cA^2)
\leq\dfrac{n_v}{\lambda^2}\left(\dfrac{1}{N_v}\sum_{k\in U_v}\widetilde{e}_k^2\right)
\E_p\left[\max_{k\in U_v}(\pi_{k|\cA}-\pi_k)^2\right]
=\mathcal{O}(1).
$$

It remains to relate $B_\cA$ to the difference between the estimators. Under \ref{CF2c}, the second-moment bound in the preceding proof holds on the whole sample space. Hence, by \ref{CF3} and \ref{M2a},
\begin{align*}
&n_v\E_p\left[\left\{\widehat{\mu}_{\mathrm{ma}}(\widetilde{m},\pi)-\widehat{\mu}_{\mathrm{ma}}(\widetilde{m},\pi_{|\cA})-B_\cA\right\}^2\right]\\
&\quad\leq C\left(\frac{1}{N_v}\sum_{k\in U_v}\widetilde e_k^2\right)
\E_p\left[\max_{k\in U_v}(\pi_{k|\cA}-\pi_k)^2\right]
=\mathcal{O}(n_v^{-1})=o(1).
\end{align*}
The quantity inside braces has conditional mean zero given $\cA$. Since $B_\cA$ is determined by $\cA$, the tower property gives
\begin{align*}
&\E_p\left[B_\cA\left\{\widehat{\mu}_{\mathrm{ma}}(\widetilde m,\pi)
-\widehat{\mu}_{\mathrm{ma}}(\widetilde m,\pi_{|\cA})-B_\cA\right\}\right]\\
&\quad=\E_p\left[B_\cA\E_p\left[
\widehat{\mu}_{\mathrm{ma}}(\widetilde m,\pi)
-\widehat{\mu}_{\mathrm{ma}}(\widetilde m,\pi_{|\cA})-B_\cA\mid\cA\right]\right]=0.
\end{align*}
Together with \eqref{dec1}, the following equality proves the result:
\begin{align*}
&n_v\E_p\left[\left\{\widehat{\mu}_{\mathrm{ma}}(\widetilde{m},\pi)-\widehat{\mu}_{\mathrm{ma}}(\widetilde{m},\pi_{|\cA})\right\}^2\right]\\
&\quad=n_v\E_p(B_\cA^2)
+n_v\E_p\left[\left\{\widehat{\mu}_{\mathrm{ma}}(\widetilde{m},\pi)-\widehat{\mu}_{\mathrm{ma}}(\widetilde{m},\pi_{|\cA})-B_\cA\right\}^2\right]\\
&\quad=n_v\E_p(B_\cA^2)+o(1).
\end{align*}

\subsection{Failure of first-order equivalence}
\label{supp:nonequivalence}
Let $p$ be SRSWOR of size $n_v$, with $n_v/N_v\xrightarrow[v\to\infty]{}\pi_\star\in(0,1)$. Assume that $N_v$ is even and fix $M=2$ population folds of sizes $N_1=N_2=N_v/2$ before sampling. With $\cA=(\boldsymbol{n},\boldsymbol{\delta})$, we get
$$\pi_{k|\cA}=\dfrac{2n_j}{N_v},\qquad k\in U_j,$$
while $\pi_k=n_v/N_v$. Suppose that $\widehat{m}^{(-k)}=\widetilde m=0$ for every $k$, and that $y_k=1$ on $U_1$ and $y_k=-1$ on $U_2$, so that $\mu=0$. Then,
$$\widehat{\mu}_{cf}(\widehat m,\pi)=\dfrac{2n_1-n_v}{n_v},\qquad
\widehat{\mu}_{cf}(\widehat m,\pi_{|\cA})=0\quad\text{if }n_1n_2>0.$$
The previous event has probability tending to one. Indeed, for $n_v\leq N_v/2$,
$$\P(n_1n_2=0)
=2\frac{\binom{N_v/2}{n_v}}{\binom{N_v}{n_v}}
=2\prod_{r=0}^{n_v-1}\frac{N_v/2-r}{N_v-r}
\leq2^{1-n_v},$$
and this probability is zero if $n_v>N_v/2$.

Conditional on the fold counts, the samples in the two folds are independent SRSWOR samples. The count $n_1$ is hypergeometric, with
$$\E_p(n_1)=\frac{n_v}{2},\qquad
\V_p(n_1)=\frac{n_v(N_v-n_v)}{4(N_v-1)}.$$
For every $\epsilon>0$, Chebyshev's inequality gives
$$\P\left(\left|\frac{n_1}{n_v}-\frac12\right|>\epsilon\right)
\leq\frac{N_v-n_v}{4n_v(N_v-1)\epsilon^2}
\xrightarrow[v\to\infty]{}0.$$
Since $n_2=n_v-n_1$, it follows that $2n_j/N_v\xrightarrow[v\to\infty]{\P}\pi_\star$ for $j=1,2$. The second-order conditional inclusion probabilities are
$$\frac{n_j(n_j-1)}{(N_v/2)(N_v/2-1)}
\quad\text{within fold }j,\qquad
\frac{4n_1n_2}{N_v^2}
\quad\text{across folds}.$$
Both converge in probability to $\pi_\star^2$, giving \ref{CF2a}. Moreover,
\begin{align*}
\E_p\left[\max_{k\in U_v}(\pi_{k|\cA}-\pi_k)^2\right]
&=\frac{4}{N_v^2}\E_p\left[\left(n_1-\frac{n_v}{2}\right)^2\right]\\
&=\frac{4}{N_v^2}\V_p(n_1)
=\frac{n_v(N_v-n_v)}{N_v^2(N_v-1)}
=\mathcal{O}(n_v^{-1}),
\end{align*}
which verifies \ref{CF3}. For distinct $k,l\in U_j$,
$$\Delta_{kl|\cA}
=\frac{n_j(n_j-1)}{(N_v/2)(N_v/2-1)}-\frac{4n_j^2}{N_v^2}
=-\frac{\pi_{k|\cA}(1-\pi_{k|\cA})}{N_v/2-1}.$$
Across folds, the conditional covariance is zero. Therefore,
$$n_v\max_{k\neq l\in U_v}|\Delta_{kl|\cA}|
\leq\frac{n_v}{4(N_v/2-1)}=\mathcal{O}(1),$$
which verifies \ref{CF4}.

Here $B_\cA=(2n_1-n_v)/n_v$, so
$$\sqrt{n_v}B_\cA
=\sqrt{\frac{N_v-n_v}{N_v-1}} 
\frac{n_1-n_v/2}{\sqrt{\V_p(n_1)}}.$$
The hypergeometric central limit theorem \citep{lahiri2006sub} and $(N_v-n_v)/(N_v-1)\xrightarrow[v\to\infty]{}1-\pi_\star$ give
$$\sqrt{n_v}B_\cA\xrightarrow[v\to\infty]{\cL}\cN(0,1-\pi_\star).$$
Thus, first-order equivalence fails. The fold residual means differ.

\subsection{First-order equivalence under balanced fold residuals}
\label{supp:balanced-residuals}
Suppose that the assumptions of Theorem \ref{Thm:reminder} hold. Consider SRSWOR with $n_v/N_v\xrightarrow[v\to\infty]{}\pi_\star\in(0,1)$, fix the population folds before sampling, and let $\cA=(\boldsymbol{n},\boldsymbol{\delta})$. Define
$$\overline e=\dfrac{1}{N_v}\sum_{k\in U_v}\widetilde e_k,\qquad
\overline e_j=\dfrac{1}{N_{j,v}}\sum_{k\in U_{j,v}}\widetilde e_k,\quad j=1,\ldots,M.$$
Then,
$$\lim_{v\to\infty}n_v\left\{\V_p\left(\widehat{\mu}_{\mathrm{ma}}(\widetilde m,\pi)\right)
-\V_p\left(\widehat{\mu}_{\mathrm{ma}}(\widetilde m,\pi_{|\cA})\right)\right\}=0$$
if and only if
$$\lim_{v\to\infty}|\overline e_j-\overline e|=0,\qquad j=1,\ldots,M.$$

\begin{proof}
By \eqref{dec1} and $n_vR_v=o(1)$, it is enough to determine when $n_v\E_p(B_\cA^2)$ tends to zero.
Since $\pi_{k|\cA}=n_j/N_{j,v}$ for $k\in U_{j,v}$, the conditional bias is
$$B_\cA=\sum_{j=1}^M\left(\dfrac{n_j}{n_v}-\dfrac{N_{j,v}}{N_v}\right)\overline e_j
=\sum_{j=1}^M\left(\dfrac{n_j}{n_v}-\dfrac{N_{j,v}}{N_v}\right)(\overline e_j-\overline e).$$
To compute its second moment, write $n_j=\sum_{k\in U_{j,v}}I_k$. Under SRSWOR,
$$\Delta_{kk}=\frac{n_v(N_v-n_v)}{N_v^2},\qquad
\Delta_{kq}=-\frac{n_v(N_v-n_v)}{N_v^2(N_v-1)},\quad k\neq q.$$
Summing these covariances over $U_{j,v}$ and $U_{l,v}$ gives
\begin{align*}
\operatorname{Cov}_p(n_j,n_l)
&=\sum_{k\in U_{j,v}}\sum_{q\in U_{l,v}}\Delta_{kq}\\
&=n_v\frac{N_v-n_v}{N_v-1}
\left\{\frac{N_{j,v}}{N_v}\mathds{1}_{\{j=l\}}
-\frac{N_{j,v}N_{l,v}}{N_v^2}\right\}.
\end{align*}
Also, $\E_p(n_j)=n_vN_{j,v}/N_v$, so $\E_p(B_\cA)=0$. Hence,
\begin{align*}
n_v\E_p(B_\cA^2)
&=\frac{1}{n_v}\sum_{j=1}^M\sum_{l=1}^M
\operatorname{Cov}_p(n_j,n_l)
(\overline e_j-\overline e)(\overline e_l-\overline e)\\
&=\frac{N_v-n_v}{N_v-1}\left[
\sum_{j=1}^M\frac{N_{j,v}}{N_v}(\overline e_j-\overline e)^2
-\left\{\sum_{j=1}^M\frac{N_{j,v}}{N_v}(\overline e_j-\overline e)\right\}^2
\right]\\
&=\frac{N_v-n_v}{N_v-1}\sum_{j=1}^M\frac{N_{j,v}}{N_v}(\overline e_j-\overline e)^2,
\end{align*}
where the last equality uses $\sum_{j=1}^M(N_{j,v}/N_v)\overline e_j=\overline e$.
Combining this identity with \eqref{dec1} and $n_vR_v=o(1)$ yields
\begin{align*}
&n_v\left\{\V_p\left(\widehat{\mu}_{\mathrm{ma}}(\widetilde m,\pi)\right)
-\V_p\left(\widehat{\mu}_{\mathrm{ma}}(\widetilde m,\pi_{|\cA})\right)\right\}\\
&\qquad=\dfrac{N_v-n_v}{N_v-1}\sum_{j=1}^M\dfrac{N_{j,v}}{N_v}(\overline e_j-\overline e)^2+o(1).
\end{align*}
The factoring term tends to $1-\pi_\star>0$. Moreover, \ref{CF2c} requires $N_{j,v}>N_v-n_v$ for every fold, since otherwise that fold could contain no sampled unit. Thus, the fold proportions are bounded away from zero. As $M$ is fixed, the result follows.
\end{proof}

{
\begingroup
\setstretch{1.1}
\setlength{\bibsep}{5pt}
\setlength{\parskip}{0pt}
\filterreferences
\repeatbibliography
\endgroup
}
\endgroup
\end{document}